\newcommand{\myproof}[1]{
\ifthenelse{\boolean{withproofs}}{#1}{}}
\newcommand{\withproofs}[1]{
\ifthenelse{\boolean{withproofs}}{#1}{}}
\newcommand{\withoutproofs}[1]{
\ifthenelse{\boolean{withproofs}}{}{#1}}
\newcommand{\tm}{t}
\newcommand{\tmtwo}{u}
\newcommand{\var}{x}
\newcommand{\Rew}[1]{\rightarrow_{#1}}
\renewcommand{\to}{\Rew{}}
\newcommand{\tob}{\Rew{\beta}}
\newcommand{\ctxholep}[1]{[#1]}
\newcommand{\ctxhole}{\ctxholep{\cdot}}
\newcommand{\ctx}{C}
\newcommand{\ctxtwo}{D}
\newcommand{\ctxthree}{E}
\newcommand{\ctxp}[1]{\ctx\ctxholep{#1}}
\newcommand{\nbvctxtwo}[1]{\nbvctxtwo{#1}}
\newcommand{\grameq}{::=}
\newcommand{\isub}[2]{\{#1/#2\}}
\newcommand{\esub}[2]{[#1/#2]}
\newcommand{\llbrace}{\{ \kern -0.27em \vert}
\newcommand{\rrbrace}{\vert \kern -0.27em \}}
\newcommand{\red}[1]{{\color{red} {#1}}}
\newcommand{\blue}[1]{{\color{blue} {#1}}}
\newcommand{\ignore}[1]{}
\newcommand{\myinput}[1]{\ifthenelse{\boolean{withimages}}{\input{#1}}{}}
\newcounter{numberone}
\newcounter{numbertwo}
\renewcommand{\ctxholep}[1]{\langle #1\rangle}
\newcommand{\ctxtwop}[1]{\ctxtwo\ctxholep{#1}}
\newcommand{\ctxthreep}[1]{\ctxthree\ctxholep{#1}}
\renewcommand{\esub}[2]{[#1{\shortleftarrow}#2]}
\renewcommand{\isub}[2]{\{#1{\shortleftarrow}#2\}}
\newcommand{\resm}{\psym}
\renewcommand{\resm}{\bullet}
\newcommand{\lpos}{p}
\renewcommand{\lpos}{l}
\newcommand{\upp}{\blue{\uparrow}}
\newcommand{\downp}{\red{\downarrow}}
\newcommand{\tape}{T}
\newcommand{\pol}{d}
\newcommand{\cons}{{\cdot}}
\newcommand{\IAM}{IAM\xspace}
\newcommand{\stempty}{\epsilon}
\newcommand{\la}[1]{\lambda #1.}
\newcommand{\resmtwo}{\circ}
\newcommand\mydots{\hbox to .6em{.\hss.}}
\newcommand{\ans}{\mathsf{a}}
\newcommand{\yes}{1}
\newcommand{\no}{0}
\renewcommand{\phi}{\varphi}
\renewcommand{\epsilon}{\varepsilon}
\newcommand{\Nat}{\mathbb{N}}
\newcommand{\ttletin}[2]{\mathtt{let}\ #1\ \mathtt{in}\ #2}
\newcommand\bnfeq{\mathrel{::=}}
\newcommand\bnfalt{\mathrel{|}}
\newcommand\Tree{\mathrm{Tree}}
\newcommand\rk{\mathrm{rk}}
\newcommand\titocecilia{{\fontencoding{T5}\selectfont Nguyễn} and Pradic\xspace}
\newcommand{\downred}[1]{{\color{red}\underline{#1}}}
\newcommand{\upblue}[1]{{\color{blue}\overline{#1}}}
\newcommand{\paiam}{\mathrm{PAIAM}}
\newcommand{\apaiam}{\mathrm{APAIAM}}
\begin{document}

\title{Slightly Non-Linear Higher-Order Tree Transducers$^*$}
\titlecomment{$^*$This is the extended version of the paper presented at STACS 2025 with the same title~\cite{stacs}.}	

\author[{\fontencoding{T5}\selectfont L.~T.~D.~Nguyễn}]{{\fontencoding{T5}\selectfont Lê Thành Dũng (Tito) Nguyễn}\lmcsorcid{0000-0002-6900-5577}}[a]
\author[G.~Vanoni]{Gabriele Vanoni\lmcsorcid{0000-0001-8762-8674}}[b]

\address{Laboratoire d'Informatique et des Systèmes, CNRS \& Aix-Marseille University, France}	
\address{IRIF, Université Paris Cité, France}	





\begin{abstract}
  \noindent   We investigate the tree-to-tree functions computed by \enquote{affine
    $\lambda$-transducers}: tree automata whose memory consists of an
  affine $\lambda$-term instead of a finite state. They can be seen as
  variations on Gallot, Lemay and Salvati's Linear High-Order Deterministic Tree
  Transducers.

  When the memory is almost purely affine (\textit{à la} Kanazawa), we show that these machines can be
  translated to tree-walking transducers (and with a
  purely affine memory, we get a reversible tree-walking transducer).
  This leads to a proof of an
  inexpressivity conjecture of \titocecilia
  on \enquote{implicit automata} in an affine $\lambda$-calculus. We also prove
  that a more powerful variant, extended with preprocessing by an MSO relabeling and allowing a
  limited amount of non-linearity, is equivalent in expressive power to
  Engelfriet, Hoogeboom and Samwel's invisible pebble tree transducers.

  The key technical tool in our proofs is the Interaction Abstract Machine (IAM), an
  operational avatar of Girard's geometry of interaction---the latter is a family of semantics of
  linear logic. We work with ad-hoc specializations to $\lambda$-terms of low
  exponential depth of a tree-generating version of the IAM.
\end{abstract}

\maketitle

\section{Introduction}\label{sec:intro}

This paper investigates the expressive power of various kinds of \emph{tree
  transducers}: automata computing tree-to-tree functions. This is a topic with
a long history, and many equivalences between machine models are already known.
For instance, the class of monadic second-order transductions\footnote{\label{ftn:msot} We
  consider only tree-to-tree transductions here, but there is a rich theory of
  MSOTs between graphs, or between arbitrary relational structures,
  cf.~\cite{courcellebook}. See also~\cite{bojanblog} for a history of MSO
  transductions. In the well-studied special case of string functions, MSOTs are
  called \enquote{regular functions}, cf.~\cite{MuschollPuppis}.} (MSOTs), whose
name refers to a definition by logic, is also captured by various tree
transducer models (e.g.~\cite{EngelfrietIM21}) or by a system of primitive
functions and combinators~\cite{FOTree}. This class is closed under composition,
and includes functions such as:
\begin{align*}
  \text{mirror and add a $d$ above each $b$:} &\quad a(b(c),c) \mapsto a(c,d(b(c)))\\
  \text{relabel each $c$ by parity of its depth:} &\quad a(b(c),c) \mapsto a(b(0),1)\\
  \text{count number of non-$a$ nodes in unary:} &\quad a(b(c),c) \mapsto S(S(S(0)))
\end{align*}
Some machines for MSOTs, such as the restricted macro tree
transducers\footnote{\label{ftn:macro-alt}Which are very similar to some other
  transducer models for MSOTs: the bottom-up ranked tree transducers
  of~\cite[Sections~3.7--3.8]{BRTT} and the register tree transducers
  of~\cite[Section~4]{FOTree}.} of~\cite{MacroMSO}, involve
\emph{tree contexts} as data structures used in their computation. A tree
context is a tree with \enquote{holes} at some leaves -- for example
$b(a(c,a(\ctxhole{},c)))$ -- and these holes are meant to be substituted
by other trees. Thus, it represents a simple function taking trees to trees:
these transducers use functions as data. From this point of view, in the spirit
of functional programming, it also makes sense to consider transducers
manipulating \emph{higher-order data}, that is, functions that may take
functions as arguments (a function of order $k+1$ takes arguments of order at
most $k$). This idea goes back to the 1980s~\cite{EngelfrietHighLevel} (cf.\
Remark~\ref{rem:hltt-mtt}), and a recent variant by Gallot, Lemay and
Salvati~\cite{LambdaTransducer,gallotPhD} computes exactly the tree-to-tree
MSOTs.

\paragraph{Linear/Affine $\lambda$-Calculus for MSOTs.}

Gallot et al.~\cite{LambdaTransducer,gallotPhD} use the \emph{$\lambda$-calculus} to
represent higher-order data, and in order to control the expressive power,
they impose a \emph{linearity} restriction on $\lambda$-terms. In programming
language theory, a \emph{linear} function uses its argument \emph{exactly once},
while an \emph{affine} function must use its argument \emph{at most once} -- affineness
thus mirrors the \enquote{single-use restrictions} that appear in various tree
transducer models~\cite{AttributedMSO,MacroMSO,BRTT,FOTree}.

An independent but similar characterization of MSOTs appeared around the same
time in \titocecilia's work on
\enquote{implicit automata}~\cite{iatlc1,titoPhD}. In an analogous fashion to
how the untyped $\lambda$-calculus can be used as a Turing-complete programming
language, they consider typed $\lambda$-terms seen as standalone
programs. For a well-chosen type system -- which enforces a linearity restriction --
and input/output convention, it turns
out that the functions computed by these $\lambda$-terms are exactly the MSO tree
transductions~\cite[Theorem~1.2.3]{titoPhD}.


In fact, the proof of \cite[Theorem~1.2.3]{titoPhD} introduces, as an intermediate step, a
machine model -- the \enquote{single-state $\mathfrak{L}$-BRTTs}
of~\cite[Section~6.4]{titoPhD} -- that uses $\lambda$-terms as memory to
capture the class of MSOTs. Clearly, this device is very close to the
tree transducer model of Gallot et~al.
Yet there are also important differences between the two: they can be understood as two distinct ways of extending what we call \enquote{purely linear $\lambda$-transducers}, as we
will explain. Actually, to avoid some uninteresting pathologies (cf.~\cite[Theorem~7.0.2]{titoPhD}), we shall prefer to work with affine $\lambda$-transducers instead.
\begin{exa}\label{ex:lambda-count}
  Purely affine $\lambda$-transducers will be properly defined later, but for
  now, for the sake of concreteness, let us exhibit one such transducer.
  It takes input trees with binary $a$-labeled nodes, unary $b$-labeled nodes
  and $c$-labeled leaves, and is specified by the $\lambda$-terms:
  \[ t_a = \lambda \ell.\; \lambda r.\; \lambda x.\; \ell\; (r\; x) \qquad t_b = \lambda f.\; \lambda x.\; S\;(f\;x) \qquad t_c = S \qquad u = \lambda f.\; f\; 0 \]
  where $S$ and $0$ are constants from the output alphabet. The input
  $a(b(c),c)$ is then mapped to $u\; (t_a\;
  (t_b\;t_c)\; t_c)$ which evaluates to the normal form $S\; (S\; (S\; 0))$. So this
  $\lambda$-transducer computes the aforementioned \enquote{number of non-$a$
    nodes written in unary} function. (It actually is purely linear:
  each bound variable has exactly one occurrence.)
\end{exa}

The issue with this transducer model is its lack of expressiveness, as shown by
the following consequence of our results, which settles an equivalent\footnote{A routine
syntactic analysis akin to~\cite[Lemma~3.6]{iatlc1} shows that it is indeed
equivalent. Note that the results of~\cite{iatlc1} are indeed about affine,
not linear, $\lambda$-calculi.} conjecture
on \enquote{implicit automata} that had been put forth by \titocecilia
in~\cite[\S5.3]{iatlc1}. (We will come back in Remark~\ref{rem:additives} to how they overcome
this to characterize tree-to-tree MSO transductions.)
\begin{cor}[of Theorem~\ref{thm:main-affine} below]\label{cor:inexpress} There
  exists a regular tree\footnote{This phenomenon depends on using trees as
    inputs. Over strings, purely affine $\lambda$-terms can be used to recognize
    any regular language~\cite[Theorem~5.1]{iatlc1}.} language whose indicator
  function cannot be computed by any purely affine $\lambda$-transducer.
\end{cor}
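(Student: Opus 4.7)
The plan is to combine Theorem~\ref{thm:main-affine} with the classical theorem of Bojańczyk and Colcombet that tree-walking automata (TWAs) do not recognize every regular tree language. So the first step is to fix a regular tree language $L$ that is not recognizable by any TWA; such languages exist over any sufficiently rich ranked alphabet.

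Next, I would argue by contradiction. Suppose some purely affine $\lambda$-transducer computes the indicator function $\chi_L$, understood as the map sending each $t \in L$ to a fixed output tree $t_1$ and each $t \notin L$ to a different fixed tree $t_0$. By Theorem~\ref{thm:main-affine}, $\chi_L$ is then realized by some (reversible) tree-walking transducer $T$. From $T$ I extract a TWA $A$ for $L$ as follows: on a given input, $A$ simulates $T$ step by step, using its finite control to store $T$'s state (while the head position of $A$ plays the role of $T$'s head position). $A$ does not store the output tree produced by $T$, but instead maintains a subset $S \subseteq \{t_0, t_1\}$ of still-plausible outputs, initially $\{t_0, t_1\}$: whenever $T$ emits an output symbol, $A$ removes from $S$ any tree inconsistent with what has been emitted so far. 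Since $t_0 \neq t_1$, the set $S$ eventually becomes a singleton, at which point $A$ halts and accepts iff $S = \{t_1\}$. This yields a TWA recognizing $L$, contradicting the choice of $L$.

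The only point that requires care is verifying that this simulation is a well-defined TWA computation. Termination follows from determinism of $T$ together with the fact that it halts on every input with a finite output tree, so the first position where $t_0$ and $t_1$ differ is reached after finitely many simulated steps of $T$; reversibility, if needed, makes this even more transparent. The memory used by $A$ is bounded (one of finitely many $T$-states together with one of four values of $S \subseteq \{t_0, t_1\}$), so the construction truly is a TWA. This step, while conceptually routine, is where whatever work there is concentrates; the rest of Corollary~\ref{cor:inexpress} then follows immediately by composing Theorem~\ref{thm:main-affine} with the Bojańczyk--Colcombet inexpressivity result.
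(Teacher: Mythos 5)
Your argument is correct and is exactly the route the paper takes: Theorem~\ref{thm:main-affine} combined with the Bojańczyk--Colcombet result~\cite{bojanczyk2008tree}, with the extraction of a tree-walking automaton from the tree-walking transducer left implicit in the paper and spelled out by you. The only point to watch in your simulation is that the paper's tree-walking transducers generate their output by forking into independent concurrent branches rather than emitting a linear stream of symbols, but for an indicator function whose two possible outputs are single-node trees (or, for arbitrary fixed $t_0 \neq t_1$, by hard-coding a path to a distinguishing position and following only the corresponding branch) this causes no difficulty.
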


Gallot et~al.\ avoid this limitation by extending the transducer model with common automata-theoretic
features. They then show~\cite[Chapter~7]{gallotPhD} that the usual linear $\lambda$-terms
lead to the previously mentioned characterization of MSOTs, while relaxing the
linearity condition to \enquote{almost linearity} yields the larger class of
\enquote{MSOTs with sharing}\footnote{An MSOT-S can be
  decomposed as an MSO tree-to-graph transduction (\Cref{ftn:msot}), that
  produces a rooted directed acyclic graph (DAG), followed by the unfolding of this
  graph into a tree. The rooted DAG is a compressed representation of the output tree
  with some \enquote{sharing} of subtrees.} (MSOT-S).
This notion of \emph{almost linear $\lambda$-term} was introduced by
Kanazawa~\cite{KanazawaJournal}, who also studied the almost affine case
in~\cite{KanazawaJournal,AlmostAffine}. In fact, the aforementioned
characterization of MSOT-S was first claimed by Kanazawa in a talk more than 15
years ago~\cite{KanazawaMSO}, although he never published a proof to the best of our knowledge.

\paragraph{Flavors of Affine Types.}

We shall work with an affine type system that allows some data to be marked as duplicable via the exponential modality `$!$'. The grammar of types is thus $A, B \bnfeq o \bnfalt A \multimap B \bnfalt \oc A$ -- the connective $\multimap$ is the affine function arrow.
The point is to allow us to restrict duplication by means of syntactic constraints on `!'.
\begin{defi}\label{def:almost-purely-affine}
  A \emph{purely affine} type does not contain any `!', i.e.\ is built from $o$ and $\multimap$.
  A type is said to be \emph{almost purely affine} when the only occurrences of `!' are applied to $o$. In particular, every purely affine type is almost purely affine.
\end{defi}
For example, $(o \multimap \oc o) \multimap \oc o$ is almost purely affine but not $\oc(o \multimap o)$.
The latter definition is motivated by the aforementioned almost affine
$\lambda$-terms~\cite{AlmostAffine}, which allow the
variables of base type $o$ to be used multiple times, e.g.\ $\lambda y.\;
\lambda f.\; \lambda g.\; (\lambda x.\; f\; x\; x)\; (g\;y\;y)$ is almost affine
(and even almost linear) for $x : o$ and $y : o$. Inconveniently, almost
linear/affine $\lambda$-terms are not closed under $\beta$-reduction, as
remarked in~\cite[\S4]{AlmostAffine}. For instance, the previous term reduces to
$\lambda y.\; \lambda f.\; \lambda g.\; f\; (g\;y\;y)\; (g\;y\;y)$ which uses $g
: o \multimap o \multimap o$ twice. The `$\oc$' modality provides a convenient
way to realize a similar idea while avoiding this drawback.

Each $\lambda$-transducer has in its definition a \emph{memory type}, which is for instance $o \multimap o$ for Example~\ref{ex:lambda-count}. We extend Definition~\ref{def:almost-purely-affine} to $\lambda$-transducers: a $\lambda$-transducer is purely (resp.\ almost purely) affine when its memory type is purely (resp.\ almost purely) affine.

\paragraph{Contributions.}

First, we study (almost) purely affine $\lambda$-transducers, relating them to yet another machine model: \emph{tree-walking tree transducers}, see e.g.~\cite{EngelfrietIM21}. Those are devices with a finite-state control and a reading head moving around the nodes of the input tree; in one step, the head can move to the parent or one of the children of the current node.
\begin{thm}\label{thm:main-affine}
  Writing $\subseteq$ for a comparison in expressive power, we have:
  \begin{align*}
    \text{purely affine $\lambda$-transducer} &\subseteq \text{reversible tree-walking transducer}\\
    \text{almost purely affine $\lambda$-transducer} &\subseteq \text{tree-walking transducer}
  \end{align*}
\end{thm}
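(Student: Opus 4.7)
The plan is to establish both inclusions by routing through the Interaction Abstract Machine (IAM), which the introduction advertises as the key tool. Given an input tree $t$ and a $\lambda$-transducer $\mathcal{T}$ with node-terms $(t_a)_a$ and output producer $u$, I would first identify the intermediate term $M_t := u\; \overline{t}$, where $\overline{t}$ is obtained by replacing each node $a$ of $t$ by the corresponding $t_a$ and composing along the tree structure, as in Example~\ref{ex:lambda-count}. The output of $\mathcal{T}$ on $t$ is by definition the normal form of $M_t$, so everything reduces to implementing the evaluation of $M_t$ inside a tree-walking transducer over $t$.

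The first step is a structural decomposition: the syntax tree of $M_t$ canonically splits into (i) a copy of $t$ where each node is replaced by the syntax tree of the corresponding $t_a$, glued at the root to (ii) a bounded header gadget coming from $u$. Each gadget has size bounded independently of $t$, and the gadgets are glued to their neighbors at a small, finite set of ports. An IAM step either moves inside a single gadget (absorbable into finite control) or crosses a gadget boundary (which exactly matches a move of a tree-walking head from a node of $t$ to its parent or one of its children). This gives a direct simulation: the finite control of the tree-walking transducer records the current position and polarity data inside the current gadget, and the walk on $t$ is dictated by the between-gadget IAM transitions. Output symbols are produced by invoking the tree-generating variant of the IAM mentioned in the abstract: when the walking head visits a constructor of the output alphabet inside some $t_a$, the transducer emits that symbol.

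For the first inclusion (purely affine), the decisive property is that in a purely affine term every bound variable occurs at most once, so the IAM's wiring graph is a disjoint union of simple paths and its trajectory is deterministic both forwards and backwards. This reversibility of the IAM on purely linear/affine terms, which is classical in geometry of interaction, transfers to the simulation: every configuration of the resulting tree-walking transducer has at most one predecessor, so it is reversible. For the second inclusion (almost purely affine), the `$\oc$' modality appears only applied to $o$, so the only duplication concerns tokens carrying no higher-order data but merely polarity and label information. The IAM may therefore branch, but the branching is bounded and can be absorbed into the finite control of an ordinary (non-reversible) tree-walking transducer.

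The main obstacle I anticipate is handling the output correctly: the standard IAM is built to compute individual pointers into the normal form, not to emit a stream of symbols in the right order. The tree-generating variant must be designed (and its correctness established) so that output constructors are fired in exactly the left-to-right traversal order of the normal form of $M_t$, matching what a tree-walking transducer produces natively. A secondary but nontrivial task is to verify that the IAM's log/polarity data remains finite-state in the almost purely affine case despite the duplication permitted by `$\oc o$', which requires a careful analysis of what information the IAM actually needs to carry across a $\oc$-boundary when the content underneath is of base type $o$.
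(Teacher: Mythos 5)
Your overall strategy is exactly the paper's: evaluate $u\;\widehat\tau((t_a)_a)$ with a (tree-generating) IAM, observe that the syntax tree of this term is the input tree with each node inflated into a bounded gadget $t_a$ plus a header $u$, fold the within-gadget IAM data into the finite control, and let the between-gadget moves drive the walking head. However, two essential ingredients are missing, and without them the plan does not go through. First, the IAM token is not just ``position plus polarity'': it carries a tape $T\in\{\resm,\resmtwo\}^*$, a stack that is pushed and popped at every application and abstraction and that a priori grows without bound. Your finite-control absorption step silently assumes this stack is bounded, but you never say why. The paper's answer is a typing invariant: if the current subterm has type $A$ in a derivation of $v:o$, then $A\lightning T=o$, so $|T|$ is at most the height of $A$; this only yields a uniform bound after first \emph{normalizing} $u$ and the $t_a$ (Propositions~\ref{prop:norm-sr} and~\ref{prop:normalization-simplifies}), so that every subterm has an (almost) purely affine type of bounded height. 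You mention neither the tape bound nor the normalization preprocessing, and both are load-bearing.

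Second, your account of the almost purely affine case mischaracterizes the difficulty and then defers it. The forward IAM does not ``branch'': it stays deterministic. What breaks is \emph{injectivity} — the rule sending a let-bound occurrence $\downred{x}$ to the definiens $\downred{u}$ is many-to-one when $x$ occurs several times, so the dual rule, which would have to choose an occurrence of $x$ to return to from $\ttletin{\oc x = \upblue{u}}{t}$, cannot exist. The machine is only well-defined without a log because such configurations provably never arise: since $u$ must have type $\oc o$ by almost pure affineness, the typing invariant would force $T=\varepsilon$ there, contradicting the parity invariant that upward-moving tokens carry an odd number of $\resmtwo$'s (Proposition~\ref{prop:dir}). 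This is precisely the ``careful analysis of what the IAM needs to carry across a $\oc$-boundary'' that you flag as an open task; it is the crux of the second inclusion, not a secondary verification. Your other worry — emitting output in left-to-right order — is in fact a non-issue once both devices are phrased as orthogonal, hence confluent, tree-generating rewriting systems that spawn independent subprocesses for the children of an output node; what does still require proof is that the IAM's output equals the normal form of $u\;\widehat\tau$, which the paper establishes by an improvement (weak bisimulation with $\beta$-reduction) argument.
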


Corollary~\ref{cor:inexpress} then follows immediately from a result of Bojańczyk and
Colcombet~\cite{bojanczyk2008tree}: there exists a regular tree language not
recognized by any tree-walking automaton.
We also obtain characterizations of MSOTs and of MSOT-Ss, by preprocessing the
input with an \emph{MSO relabeling} -- a special kind of MSOT that can only
change the node labels in a tree, but keeps its structure as it is.
Morally, this preprocessing amounts to the same thing as the automata-theoretic features -- finite
states and regular look-ahead -- used in~\cite{LambdaTransducer,gallotPhD}.
\begin{thm}\label{thm:msots}
  Using $\equiv$ to denote an equivalence in expressive power, we have:
  \begin{align*}
    \text{purely affine $\lambda$-transducer} \circ\text{MSO relabeling} &\equiv \text{MSOT} \\
    \text{almost purely affine $\lambda$-transducer} \circ\text{MSO relabeling} &\equiv \text{MSOT-S}
  \end{align*}
\end{thm}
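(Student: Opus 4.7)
The proof of Theorem~\ref{thm:msots} will establish both containments in each equivalence. The plan is to combine Theorem~\ref{thm:main-affine} with classical facts for the $\subseteq$ direction, and to adapt Gallot--Lemay--Salvati's construction for the $\supseteq$ direction.

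For the $\subseteq$ inclusions: by Theorem~\ref{thm:main-affine}, any purely (resp.\ almost purely) affine $\lambda$-transducer can be simulated by a reversible (resp.\ general) tree-walking transducer. Composing with an MSO relabeling preprocessing step, I would invoke the known results that reversible tree-walking transductions belong to MSOT, and that tree-walking transductions in general belong to MSOT-S --- the multiple visits of the walking head to a single input node are reflected in the DAG-sharing interpretation of MSOT-S. Since both MSOT and MSOT-S are closed under left-composition with MSO relabelings (and MSO relabelings are themselves MSOTs), the claimed upper bounds follow.

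For the $\supseteq$ inclusions: I would adapt the construction of Gallot, Lemay and Salvati~\cite{LambdaTransducer,gallotPhD}, who showed that their Linear HODT transducers equipped with states and regular lookahead compute exactly the MSOTs. The key move is to replace their automata-theoretic features by an MSO relabeling preprocessing step: any regular property checked by states or lookahead can be precomputed and stored as a refined node label, leaving a purely $\lambda$-calculus computation for the $\lambda$-transducer. Since affineness relaxes linearity, this transports to affine $\lambda$-transducers. For MSOT-S, the analogous argument follows the announcement of Kanazawa~\cite{KanazawaMSO}: his almost linear $\lambda$-terms, whose duplication is confined to base type, are naturally generalized to the almost purely affine setting of Definition~\ref{def:almost-purely-affine}, and the limited duplication captures precisely the sharing structure of MSOT-S.

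The main obstacle, I expect, lies in the $\supseteq$ direction for MSOT-S: since Kanazawa's claim was never fully published, this step requires reconstructing the argument from scratch and checking that the passage from \emph{linear} to \emph{affine} preserves the characterization. A secondary technical point is making rigorous the intuitive equivalence between automata-theoretic features (states, regular lookahead) in the Gallot--Lemay--Salvati framework and an MSO relabeling preprocessing step, so that no expressive power is silently lost or introduced when transporting their result into the affine $\lambda$-transducer setting of this paper.
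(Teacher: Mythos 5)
Your overall architecture (upper bounds via Theorem~\ref{thm:main-affine} plus classical transducer theory, lower bounds by transporting Gallot--Lemay--Salvati) matches the paper's, but there is one genuine gap in the upper-bound half. For the purely affine case you propose to ``invoke the known result that reversible tree-walking transductions belong to MSOT.'' This is \emph{not} a known result: the paper itself, in its conclusion, explicitly lists ``reversible TWTs satisfy the single-use restriction and hence translate into MSO transductions'' as an expectation left for future work. Relying on it here would be circular with an open problem. The paper takes a different and elementary route: purely affine terms do not grow under $\beta$-reduction, so purely affine $\lambda$-transducers have linear growth; combined with Theorem~\ref{thm:main-affine} this places them among tree-walking transducers \emph{of linear growth}, a class known to sit inside MSOT (cf.~\cite[\S6.2]{EngelfrietIM21}). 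Reversibility plays no role in this inclusion. Your argument for the almost purely affine case (TWT $\subseteq$ MSOT-S, closure under precomposition with relabelings) is fine.

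On the lower bounds, your plan is right in spirit but underestimates where the actual work lies, and overestimates it elsewhere. You do not need to reconstruct Kanazawa's unpublished argument: the paper reduces both inclusions to the \emph{published} Theorem~3 of Gallot et al.~\cite{LambdaTransducer}, which already covers MSOT-S via almost linear terms. The real technical content is twofold. First, the GLS model carries a \emph{family} of memory types indexed by states, whereas a $\lambda$-transducer has a single memory type; collapsing the family into one type requires an encoding with $\iota_q$/$\mathtt{cast}_q$ conversion terms built from dummy inhabitants, and these terms are affine but \emph{not} linear --- this is precisely the point where working with affineness rather than linearity pays off, and it is the ``minor mismatch'' the theorem statement alludes to. Second, for MSOT-S one must translate Kanazawa-style almost linear/affine terms (which duplicate base-type variables directly and are not closed under $\beta$-reduction) into the paper's $\oc$-based almost purely affine calculus; this is done by an explicit type-and-term translation $A \mapsto A\{o:=\oc o\}$ that must be checked to commute with the tree encoding and with normalization. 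Your proposal gestures at ``limited duplication captures the sharing structure'' but does not identify either of these two concrete constructions.
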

This should be informally understood as a mere rephrasing
of the results of Gallot et al.\ and of Kanazawa, except with affine rather than linear $\lambda$-terms.
On a technical level, we derive our right-to-left inclusions from their results;
there turns out to be a minor mismatch, and working with affineness rather than
linearity proves convenient to overcome it. As for the left-to-right inclusions, we
get them as corollaries of Theorem~\ref{thm:main-affine}.
Finally, we give a genuinely new characterization of the class MSOT-S$^2$ of
functions that can be written as \emph{compositions of two MSO transductions with
sharing}.
\begin{defi}\label{def:almost-depth-1}
  A type is \emph{almost !-depth 1} when the only occurrences of `!' are
  applied to almost purely affine types (e.g.\ $\oc(\oc o \multimap o)$ is almost !-depth 1, but not $\oc\oc(o
  \multimap o)$).
\end{defi}
\begin{thm}\label{thm:main-depth-1}
  Almost !-depth 1 $\lambda$-transducer $\circ$ MSO relabeling $\equiv$ MSOT-S$^2$.
\end{thm}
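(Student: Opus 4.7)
The plan is to derive each inclusion separately, treating \Cref{thm:msots} as a black box and building on the IAM-based techniques developed for \Cref{thm:main-affine}.

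\emph{Backward direction} ($\supseteq$). Unfolding any $f \in \text{MSOT-S}^2$ as $g \circ h$ with $g, h \in \text{MSOT-S}$ and applying \Cref{thm:msots} to each factor, we obtain almost purely affine $\lambda$-transducers $g', h'$ and MSO relabelings $r_g, r_h$ with $g = g' \circ r_g$ and $h = h' \circ r_h$. It remains to collapse the alternation $g' \circ r_g \circ h' \circ r_h$ into a single almost !-depth 1 transducer preceded by one MSO relabeling. The intermediate relabeling $r_g$ merely renames node labels and can be absorbed into $h'$ by enriching its output alphabet, leaving $g' \circ \tilde h' \circ r_h$. The remaining step is the composition of the two almost purely affine transducers $g'$ and $\tilde h'$: since $g$ is an MSOT \emph{with sharing}, the outer transducer must treat the tree produced by $\tilde h'$ as a duplicable object, which is precisely what wrapping the almost purely affine memory type of $\tilde h'$ under a single outer `!' encodes. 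The resulting memory type is of almost !-depth 1 by \Cref{def:almost-depth-1}.

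\emph{Forward direction} ($\subseteq$). Consider an almost !-depth 1 $\lambda$-transducer preceded by an MSO relabeling. Its memory type contains occurrences of `!' only wrapping almost purely affine subtypes, so the IAM execution stratifies into two exponential depths. We decompose the whole run into two phases. The \emph{inner} phase, entered each time the IAM descends under a `!', is isomorphic to a run of an almost purely affine $\lambda$-transducer; by \Cref{thm:main-affine} combined with \Cref{thm:msots} its input/output behavior is an MSOT-S. The \emph{outer} phase is almost purely affine modulo black-box calls to the inner one, so it is itself captured by an MSOT-S that takes as input a tree built by applying the inner MSOT-S to the original input and duplicating the result according to the pattern of `!'-invocations. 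Composing the two phases yields a function in $\text{MSOT-S} \circ \text{MSOT-S} = \text{MSOT-S}^2$.

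The main obstacle is formalizing this two-phase decomposition in the forward direction. The IAM naturally interleaves depth-0 and depth-1 activity, so extracting a clean intermediate tree on which both halves are honest MSOT-Ss requires annotating IAM runs with enough bookkeeping -- along the lines of the specializations at low exponential depth already used in the paper -- to replay the outer phase on a fresh tree representation of the inner trace. Ensuring that this intermediate tree is MSO-definable from the input while remaining small enough to serve as an MSOT-S output (i.e.\ that its DAG presentation has size polynomial in the input, with structure definable by MSO formulas over the input) is likely where the bulk of the technical effort lies.
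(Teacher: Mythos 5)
Your backward direction contains a step that does not work as stated: an MSO relabeling is not a mere renaming of node labels, so it cannot be ``absorbed into $h'$ by enriching its output alphabet.'' A relabeling may change a label according to a global MSO property of the node (the introduction's example ``relabel each $c$ by the parity of its depth'' is one such), and a bottom-up $\lambda$-transducer has no way to acquire that information just by enlarging its alphabet. The correct move is to fold the intermediate relabeling into the \emph{other} factor: $\text{MSOT-S}$ is closed under \emph{post}composition by MSO relabelings, so $r_g \circ h$ is again an $\text{MSOT-S}$, and one applies \Cref{thm:msots} to it rather than to $h$. With that repair, your route coincides with the paper's: two almost purely affine $\lambda$-transducers composed via \Cref{prop:compo}, whose memory type is the substitution $A\{o:=B\}$ --- not a single outer `$\oc$' wrapped around $B$; rather, the `$\oc$'s of $A$ that were applied to $o$ now apply to $B$, which is exactly why the result is almost !-depth~1 per \Cref{def:almost-depth-1}.

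The forward direction is where the real gap lies: what you give is a programme, not a proof, and you say so yourself. The paper does not attempt your semantic two-phase factorization of the IAM run into an ``inner'' $\text{MSOT-S}$ and an ``outer'' $\text{MSOT-S}$ communicating through an MSO-definable intermediate tree. Instead it compiles the almost !-depth~1 $\lambda$-transducer to an \emph{invisible pebble tree transducer}: the depth-1 IAM's log and exponential tape are merged into a single unbounded stack, which the IPTT's pebble stack implements directly (\Cref{lem:lambda-iptt}), and $\text{IPTT} \equiv \text{MSOT-S}^2$ is then quoted from Engelfriet, Hoogeboom and Samwel. This sidesteps every difficulty you flag. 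Your sketch, by contrast, leaves the essential claims unargued: the ``inner phase'' is not a standalone almost purely affine transducer run on the input (box entries occur at many positions with many different logged positions, and their number can be exponential in the input, cf.\ \Cref{ex:lambda-bin2bin}), so neither the assertion that it ``is an MSOT-S'' nor the existence of a suitably small, MSO-definable intermediate DAG is established. As it stands the forward inclusion is not proved.
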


To prove the left-to-right inclusion, we compile these $\lambda$-transducers to
\emph{invisible pebble tree transducers}~\cite{InvisiblePebbles} -- an extension
of tree-walking transducers known to capture MSOT-S$^2$. For the converse, we
rely on a simple composition procedure for $\lambda$-transducers:
\begin{prop}[proved in \S\ref{sec:lambdatrans}]\label{prop:compo}
  Suppose that $f$ and $g$ are tree-to-tree functions computed by $\lambda$-transducers with
  respective memory types $A$ and $B$. Then $g \circ f$ is computed by a
  $\lambda$-transducer with memory type $A\{o:=B\}$.
\end{prop}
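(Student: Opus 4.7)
The plan is to build the composed transducer by inlining $g$'s local $\lambda$-term definitions into the output produced by $f$'s transducer. Fix a $\lambda$-transducer $T_f$ for $f$ with memory type $A$ and output alphabet $\Sigma$ (i.e.\ the input alphabet of $g$), and a $\lambda$-transducer $T_g$ for $g$ with memory type $B$. For each symbol $c \in \Sigma$ of rank $k$, $T_g$ specifies a $\lambda$-term $v_c$ of type $B \multimap \dots \multimap B \multimap B$ ($k$ arrows); there is also a final output term $\theta_g : B \multimap o$. Similarly, for each input symbol $a$ of $T_f$ we have a memory term $t_a$ acting on children memories of type $A$, and a final output term $\theta_f : A \multimap o$ whose normal form is a tree built over the $\Sigma$-constants.

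I would then construct the composed transducer $T_{g\circ f}$ with memory type $A\{o:=B\}$ as follows. For each input symbol $a$ of rank $n$, take its memory term to be $t_a$ in which every occurrence of an output constant $c \in \Sigma$ is replaced by $v_c$, while every type annotation is subject to the substitution $o:=B$. Take the final output term to be $\lambda m.\,\theta_g\,(\theta_f'\,m)$, where $\theta_f'$ is $\theta_f$ with the same constant-for-term substitution applied. The crucial type compatibility is that $c : o \multimap \dots \multimap o \multimap o$ becomes $B \multimap \dots \multimap B \multimap B$ under $o:=B$, which is precisely the type of $v_c$.

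Three things remain to check. First, well-typedness of the substituted terms under the new types, which follows immediately from the matching just observed. Second, preservation of the affine and exponential discipline: substituting an affine $\lambda$-term for a free constant inside another affine $\lambda$-term yields an affine $\lambda$-term (and similarly for their exponentially decorated variants, since the constants being substituted occur at appropriate polarities). Third, and most importantly, semantic correctness: $T_{g\circ f}$ computes $g \circ f$. I would prove this by induction on the input tree $t$, showing that the memory term produced at the root of $t$ by $T_{g\circ f}$ $\beta$-reduces to the memory term $T_g$ would produce when run on $f(t)$. The key lemma is that $\beta$-reduction commutes with the constant-for-term substitution $c \mapsto v_c$, so performing the substitution before or after normalisation yields the same value.

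The main obstacle is not conceptual but notational: the substitution must be carried out simultaneously on types and on terms, and one must keep track of any exponential annotations introduced by $T_g$'s terms when they get plugged into subterms of $T_f$. Checking that this is harmless amounts to a routine induction, and is what makes the statement of the proposition an easy preliminary fact rather than a deep result.
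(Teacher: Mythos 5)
Your construction is exactly the one the paper uses: replace each output constant $c$ of $f$'s transducer by $g$'s transition term $t'_c$ in the transition terms and in $f$'s output term, then post-compose with $g$'s output term, checking well-typedness by induction on typing derivations. The paper is in fact terser (it leaves the semantic correctness as ``one can check''), whereas you correctly identify the key commutation of $\beta$-reduction with the constant-for-term substitution; there is no substantive difference.
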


\paragraph{Key Tool: the Interaction Abstract Machine (IAM)}

To translate $\lambda$-transducers into tree-walking or invisible pebble tree
transducers, we use a mechanism that evaluates a $\lambda$-term using a pointer
to its syntax tree that is moved by local steps -- just like a tree-walking
reading head. This mechanism, called the Interaction Abstract
Machine, was derived from a family of semantics of linear
$\lambda$-calculi known as \enquote{geometry of interaction} (GoI) --
for more on its history, see~\cite[Chapter~3]{vanoniPhD}. Our approach thus
differs from both the proofs of Gallot et al.\ in~\cite{gallotPhD}, which go
through a syntactic procedure for lowering the order of $\lambda$-terms, and those
of \titocecilia in~\cite{titoPhD}, which rely on another kind of denotational
semantics.

The IAM satisfies a well-known \emph{reversibility} property, see e.g.~\cite[Proposition~3.3.4]{vanoniPhD} -- it even appears in the title of the seminal paper~\cite{DanosRegnierIAM}. In the purely affine
case, this gives us reversible tree-walking transducers in
Theorem~\ref{thm:main-affine} -- these have not appeared explicitly in the
literature, but we define them similarly to the existing reversible
graph-walking automata~\cite{GraphWalking} and reversible two-way string transducers~\cite{ReversibleTransducers}. In the almost purely affine
and almost !-depth~1 cases, we use an ad-hoc optimization of the IAM that breaks
reversibility.

\paragraph{Preliminaries on Trees.} A \emph{ranked alphabet} $\Sigma$ is a finite set with a \enquote{rank} (or \enquote{arity}) $\rk(c) \in \mathbb{N} = \{0,1,\dots\}$
   associated to each \enquote{letter} $c\in\Sigma$. It can
  be seen as a first-order signature of function symbols, and by a \emph{tree}
  over $\Sigma$ we mean a closed first-order term over this signature. For instance, a tree over $\{a : 2,\; b : 1,\; c : 0\}$
  may have binary $a$-labeled nodes, unary $b$-labeled nodes and $c$-labeled leaves; examples include $a(b(c),c)$ or $b(a(a(c,c),c))$.

  We write $\Tree(\Sigma)$ for the set of trees over the ranked alphabet $\Sigma$. It will also be convenient to work with the sets $\Tree(\Sigma,X) = \Tree(\Sigma \cup \{x
  : 0 \mid x \in X\})$ of trees over $\Sigma$ with the additional possibility of having $X$-labeled leaves.

\section{Affine $\lambda$-terms and tree-to-tree $\lambda$-transducers}
\label{sec:lambda}

The grammar of our $\lambda$-terms, where $a$ ranges over constants and $x$ over variables, is
\[ t,u \bnfeq a \bnfalt x \bnfalt \lambda x.\; t \bnfalt t\; u \bnfalt \oc{t} \bnfalt \ttletin{\oc
  x = u}{t} \]
These $\lambda$-terms are considered up to renaming of
bound variables ($\lambda x.\; t$ binds
$x$ in $t$, while $\ttletin{\oc x = u}{t}$ binds $x$ in $t$ but not in~$u$).
We will also use \emph{contexts}, which are $\lambda$-terms containing one occurrence of a special symbol, the hole $\ctxhole$:
\[ \ctx,\ctxtwo,\ctxthree\bnfeq \ctxhole  \bnfalt \lambda x.\; \ctx \bnfalt t\;\ctx\bnfalt \ctx\; u \bnfalt \oc{\ctx} \bnfalt \ttletin{\oc
x = \ctx}{t} \bnfalt \ttletin{\oc x = u}{\ctx} \]
Plugging, i.e.\ substituting the hole of a context $\ctx$ for a term $\tm$, potentially capturing free variables, is written $\ctxp{\tm}$. For example,  if $C = \lambda x.\; t\; \ctxhole$, then $C\ctxholep{u\;v} = \lambda x.\; t\; (u\; v)$.
The linear logic tradition calls a term of the form $\oc{t}$ a \emph{box}, and the
number of nested boxes surrounding a subterm of some term is called its \emph{depth}
within this term. Similarly, the \emph{depth of a context} $C$ is defined to be the number $n$ of boxes surrounding the hole $\ctxhole$. We write then $C_n$. In this paper, however, we are counting only boxes which do \emph{not} surround terms of the base type $o$.

\paragraph{Typing Rules.}

Our type system is a variant of Dual Intuitionistic Linear Logic~\cite{DILL}, with
weakening to make it affine.
As already said, our grammar of types is $A, B \bnfeq o \bnfalt A \multimap B \bnfalt \oc A$.
As usual, $\multimap$ is right-associative: the parentheses in
$A\multimap(B\multimap C)$ can be dropped.

The typing contexts are of the form (unrestricted variables) | (affine
variables). In the rules below, a comma between two sets of typed affine
variables denotes a disjoint union; this corresponds to prohibiting the use of
the same affine free variable in two distinct subterms.
\[ \frac{}{\Theta \mid \Phi, x : A \vdash x : A} \qquad
  \frac{\Theta \mid \Phi, x : A \vdash t : B}{\Theta \mid \Phi \vdash \lambda
    x.\; t : A \multimap B} \qquad
  \frac{\Theta \mid \Phi \vdash t : A \multimap B \quad \Theta \mid
    \Phi' \vdash u : A}{\Theta \mid \Phi, \Phi' \vdash t\;u :
    B} \]
\[ \frac{}{\Theta, x : A \mid \Phi \vdash x : A} \qquad \frac{\Theta \mid
    \varnothing \vdash t : A}{\Theta \mid \varnothing \vdash \oc t : \oc A}
  \qquad \frac{\Theta \mid \Phi \vdash u : \oc A \quad \Theta, x : A \mid \Phi'
    \vdash t : B}{\Theta \mid \Phi, \Phi' \vdash \ttletin{\oc x = u}{t} : B}
\]
We also work with constants whose types are fixed in advance. Fixing $a : A$
means that we have the typing rule $\overline{\Theta \mid \Phi \vdash a :
  A}$. We shall also abbreviate $\varnothing \mid \varnothing \vdash t : A$ as
$t : A$.

\begin{clm}[Affineness]
  If $\lambda x.\; t$ is well-typed, the variable $x$ occurs at most once in
$t$, at depth 0. (But let-bound variables are not subject to any such restriction).
\end{clm}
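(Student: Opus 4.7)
The plan is to prove a strengthening by induction on the typing derivation: whenever $\Theta \mid \Phi \vdash t : B$, every variable $y \in \Phi$ occurs at most once as a free variable in $t$, and any such occurrence is at depth $0$. The \enquote{at most once} (rather than \enquote{exactly once}) reflects the implicit weakening built into the affine-variable axiom. The Affineness claim then follows immediately: if $\lambda x.\; t$ is well-typed, its derivation ends with the $\multimap$-introduction rule, whose premise places $x$ in the affine context of $t$, so the strengthened statement applies to the bound occurrence of~$x$.

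I would proceed case by case on the last rule of the derivation, recording along the way (by the same induction, or as a preliminary lemma) the standard fact that all free variables of a well-typed term lie in its typing context. The axiom cases and the abstraction rule are immediate; the promotion rule forces the affine context in its conclusion to be empty, so there is nothing to check. The real content lies in the application and let rules: both split the affine context as a disjoint union $\Phi, \Phi'$ between two subterms. For $y \in \Phi$, the preliminary lemma rules out any free occurrence of $y$ in the $\Phi'$-typed subterm, while the induction hypothesis gives at most one occurrence at depth $0$ in the $\Phi$-typed subterm; since neither rule wraps its subterms inside a $\oc$, depth $0$ is preserved in the enclosing term. A symmetric argument handles $y \in \Phi'$.

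For the paper's slightly non-standard depth convention (which ignores $\oc$ applied to base-type terms), the depth $0$ condition holds \emph{a fortiori}: an affinely-typed free variable cannot syntactically appear under any $\oc$ whatsoever, again because the promotion rule demands an empty affine context in its premise. I expect no genuine obstacle here -- the claim is essentially a sanity check of the typing discipline -- so the only mildly delicate point is careful bookkeeping of the disjointness of affine contexts in the binary rules, together with the observation that the let rule treats its bound variable $x$ as unrestricted and therefore imposes no constraint on occurrences of $x$ in the body, as required by the parenthetical remark of the claim.
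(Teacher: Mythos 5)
Your proof is correct: the strengthened invariant (every affine-context variable occurs at most once, at depth~0, in a well-typed term) and the induction on typing derivations, with the disjoint splitting of $\Phi,\Phi'$ in the application and let rules and the empty affine context in the promotion rule doing the work, is exactly the routine argument the paper has in mind. The paper states this claim without proof as an immediate consequence of the typing discipline, so there is nothing to compare beyond noting that your write-up, including the observation that affine variables can never occur under any `$\oc$' at all, supplies precisely the omitted bookkeeping.
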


\paragraph{Normalization.}

Our $\beta$-reduction rules, which can be applied in any context $\ctx$, are:
\[ L\ctxholep{\lambda x.\; t}\; u \longrightarrow_\beta L\ctxholep{t\{x:=u\}} \qquad\qquad
  \ttletin{\oc x = L\ctxholep{\oc u}}{t} \longrightarrow_\beta L\ctxholep{t\{x:=u\}}\]
where the context $L$ consists of a
succession of let-binders: $L \bnfeq \ctxhole \mid \ttletin{\oc x' = t'}{L}$.

For instance, the following is a valid $\beta$-reduction:
\[ (\ttletin{\oc x = u}{\ttletin{\oc y = v}{\lambda z.\; z\;x\;y}})\; t
  \quad\longrightarrow_\beta^*\quad \ttletin{\oc x = u}{\ttletin{\oc y = v}{t\;x\;y}} \]
This \enquote{reduction at a distance} -- an idea of
Accattoli \& Kesner~\cite{AccattoliK10} -- is a way to get the desirable Proposition~\ref{prop:normalization-simplifies}
below without having to introduce cumbersome \enquote{commuting
  conversions}. For an extended discussion in the context of a system very close
to ours, see~\cite[\S1.2.1]{MazzaHDR}.

\begin{prop}[Normalization \& subject reduction]\label{prop:norm-sr}
  Any well-typed term has a $\beta$-normal form. Furthermore, if $\Theta\mid\Phi \vdash t : A$ then $\Theta\mid\Phi \vdash t' : A$ for any $\beta$-normal form $t'$ of $t$.
\end{prop}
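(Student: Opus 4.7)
The proposition combines weak normalization and subject reduction; the plan is to prove these two facts separately.

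\emph{Step 1 (Subject reduction).} I would first establish two substitution lemmas, one per flavor of variable in our judgments. For affine variables: if $\Theta \mid \Phi, x{:}A \vdash t : B$ and $\Theta \mid \Phi' \vdash u : A$ with $\Phi$ and $\Phi'$ disjoint, then $\Theta \mid \Phi,\Phi' \vdash t\{x{:=}u\} : B$. For unrestricted variables: if $\Theta, x{:}A \mid \Phi \vdash t : B$ and $\Theta \mid \varnothing \vdash u : A$, then $\Theta \mid \Phi \vdash t\{x{:=}u\} : B$. Both are routine inductions on the typing derivation of $t$; the empty affine context required of $u$ in the second lemma is precisely what makes duplication safe when $x$ has several occurrences in $t$. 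To handle the ``at a distance'' reduction rules, I need a decomposition lemma: given $\Theta\mid\Phi \vdash L\ctxholep{r} : A$, one can expose a derivation of $r$ under a contextually extended pair $(\Theta',\Phi')$, and replacing $r$ by any other term typable under the same extension preserves the type of the whole. Subject reduction then follows by induction on the reduction context, using the affine substitution lemma for the $\multimap$-redex and the exponential one for the $\oc$-redex, with the decomposition lemma absorbing the leading $L$-context in each case.

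\emph{Step 2 (Normalization).} I would proceed by reducibility candidates in the Girard--Tait tradition. Inductively define a family $\mathrm{Red}_A$ of sets of terms of type $A$ by: $\mathrm{Red}_o$ is the set of strongly normalizing terms of type $o$; $\mathrm{Red}_{A \multimap B} = \{t \mid \forall u \in \mathrm{Red}_A,\ t\,u \in \mathrm{Red}_B\}$; and $\mathrm{Red}_{\oc A}$ consists of the strongly normalizing $t$ such that every reduct of $t$ of the form $L\ctxholep{\oc u}$ satisfies $u \in \mathrm{Red}_A$. One checks the standard candidate conditions (all members are strongly normalizing, closure under reduction, and saturation by neutrals all of whose reducts are in $\mathrm{Red}_A$). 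The \emph{fundamental lemma} -- if $\Theta \mid \Phi \vdash t : B$ and $\sigma$ substitutes each free variable $x{:}A$ by a term in $\mathrm{Red}_A$ (respecting the exponential/affine split), then $t\sigma \in \mathrm{Red}_B$ -- is then proven by induction on the typing derivation. Taking identity substitutions on variables (using reducible inhabitants such as the variables themselves, which are neutral) yields strong normalization for every well-typed term, and hence a fortiori the weak normalization required by Proposition~\ref{prop:norm-sr}.

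The step I expect to be most delicate is showing that $\mathrm{Red}_{\oc A}$ is stable under the exponential let-reduction as formulated ``at a distance'': one must argue that if $\ttletin{\oc x = L\ctxholep{\oc v}}{t}$ is reducible, then so is $L\ctxholep{t\{x{:=}v\}}$, which requires some commutation-style reasoning about how $L$-contexts slide past the exponential, playing the role of the commuting conversions that Accattoli--Kesner-style syntax deliberately avoids. An alternative, if one wishes to elide this combinatorial detour, is to observe that our calculus is an affine fragment of DILL with full exponentials and directly import a strong normalization result from the linear logic literature; in either case, once strong normalization is in hand, the second claim of the proposition is immediate by iterating subject reduction along any reduction path to a normal form.
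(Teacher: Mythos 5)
The paper does not actually supply a proof of Proposition~\ref{prop:norm-sr}: it is stated as a standard fact about DILL-style affine calculi with `$\oc$', so there is no "official" argument to compare against. Your outline is the standard and correct route, and both halves are sound. For subject reduction, the two substitution lemmas plus the $L$-context decomposition lemma are exactly what is needed; one small point your "routine induction" will have to absorb is the erasing case (when the $\lambda$-bound variable does not occur in the body), where re-typing the contractum in the \emph{same} affine context $\Phi$ relies on the weakening that is built into the leaf rules of this system, and the decomposition lemma additionally needs admissibility of weakening of the unrestricted context $\Theta$ to push $u$ under the let-binders of $L$. For normalization, reducibility is appropriate here -- a naive size argument fails because the promotion rule $\Theta\mid\varnothing\vdash t:A \Rightarrow \oc t:\oc A$ makes the calculus as expressive as the simply-typed $\lambda$-calculus -- and you correctly isolate the genuinely delicate point, namely closure of the candidates under the let-reduction at a distance. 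Two remarks on economy: since the proposition only claims \emph{weak} normalization, a lighter degree-based argument in the style of cut-elimination for linear logic (reduce a redex of maximal exponential/type complexity, show the multiset of degrees decreases) would also suffice and avoids the candidate machinery; and your fallback of importing strong normalization for DILL from the literature is essentially what the paper implicitly does. Either way, your final step -- iterating subject reduction along a reduction sequence to a normal form -- correctly yields the second claim.
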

\begin{defi}
  We say that a term $\Theta\mid\Phi \vdash t : A$ is purely affine when \emph{all of its subterms} have purely affine
  types (cf.~Definition~\ref{def:almost-purely-affine}) and $\Theta=\emptyset$, which implies that it contains no !-box or let-binding. We also call almost purely affine (resp.\ almost !-depth 1) the terms $\Theta\mid\Phi \vdash t : A$ in which, for every subterm $u$ of $t$,
  \begin{itemize}
    \item the type of $u$ is almost purely affine (resp.\ almost !-depth 1 -- cf.~Definition~\ref{def:almost-depth-1}),
    \item if $u = \oc r$ then $r$ is purely (resp.\ almost purely) affine,
    \item $\Theta=x_1:o,\ldots,x_n:o$ (resp. $\Theta=x_1:A_1,\ldots,x_n:A_n$ where $A_1,\ldots,A_n$ are almost purely affine).
  \end{itemize}
\end{defi}

\begin{prop}\label{prop:normalization-simplifies}
  Assume that we work with purely affine constants, as will always be the
  case in this paper. Let $\Theta\mid\Phi \vdash t : A$ and \emph{suppose $t$ is in normal form}. If $A$ and the types in $\Theta$ and $\Phi$ are purely
  affine (resp.\ almost purely affine, almost !-depth 1), then so is $t$.
\end{prop}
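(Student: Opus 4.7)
The plan is to proceed by structural induction on the normal form $t$, handling the three alternatives (purely affine, almost purely affine, almost !-depth 1) in parallel. Rather than re-invoking the proposition on subterms, I would verify the class-defining conditions on each subterm of $t$ directly, so as to sidestep subtle issues about how the local typing context of a deep subterm relates to the global hypotheses.

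A preliminary observation is that any $\beta$-normal form (under the at-a-distance reduction rules) decomposes as $t = L\ctxholep{t_0}$, where $L \bnfeq \ctxhole \bnfalt \ttletin{\oc x = u}{L}$ is a possibly empty sequence of outer let-prefixes and $t_0$ is either a variable, a constant, an abstraction, a $\oc$-box, or an iterated application $h\; v_1\; \cdots\; v_n$ with $h$ a variable or constant. Indeed, in an application $s\; u$ in normal form, $s$ cannot \emph{eventually} be a $\lambda$-abstraction without exposing an at-a-distance $\beta$-redex, forcing $s$ to unfold to an iterated application headed by a variable or constant; an analogous remark applies to the term bound by any let-binding in $t$.

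The case analysis runs as follows. A variable or constant is immediate from the standing hypotheses. An abstraction $\lambda y.\; s$ with $A = B \multimap C$ uses closure of the class under both the domain and codomain of $\multimap$ to extend the context by $y : B$ before invoking the IH on $s$. For an application $t = s\; u$, the head decomposition above identifies a variable or constant $h$ of type $B_1 \multimap \cdots \multimap B_n \multimap A$; this type lies in the class because $h$ is either free (typed in $\Theta$ or $\Phi$), a constant (purely affine by standing assumption), or bound above $t$ by a $\lambda$ or let (its type then arising from a larger type handled at an earlier step of the induction). Closure under subformulas of $\multimap$ forces each $B_i$ into the class, and the IH handles each $v_i$. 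For $\oc s$ (only possible in the last two cases), the shape $A = \oc B$ forces $B$ into a \emph{strictly stronger} class ($o$ for almost purely affine; almost purely affine for almost !-depth 1), allowing the IH to apply to $s$ with that stronger class. The let-binding case combines a neutral-term analysis of the bound term $u$ (as in the application case) with the matching of the let-bound variable's type against the extended-context constraint. The purely affine case collapses, since purely affine types contain no $\oc$: the box and let cases cannot arise, and $\Theta = \emptyset$ comes for free.

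The main obstacle will be the bookkeeping of the outer let-prefixes appearing in the head decomposition of normal-form applications, compounded by the need to run the three alternatives simultaneously — the box case of one class dropping to the next-stronger class for its content makes a joint induction essential, and the at-a-distance calculus means one must be careful not to confuse the syntactic head of an application with the nearest abstraction along the term.
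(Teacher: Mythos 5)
Your proposal is essentially correct, and it supplies the standard subformula-property argument for normal forms that the paper itself omits (the proposition is stated without proof, as a routine consequence of normality). One imprecision worth fixing: your preliminary decomposition puts all the let-prefixes on the outside, with $t_0$ an iterated application $h\;v_1\cdots v_n$, but in this at-a-distance calculus lets can be interleaved \emph{inside} the application spine --- e.g.\ $(\ttletin{\oc x = w}{y\;v_1})\;v_2$ is a well-typed normal form not of that shape, since the head of the inner let-body is a variable and so no redex at a distance is exposed. You half-acknowledge this as a bookkeeping obstacle; it is harmless because $L\ctxholep{s}$ has the same type as $s$, so the chain of types along the spine (head of type $B_1 \multimap \cdots \multimap B_n \multimap A$, arguments of types $B_i$, partial applications) is exactly as in your argument, and the same remark disposes of the bound term of a let (which must be neutral of type $\oc B$, since a box under let-prefixes would be a redex and an abstraction or box head would be ill-typed there). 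The remaining friction --- whether the content of a box counts as \enquote{purely affine} when its unrestricted context $\Theta$ is nonempty, and the clause $\Theta = x_1 : o, \ldots, x_n : o$ in the definition of almost purely affine terms --- is a wrinkle in the paper's own definitions (its own Examples already use boxes whose contents have free let-bound variables) rather than a defect of your proof; the substantive invariant you maintain, namely that every type entering the typing context along the induction stays in the class and that the class drops one level under a box, is the right one and the proposition is applied in the paper only to closed terms where these edge cases do not arise.
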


\noindent
For example, $\ttletin{\oc z = \oc(\lambda x.\; x)}{z} : o \multimap o$ is not purely affine but its normal form $\lambda x.\; x$ is.

\paragraph{Encoding of trees in our affine $\lambda$-calculus.}

Fix a ranked alphabet $\Sigma$.
We consider $\lambda$-terms built over the constants $c : o^{\rk(c)} \multimap o$ for $c\in\Sigma$.
There is a canonical encoding $\widetilde{(\,\cdot\,)}$ of trees as closed terms of type $o$; for instance,
$\tau = a(b(c),c)$ is encoded as $\widetilde\tau = a\; (b\; c)\; c$.
\begin{prop}\label{prop:encoding}
  Every closed term \emph{of type $o$} using these constants admits a \emph{unique} normal form. Furthermore, $\widetilde{(\,\cdot\,)}$ is a \emph{bijection} between $\Tree(\Sigma)$ and these normal forms.
\end{prop}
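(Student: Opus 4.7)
The plan is to handle the three assertions (existence of a normal form, its uniqueness, and the bijection) in this order, leveraging the normalization/subject reduction results already in hand.

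For \emph{existence} of a normal form, I would simply invoke \Cref{prop:norm-sr}: every well-typed term is normalizable, and closed terms of type $o$ are in particular well-typed. For \emph{uniqueness}, I would appeal to confluence of our $\beta$-reduction. Confluence for calculi with reduction-at-a-distance of the kind considered here is standard (it can be proved by the usual Tait--Martin-Löf parallel-reduction technique, adapted to the $L\ctxholep{\cdot}$-rules in the manner of Accattoli--Kesner); together with normalization this gives unique normal forms. I would state this as a routine fact rather than reprove it.

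Next I would verify that $\widetilde{(\,\cdot\,)}$ lands in the set of normal forms: by induction on $\tau\in\Tree(\Sigma)$, the encoding $\widetilde{c(\tau_1,\dots,\tau_k)} = c\,\widetilde{\tau_1}\,\cdots\,\widetilde{\tau_k}$ contains no $\lambda$-abstractions, no `!'-boxes, and no let-bindings, so there is simply no $\beta$-redex to contract. Injectivity is then a trivial induction on the structure of the tree, using that the head constant determines the root label and the arity determines the number of children.

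The substantive step is \emph{surjectivity}: every closed normal form $t : o$ is $\widetilde{\tau}$ for some $\tau$. Here I proceed by induction on the size of $t$. Since $t$ is closed (empty contexts) and its type $o$ is purely affine, \Cref{prop:normalization-simplifies} applies — this uses the hypothesis that all constants are purely affine — and guarantees that $t$ itself is purely affine, hence contains neither `!'-boxes nor let-bindings. A $\beta$-normal form in this fragment has the standard shape $\lambda x_1.\,\cdots\,\lambda x_n.\, h\, u_1\,\cdots\,u_k$, where $h$ is a head variable or constant. Because $t$ has type $o$, there are no leading abstractions ($n=0$); because $t$ is closed, $h$ cannot be a variable and must be some constant $c\in\Sigma$. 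The typing $c : o^{\rk(c)} \multimap o$ then forces $k = \rk(c)$ and each $u_i : o$, with each $u_i$ closed (affineness and closedness propagate to subterms at depth $0$) and in normal form. The induction hypothesis yields trees $\tau_i$ with $u_i = \widetilde{\tau_i}$, so $t = \widetilde{c(\tau_1,\dots,\tau_k)}$.

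The main obstacle, beyond quoting confluence, is ensuring the head-normal-form analysis genuinely exhausts all cases — in particular ruling out let-bindings and boxes at the head of $t$. This is where \Cref{prop:normalization-simplifies} does the real work, and the proof proposal rests on its applicability to the purely affine type $o$ with purely affine constants.
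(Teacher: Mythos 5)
Your proof is correct, and since the paper states Proposition~\ref{prop:encoding} without proof (treating it as standard), your argument supplies exactly the details one would expect: existence from Proposition~\ref{prop:norm-sr}, uniqueness from confluence, and surjectivity via Proposition~\ref{prop:normalization-simplifies} followed by the head-spine analysis of a closed normal form of type $o$. The only ingredient not available in the paper is confluence of $\beta$-reduction at a distance, which the paper never states explicitly; you are right that it is a routine adaptation of parallel reduction and reasonable to quote, but it is worth being aware that this is the one external fact your uniqueness claim rests on.
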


Given a type $A$ and a family of $\lambda$-terms $\vec{t} = (t_c)_{c \in
  \Sigma}$ such that $t_c : A^{\rk(c)} \multimap A$ for each letter $c \in
\Sigma$, we write $\widehat\tau(\vec{t})$ for the result of replacing each
constant $c$ in $\widetilde\tau$ by $t_c$. It is always well typed, with type
$A$. For the example $\tau = a(b(c),c)$, we have $\widehat\tau((t_x)_{x \in \{a,b,c\}}) = t_a\; (t_b\; t_c) \;t_c$.

\paragraph{Higher-Order Transducers (or $\lambda$-Transducers).}
\label{sec:lambdatrans}

Let us fix an input alphabet $\Gamma$.
\begin{defi}
  An \emph{(affine) $\lambda$-transducer}
  $\Tree(\Gamma)\to\Tree(\Sigma)$ is specified by a \emph{memory type}~$A$ and a
  family of terms (that can use the aforementioned constants from $\Sigma$):
  \[ \underbrace{t_a : A^{\rk(a)} \multimap
      A}_{\mathclap{\text{\enquote{transition terms}}}}\ \text{for each letter}\ a \in \Gamma
    \qquad\text{and}\qquad \underbrace{u : A \multimap o}_{\mathclap{\text{\enquote{output term}}}} \]
\end{defi}
The $\lambda$-transducer defines the function
\vspace{-1em}
\[ \tau \in \Tree(\Gamma) \quad\mapsto\quad \underbrace{\sigma \in \Tree(\Sigma)\ \text{such that}\ \widetilde\sigma\ \text{is the normal form of}\ \overbrace{u\; \widehat\tau((t_a)_{a\in\Gamma})}^{\mathclap{\text{well-typed with type $o$}}}}_{\mathclap{\text{well-defined and unique thanks to Proposition~\ref{prop:encoding}}}} \]
This amounts to specifying a structurally recursive function over $\Tree(\Gamma)$ with return type $A$, followed by some post-processing that produces an output tree. Alternatively, a $\lambda$-transducer can be seen as a kind of tree automaton whose
memory consists of affine $\lambda$-terms of some type $A$ (with constants from
$\Sigma$) and whose bottom-up transitions are also defined by $\lambda$-terms.

In addition to the purely affine Example~\ref{ex:lambda-count}, we exhibit two other
$\lambda$-transducers.
\begin{exa}\label{ex:lambda-seq-nat}
  The following almost affine $\lambda$-transducer maps $S^n(0) =
  S(\dots(S(0)))$ to the list $[1,2,\dots,n]$, encoded as the tree
  $\mathtt{cons}(S(0),\dots(\mathtt{cons}(S^n(0),\mathtt{nil})\dots)$.
  \begin{align*}
    t_0 &= \lambda x.\; \mathtt{nil} : \oc o \multimap o\ \text{(memory type)}\qquad\qquad\qquad u = \lambda g.\; g\; \oc(S\; 0)\\
    t_S &= \lambda g.\; \lambda x.\; \ttletin{\oc y = x}{\mathtt{cons}\; y\; (g\; \oc(S\; y))}
  \end{align*}
\end{exa}
\begin{exa}\label{ex:lambda-bin2bin}
  The following $\lambda$-transducer takes as input the binary encoding of a
  natural number $n$ and returns a complete binary tree of height $n$, e.g.\
  $\mathtt{0}(\mathtt{0}(\mathtt{1}(\mathtt{0}(\varepsilon)))) \mapsto
  a(a(c,c),a(c,c))$. Thus, its growth is doubly exponential. Its memory type
  $\oc(\oc o \multimap \oc o) \multimap o$ is almost !-depth 1.
  \begin{align*}
    t_0 &= \lambda g.\; \lambda x.\; \ttletin{\oc f = x}{g\; \oc(\lambda y.\; f\; (f\; y))} \quad\; t_\varepsilon = \lambda x.\; \ttletin{\oc f = x}{\ttletin{\oc z = f\; \oc c}{z}}\\
    t_1 &= \lambda g.\; \lambda x.\; \ttletin{\oc f = x}{g\; \oc(\lambda y.\; \ttletin{\oc z = f\; (f\; y)}{\oc(a\; z\; z)})} \qquad u = \lambda g.\; g\; \oc(\lambda y.\; y)
  \end{align*}
\end{exa}

Moreover, composing Examples~\ref{ex:lambda-count} and~\ref{ex:lambda-seq-nat} according to Proposition~\ref{prop:compo} gives another almost purely affine example. Let us describe how this works:
\begin{proof}[Proof of Proposition~\ref{prop:compo}]
	Let $f$ and $g$ be computed by the two $\lambda$-transducers
	\[ \underbrace{(t_a : A^{\rk(a)} \multimap A)_{a\in\Gamma},\; u : A \multimap o}_{\mathclap{\text{with
				constants from}\ \Sigma}} \qquad \underbrace{(t'_c : B^{\rk(c)} \multimap B)_{c\in\Sigma},\; u' : B \multimap o}_{\mathclap{\text{with
				constants from}\ \Pi}} \]
	First, for any $v : C$ using constants $c : o^{\rk(c)} \multimap o$ for $c \in \Sigma$, an induction over the typing derivation of $v$ shows that the term $v\{c:=t'_c\}_{c \in \Sigma}$ obtained by replacing each $c$ in $v$ by $t'_c$ is well-typed, with type $C\{o:=A\}$. This means that the terms below define a $\lambda$-transducer with memory type $A\{o:=B\}$; one can check that it computes $g \circ f$ to conclude the proof:
	\[ \text{transition term (for $a\in\Gamma$):}\ t_a\{c:=t'_c\}_{c \in \Sigma} \qquad \text{output term:}\ \lambda x.\; u'\; (u\{c:=t'_c\}_{c \in \Sigma}\; x)  \qedhere \]
\end{proof}
\begin{rem}\label{rem:hltt-mtt}
  The original impetus for Engelfriet and Vogler's \enquote{high level tree transducers} was that their transducers of order $k$ are equivalent to compositions of $k$ unrestricted macro tree transducers~\cite{EngelfrietHighLevel}. A major motivation of Gallot et al.'s machine model using linear $\lambda$-terms was also function composition, for which they give efficient constructions~\cite[Chapter~6]{gallotPhD} (which are non-trivial). And in \titocecilia's \enquote{implicit automata}, composition is just a matter of plugging two $\lambda$-terms together~\cite[Lemma~2.8]{iatlc1}.
\end{rem}

\section{Tree-walking transducers (last definitions needed for Theorem~\ref{thm:main-affine})}

\paragraph{Generalities.}

In this paper, we shall encounter several machine models that generate some
output tree in a top-down fashion, starting from the root. (This is not the case
of $\lambda$-transducers.) They follow a common pattern, which we abstract as a
lightweight formalism here: essentially, a deterministic regular tree
grammar with infinitely many non-terminals.

\begin{rem}
  Engelfriet's tree grammars with storage
  (see for instance~\cite{engelfriet2014contextfree}) are more complex
  formalisms that also attempt to unify several definitions of tree transducer
  models.
\end{rem}
\begin{defi}
  A \emph{tree-generating machine} over the ranked alphabet $\Sigma$ consists of:
  \begin{itemize}
  \item a (possibly infinite) set $\mathcal{K}$ of \emph{configurations};
  \item an \emph{initial configuration} $\kappa_0 \in \mathcal{K}$ -- in concrete
    instances, $\kappa_0$ will be defined as a simple function of some input
    tree (for tree transducers) or some given $\lambda$-term (for the IAM);
  \item a \emph{computation-step (partial) function} $\mathcal{K}
    \rightharpoonup \Tree(\Sigma,\mathcal{K})$.
  \end{itemize}
\end{defi}
\begin{exa}\label{ex:machine-seq-nat}
  To motivate the formal semantics for these machines that we will soon define,
  we give a tree-generating machine that is meant to produce the list
  $[1,2,\dots,n]$ (for an arbitrarily chosen $n \in \Nat$), encoded as in
  Example~\ref{ex:lambda-seq-nat}.
  \begin{itemize}
  \item The set of configurations is $\mathcal{K} =
    \{\mathsf{spine,num}\}\times\Nat$ where $\mathsf{spine}$ and $\mathsf{num}$
    are formal symbols.
  \item The initial configuration is $(\mathsf{spine},n)$ -- let us write this
    pair as $\langle\mathsf{spine},n\rangle$.
  \item The computation-step function is
    $\left\{ \begin{aligned}
       \langle \mathsf{spine},0 \rangle &\mapsto \mathtt{nil}\\
       \langle \mathsf{spine},m+1 \rangle &\mapsto \mathtt{cons}(\langle
                                            \mathsf{num},n-m \rangle, \langle \mathsf{spine},m \rangle)\\
       \langle \mathsf{num},0 \rangle &\mapsto 0 \\
       \langle \mathsf{num},m+1 \rangle &\mapsto S(\langle \mathsf{num},m \rangle)
    \end{aligned} \right.$
  \end{itemize}
  For $n=3$, one possible run is
  \[ \langle \mathsf{spine},3 \rangle \rightsquigarrow \mathtt{cons}(\langle \mathsf{num},1
    \rangle, \langle \mathsf{spine},2  \rangle)
    \rightsquigarrow \mathtt{cons}(S(\langle \mathsf{num},0 \rangle), \langle
    \mathsf{spine},2  \rangle) \rightsquigarrow \dots \]
  All runs starting from $\langle \mathsf{spine},3 \rangle$ eventually reach the tree that encodes $[1,2,3]$.
\end{exa}

Let us now discuss the general case. Intuitively, the execution of the machine involves spawning several independent
concurrent processes, outputting disjoint subtrees. We formalize this parallel
computation as a rewriting system $\rightsquigarrow$ on
$\Tree(\Sigma,\mathcal{K})$: we have $\tau_1 \rightsquigarrow \tau_2$ whenever $\tau_2$
is obtained from $\tau_1$ by substituting one of its configuration leaves by its
image by the computation-step function. This rewriting system is orthogonal, and
therefore confluent, which means that the initial configuration has at most one
normal form. If this normal form exists and belongs to $\Tree(\Sigma)$, it is
the output of the machine; we then say that the machine converges. Otherwise, the output is undefined; the machine diverges.

\paragraph{Tree-Walking Transducers.} Before giving the definition, let us see a concrete example.

\begin{exa}\label{ex:twt-count}
  According to Theorem~\ref{thm:main-affine}, since the $\lambda$-transducer of Example~\ref{ex:lambda-count}
  is purely affine, the function \enquote{count non-$a$ nodes} that it defines
  can also be computed by some (reversible) tree-walking transducer. We show the
  run of such a transducer on the input $a_1(b_2(c_3),c_4)$ -- the
  indices are not part of the node labels, they serve to distinguish positions:
  \begin{align*}
    (q,\circlearrowleft,a_1) &\rightsquigarrow (q,\downarrow_\bullet,b_2) \rightsquigarrow S((q,\downarrow_\bullet,c_3)) \rightsquigarrow                              S(S((q,\uparrow_1^\bullet,b_2))) \rightsquigarrow S(S((q,\uparrow_1^\bullet,a_1)))\\
                             &\rightsquigarrow S(S((q,\downarrow_\bullet,c_4))) \rightsquigarrow S(S(S((q,\uparrow_2^\bullet,a_1))))) \rightsquigarrow S(S(S(0)))
  \end{align*}
  where $q$ is the single state of the transducer. The second component records
  the \enquote{provenance}, i.e.\ the previous position of the tree-walking
  transducer relatively to the current node (stored in the third component): $\downarrow_\bullet$ refers to its parent,
  $\circlearrowleft$ to itself, and $\uparrow_i^\bullet$ to its $i$-th child.
\end{exa}

\begin{defi}\label{def:twt}
  A \emph{tree-walking transducer (TWT)} $\Tree(\Gamma)\rightharpoonup\Tree(\Sigma)$ consists of:
  \begin{itemize}
  \item a finite set of \emph{states} $Q$ with an \emph{initial state} $q_0 \in Q$
  \item a family of (partial) \emph{transition functions} for $a \in \Gamma$
    \[ \delta_a\colon Q \times \{\downarrow_\bullet, \circlearrowleft, \uparrow_1^\bullet, \dots,
      \uparrow_k^\bullet\} \rightharpoonup
      \Tree(\Sigma,\; Q \times \{\uparrow_\bullet, \circlearrowleft, \downarrow^\bullet_1, \dots,
      \downarrow^\bullet_k\}) \quad\text{where}\ k = \rk(a)\]
  \item a family of (partial) \emph{transition functions at the root} for $a \in \Gamma$
  \[ \delta_a^\mathrm{root}\colon Q \times \{\circlearrowleft, \uparrow_1^\bullet, \dots,
  \uparrow_k^\bullet\} \rightharpoonup
  \Tree(\Sigma,\; Q \times \{\circlearrowleft, \downarrow^\bullet_1, \dots,
  \downarrow^\bullet_k\}) \quad\text{where}\ k = \rk(a)\]
  \end{itemize}
\end{defi}
The TWT associates to each input tree $\tau$ a tree-generating machine whose
output is the image of $\tau$. Its set of configurations is $Q \times \{\downarrow_\bullet, \circlearrowleft, \uparrow_1^\bullet, \dots\} \times \{\text{nodes of}\ \tau\}$ and its initial configuration of is $(q_0,\;\circlearrowleft,\;\text{root of \(\tau\)})$.

To define the image of $(q,p,v)$ by the computation-step function (it is
undefined if one of the following steps is undefined), we start with either
$\delta^{\mathrm{root}}_{a}(q,p)$ if $v$ is the root or $\delta_{a}(q,p)$
otherwise -- where $a$ is the label of the node $v$ -- then replace each $(q',p)\in Q
\times \{\uparrow_\bullet, \dots\}$ by
\[\begin{cases}
  (q',\;\downarrow_\bullet,\; \text{\(i\)-th child of \(v\)}) &\text{if}\ p =\; \downarrow_i^\bullet\qquad\qquad\qquad\qquad\qquad\qquad\quad
  (q',\circlearrowleft,v) \quad\text{if}\ p=\;\circlearrowleft
\\
(q',\;\uparrow_j^\bullet,\; \text{parent of \(v\)}) &\text{if}\ p=\; \uparrow_\bullet\ \text{and \(v\)
    is the \(j\)-th child of its parent}
\end{cases}\]

\begin{exa}\label{ex:twt-seq-nat}
  Using the idea of Example~\ref{ex:machine-seq-nat}, we define a tree-walking
  transducer that computes the function \enquote{$n \mapsto [1,\dots,n]$ modulo
    encodings} of Example~\ref{ex:lambda-seq-nat}. Its set of states is $Q =
  \{\mathsf{spine,num}\}$, its initial state is $\mathsf{spine}$ and its
  transitions are
  \begin{align*}
    \delta^{\mathrm{root}}_S(\mathsf{spine},\circlearrowleft) &= \delta_S(\mathsf{spine},\downarrow_\bullet) = \mathtt{cons}((\mathsf{num},\circlearrowleft),(\mathsf{spine},\downarrow^\bullet_1))\\
    \delta_S(\mathsf{num},\circlearrowleft) &= \delta_S(\mathsf{num},\uparrow_1^\bullet) = S((\mathsf{num},\uparrow_\bullet))
    \\
    \delta^{\mathrm{root}}_S(\mathsf{num},\circlearrowleft) &= \delta^{\mathrm{root}}_S(\mathsf{num},\uparrow_1^\bullet) = S(0) \qquad\qquad \delta^{\mathrm{root}}_0(\mathsf{spine},\circlearrowleft) = \delta_0(\mathsf{spine},\downarrow_\bullet) = \mathtt{nil}
  \end{align*}
\end{exa}

\begin{rem}
  In the above example, the provenance information in
  $\{\downarrow_\bullet,\dots\}$ plays no role. On the contrary, it is crucial
  in the former Example~\ref{ex:twt-count}, where we have
  $\delta_a^{\mathrm{root}}(q,\uparrow_2^\bullet) = 0$ and
  \[ \delta_a^{\mathrm{root}}(q,\circlearrowleft) =
    \delta_a(q,\downarrow_\bullet) = (q,\downarrow^\bullet_1) \qquad
    \underbrace{\delta_a^{\mathrm{root}}(q,\uparrow_1^\bullet) =
    \delta_a(q,\uparrow_1^\bullet) =
    (q,\downarrow^\bullet_2)}_{\mathclap{\text{\enquote{after returning from
          the 1st child, the traversal starts visiting the 2nd child}}}} \qquad
    \delta_a(q,\uparrow_2^\bullet) = (q,\uparrow_\bullet) \]
  In the definitions of tree-walking automata or transducers in the literature,
  e.g.~\cite{bojanczyk2008tree,Engelfriet09,EngelfrietIM21}, transitions often
  do not have access to this provenance, but instead they can depend
  on the \enquote{child number} $i$ of the current node (such that the node is an
  $i$-th child of its parent). One can easily simulate one variant with the
  other; but if neither of these features were available, the machine model would
  be strictly weaker~\cite[Theorem~5.5]{KamimuraS81}.

  Our main motivation for using provenances instead of child numbers is that,
  according to~\cite[Sections~5~and~6]{GraphWalking}, being
  \enquote{direction-determinate} -- i.e.\ knowing which previous node the
  current configuration came from -- is important in the reversible case. This
  can indeed be observed in the proof of Claim~\ref{clm:reversible}.
  (Directed states are used in~\cite{ReversibleTransducers} for a similar reason.)
\end{rem}

Let us say that a map $\delta\colon X \to \Tree(\Sigma,Y)$ is \emph{$Y$-leaf-injective} whenever in the family of trees
$(\delta(x)\mid x\in X)$, each $y \in Y$ occurs at most once: $y$ appears in at most one tree of
the family, and if it does appear, this tree has a single leaf with label $y$.
\begin{defi}
  A tree-walking transducer $\Tree(\Gamma)\rightharpoonup\Tree(\Sigma)$ is
  \emph{reversible} when all maps $\delta_a$ and $\delta^{\mathrm{root}}_a$ for
  $a\in\Gamma$ are $(Q \times \{\uparrow_\bullet, \circlearrowleft,
  \downarrow_1^\bullet, \dots,\downarrow_{\rk(a)}^\bullet\})$-leaf-injective.
\end{defi}

Example~\ref{ex:twt-count} is reversible, but not Example~\ref{ex:twt-seq-nat}.
\begin{clm}[The meaning of reversibility]
  \label{clm:reversible}
  Fix a reversible tree-walking transducer and an input tree. Any configuration
  $C$ has at most one predecessor configuration, i.e.\ one configuration $C'$
  whose image by the computation-step function contains $C$ as a leaf.
\end{clm}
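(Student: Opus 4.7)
The plan is to reverse-engineer the predecessor from the given configuration $C = (q, p, v)$ by reading off each of its three components $(q', p'', v')$ in turn, using the provenance $p$ to pin down $v'$ and the leaf-injectivity hypothesis to pin down $(q', p'')$.

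First I would observe that the provenance $p$ in $C$ already determines uniquely the node $v'$ at which the predecessor sits. Indeed, by inspection of the substitution rule that turns the formal directions $\{\uparrow_\bullet, \circlearrowleft, \downarrow_1^\bullet, \dots\}$ appearing in outputs of $\delta_a$ into actual configurations: if $p =\; \downarrow_\bullet$, then $v$ is not the root and $v'$ must be its parent; if $p =\; \circlearrowleft$, then $v' = v$; if $p =\; \uparrow_j^\bullet$, then $v$ must have a $j$-th child and $v'$ is that child. In each case $v'$ is uniquely determined by $C$, and so is whether $v'$ is the root (which dictates whether the relevant transition function is $\delta_a$ or $\delta_a^{\mathrm{root}}$, where $a$ is the label of $v'$).

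Next I would identify the formal leaf label $y$ that the predecessor's output tree must contain in order to yield $C$ after substitution. By the same case analysis: $p =\; \downarrow_\bullet$ with $v$ being the $i$-th child of $v'$ forces $y = (q, \downarrow_i^\bullet)$; $p =\; \circlearrowleft$ forces $y = (q, \circlearrowleft)$; $p =\; \uparrow_j^\bullet$ forces $y = (q, \uparrow_\bullet)$. Thus the required leaf label $y \in Q \times \{\uparrow_\bullet, \circlearrowleft, \downarrow_1^\bullet, \dots, \downarrow_{\rk(a)}^\bullet\}$ is completely determined by $C$.

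Finally, the reversibility hypothesis says exactly that $\delta_a$ (and $\delta_a^{\mathrm{root}}$) is $(Q \times \{\uparrow_\bullet, \circlearrowleft, \downarrow_1^\bullet, \dots, \downarrow_{\rk(a)}^\bullet\})$-leaf-injective: there is at most one input $(q', p'')$ whose output tree contains the leaf label $y$ anywhere. Consequently the pair $(q', p'')$ is uniquely determined, and together with the already-determined $v'$ this fixes the predecessor configuration $C' = (q', p'', v')$ uniquely, proving the claim. There is essentially no obstacle: the whole argument is a direct verification that the leaf-injectivity condition was cooked up precisely to make the three-step inversion $C \mapsto v' \mapsto y \mapsto (q', p'')$ well-defined.
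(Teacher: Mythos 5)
Your proposal is correct and follows essentially the same argument as the paper's proof: use the provenance $p$ to locate the predecessor node $v'$ (and its label $a$), observe that the leaf label in $\delta_a(q',p')$ giving rise to $C$ is determined by $(q,p)$ via the one-to-one correspondence in the computation-step definition, and then invoke leaf-injectivity to recover $(q',p')$. Your version merely spells out the case analysis on $p$ that the paper leaves implicit.
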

\begin{proof}
	Let $(q,p,v) = C$, and assume that $(q',p',v') = C'$ exists. The provenance
	information $p$ tells us where $v'$ is situated in the input tree relatively
	to $v$, so we determine $v'$ along with its label $a$. Among the leaves of
	$\delta_a(q',p')$ (or $\delta^{\mathrm{root}}_a(q',p')$ if $v'$ is the root),
	there must be one of the form $(q,p)$ where $q$ and $p$ are related (in a
	one-to-one fashion) as defined just above Example~\ref{ex:twt-seq-nat}. By
	leaf-injectivity this uniquely determines $q'$ and $p'$.
\end{proof}

\section{The (almost) purely affine Interaction Abstract Machine}%
\label{sec:iam}

This section introduces the key technical tool for the proof of Theorem~\ref{thm:main-affine}. By
definition, computing the image of an input tree by a $\lambda$-transducer
involves normalizing a $\lambda$-term. Unfortunately, iterating
$\beta$-reductions until a normal form is reached requires too much working memory to be
implemented by a finite-state device, such as a tree-walking transducer. That is
why we rely on other ways to normalize terms of base type $o$, namely variants
of the Interaction Abstract Machine (IAM)~\cite{mackie_geometry_1995,DanosRegnierIAM,AccattoliLV20}.

\paragraph{The Purely Affine IAM}
Let us start with a machine that can normalize a purely affine term~$v : o$.
Intuitively, it moves a token around the \emph{edges} of the syntax tree of $v$; we represent the
situation where the token is on the edge connecting the subterm $t$ to its
context $C$ (such that $C\ctxholep{t}=v$) and is moving {\color{red}down}
(resp.\ {\color{blue}up}) as $C\ctxholep{\downred{t}}$ (resp.\
$C\ctxholep{\upblue{t}}$). The token carries a small amount of additional
information: a \enquote{tape} which is a stack using the symbols $\resm$ and
$\resmtwo$.

We reuse the formalism of tree-generating machines from the previous section, and we denote by $X^*$ (with the Kleene star) the set of lists with elements in $X$.
\begin{defi}\label{def:paiam}
  Let $v : o$ be purely affine. The tree-generating machine $\paiam(v)$ has:
  \begin{itemize}
  \item configurations of the form $(d,t,C,T)$ where $d \in \{\downp,\upp\}$,
    $C\ctxholep{t} = v$ and $T \in \{\resm,\resmtwo\}^*$ -- which we abbreviate
    as $(C\ctxholep{\downred{t}},T)$ when $d=\downp$ or
    $(C\ctxholep{\upblue{t}},T)$ when $d=\upp$;
  \item the initial configuration $(\downred{t},\varepsilon) =
    (\downp,t,\ctxhole,\varepsilon)$;
  \item the following computation-step function:
    \[
\begin{array}{rclcrcl}
  (C\ctxholep{\downred{t\,u}}, T) & \mapsto & (C\ctxholep{\downred{t}\,u}, \resm\cons T) & \qquad& (C\ctxholep{\upblue{t}\,u}, \resm\cons T)& \mapsto& (C\ctxholep{\upblue{t\,u}}, T)\\
  (C\ctxholep{t\,\upblue{u}}, T) & \mapsto & (C\ctxholep{\downred{t}\,u}, \resmtwo\cons T) & \qquad& (C\ctxholep{\upblue{t}\,u}, \resmtwo\cons T)& \mapsto& (C\ctxholep{t\,\downred{u}}, T)\\
  (C\ctxholep{\lambda x.\, \upblue{t}}, T) & \mapsto & (C\ctxholep{\upblue{\lambda x.\, t}}, \resm\cons T) & \qquad & (C\ctxholep{\downred{\lambda x.\, t}}, \resm\cons T) & \mapsto & (C\ctxholep{\lambda x.\, \downred{t}}, T) \\
  (C\ctxholep{\lambda x.\, D\ctxholep{\downred{x}}}, T) & \mapsto & (C\ctxholep{\upblue{\lambda x.\, D\ctxholep{x}}}, \resmtwo\cons T) &\qquad&
   (C\ctxholep{\downred{\lambda x.\, D\ctxholep{x}}}, \resmtwo\cons T) &\mapsto&
   (C\ctxholep{\lambda x.\, D\ctxholep{\upblue{x}}}, T)\\
  (C\ctxholep{\downred{c}},\resm^{\rk(c)}\cons T) & \mapsto & \multicolumn{5}{l}{c\bigl( (C\ctxholep{\upblue{c}},\resmtwo\cons\tape),(C\ctxholep{\upblue{c}},\resm\cons\resmtwo\cons\tape),\dots,(C\ctxholep{\upblue{c}},\resm^{\rk(c)-1}\cons\resmtwo\cons\tape) \bigr)}
\end{array}
\]
  \end{itemize}
\end{defi}
The last rule handles constants $c : o^{\rk(c)} \multimap o$ coming from a
fixed ranked alphabet~$\Sigma$. When $\rk(c)=0$, the right-hand side is simply
the tree with a single node labeled by $c$; otherwise, it is a tree of height 1,
whose leaves are all configurations.
\begin{exa}\label{ex:iam-count}
  Let $u$ and $(t_x)_{x\in\{a,b,c\}}$ be the terms defining the purely affine
  $\lambda$-transducer from Example~\ref{ex:lambda-count}. On the term $v = u\; (t_a\;
  (t_b\;t_c)\; t_c)$, we have the following IAM execution depicted in \Cref{fig:ex}.

 \begin{figure}
 	\begin{align*}
 		(\downred{v},\stempty)
 		&\rightsquigarrow^{\phantom{*}}
 		(\downred{u}\; (t_a\; (t_b\;t_c)\; t_c),\resm)
 		\rightsquigarrow
 		((\lambda f.\; \downred{f\; 0})\; (t_a\; (t_b\;t_c)\; t_c),\stempty)
 		\rightsquigarrow
 		((\lambda f.\; \downred{f}\; 0)\; (\dots),\resm)\\
 		&\rightsquigarrow^{\phantom{*}}
 		(\upblue{(\lambda f.\; f\; 0)}\; (t_a\; (t_b\;t_c)\; t_c),\resmtwo\resm)
 		\rightsquigarrow
 		(u\; \downred{(t_a\; (t_b\;t_c)\; t_c)},\resm)
 		\rightsquigarrow^2
 		(u\; (\downred{t_a}\; (t_b\;t_c)\; t_c),\resm\resm\resm)\\
 		&\rightsquigarrow^4
 		(u\; ((\lambda \ell.\; \lambda r.\; \lambda x.\; \downred{\ell}\; (r\; x))\; (t_b\;t_c)\; t_c),\resm)
 		\rightsquigarrow
 		(u\; (\upblue{t_a}\; (t_b\;t_c)\; t_c),\resmtwo\resm)
 		\rightsquigarrow^2
 		(\dots\downred{t_b}\dots,\resm\resm)\\
 		&\rightsquigarrow^3
 		(u\; (t_a\; ((\lambda f.\; \lambda x.\; \downred{S}\;(f\;x))\;t_c)\; t_c),\resm) \rightsquigarrow
 		\underbracket{S}_{\mathclap{\qquad\qquad\qquad\qquad\text{output node (will be the root of the output tree)}}}((u\; (t_a\; ((\lambda f.\; \lambda x.\; \upblue{S}\;(f\;x))\;t_c)\; t_c),\resmtwo))\\[-4mm]
 		&\rightsquigarrow^* \ldots [\text{many steps}] \ldots\\
 		&\rightsquigarrow^*
 		S(S((u\;(\dots\downred{t_c}),\resm)))
 		\rightsquigarrow
 		S^3((u\; (t_a\; (t_b\;t_c)\; \upblue{S}),\resmtwo))
 		\rightsquigarrow
 		S^3((u\; (\downred{t_a\; (t_b\;t_c)}\; S),\resmtwo\resmtwo))
 		\\
 		&\rightsquigarrow^2 S^3((u\; ((\lambda \ell.\; \downred{\lambda r.\; \lambda x.\; \ell\; (r\; x)})\; (t_b\;t_c)\; t_c),\resmtwo\resmtwo))
 		\rightsquigarrow S^3((\dots(\upblue{r}\; x)\dots,\resmtwo))\\
 		&\rightsquigarrow^{\phantom{*}} S^3((\dots(r\; \downred{x})\dots,\stempty))
 		\rightsquigarrow S^3((u\; ((\lambda \ell.\; \lambda r.\; \upblue{\lambda x.\; \ell\; (r\; x)})\; (t_b\;t_c)\; t_c),\resmtwo))\\
 		&\rightsquigarrow^2
 		S^3((u\; (\upblue{t_a}\; (t_b\;t_c)\; t_c),\resm\resm\resmtwo))
 		\rightsquigarrow^2
 		S^3((u\; \upblue{(t_a\; (t_b\;t_c)\; t_c)},\resmtwo))
 		\rightsquigarrow
 		S^3((\downred{u}\dots,\resmtwo\resmtwo))
 		\\
 		&\rightsquigarrow^{\phantom{*}}
 		S^3(((\lambda f.\; \upblue{f}\; 0)\dots,\resmtwo))
 		\rightsquigarrow
 		S^3(((\lambda f.\; f\; \downred{0})\dots,\stempty))
 		\rightsquigarrow S(S(S(0)))
 	\end{align*}
 	\caption{Execution trace of the term $v = u\; (t_a\;
 		(t_b\;t_c)\; t_c)$, where $u$ and $(t_x)_{x\in\{a,b,c\}}$ are the terms defining the purely affine
 		$\lambda$-transducer from Example~\ref{ex:lambda-count}. $C \rightsquigarrow^n C'$ means that $C$ rewrites into $C'$ in $n$ steps. }
 	\label{fig:ex}
 \end{figure} 
\end{exa}

Aside from our bespoke extension dedicated to constants from $\Sigma$, all the
other rules in Definition~\ref{def:paiam} come from the \enquote{Linear IAM} described by
Accattoli et al.~\cite[Section~3]{AccattoliLV20} (see
also~\cite[\S3.1]{vanoniPhD}) -- we refer to those papers for high-level
explanations of these rules. Despite its name, the Linear IAM also works for
affine terms (cf.~\cite[\S3.3.4]{MazzaHDR}).
In particular, when run on closed normal forms of type $o$, namely encoding of trees by Prop.~\ref{prop:encoding}, the IAM outputs the encoded tree.
\begin{prop}\label{prop:paiam-on-normal-form}
	For each tree $\tau\in \Tree(\Sigma)$,
	$(\red{\underline{\widetilde{\tau}}},\stempty)\rightsquigarrow^* \tau$.
\end{prop}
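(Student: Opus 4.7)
The plan is to prove the statement by induction on $\sigma\in\Tree(\Sigma)$, after strengthening it to allow an arbitrary ``tree-like'' surrounding context: for every context $\ctx$ such that $\ctxp{\widetilde{\sigma}}$ is a closed purely affine term of type $o$, one has
\[ (\ctxp{\downred{\widetilde{\sigma}}}, \stempty) \rightsquigarrow^* \sigma \]
in the tree-generating machine $\paiam(\ctxp{\widetilde{\sigma}})$. The desired statement is then the special case $\ctx = \ctxhole$. The base case $\sigma = c$ with $\rk(c) = 0$ follows immediately from the constant rule: with empty tape and an arity-$0$ constant, $(\ctxp{\downred{c}}, \stempty)$ rewrites in one step to the one-node tree $c$.

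For the inductive step, write $\sigma = c(\sigma_1,\ldots,\sigma_k)$ with $k=\rk(c)\ge 1$, so that $\widetilde{\sigma} = c\,\widetilde{\sigma_1}\cdots\widetilde{\sigma_k}$. First, $k$ applications of the down-left rule push $\resm^k$ onto the tape and bring the token down to the constant, reaching $(\ctxp{\downred{c}\,\widetilde{\sigma_1}\cdots\widetilde{\sigma_k}}, \resm^k)$. The constant rule then fires, producing the partial output $c(D_0, D_1, \ldots, D_{k-1})$ with leaves $D_j = (\ctxp{\upblue{c}\,\widetilde{\sigma_1}\cdots\widetilde{\sigma_k}}, \resm^j\cons\resmtwo)$. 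Each $D_j$ performs $j$ up-left steps (consuming the $\resm$'s), followed by one up-left-with-$\resmtwo$ step that flips the direction, landing at
\[ (\ctxp{c\,\widetilde{\sigma_1}\cdots\widetilde{\sigma_j}\,\downred{\widetilde{\sigma_{j+1}}}\,\widetilde{\sigma_{j+2}}\cdots\widetilde{\sigma_k}}, \stempty). \]
Invoking the induction hypothesis on $\sigma_{j+1}$ with the updated tree-like context $\ctxp{c\,\widetilde{\sigma_1}\cdots\widetilde{\sigma_j}\,\ctxhole\,\widetilde{\sigma_{j+2}}\cdots\widetilde{\sigma_k}}$ then produces $\sigma_{j+1}$ as the normal form of the $j$-th subcomputation. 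Gluing all $k$ branches under $c$ yields the desired $\sigma$.

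The crux is picking the right generalization: leaving $\ctx$ fixed to $\ctxhole$ would make the recursive subcomputations inaccessible, as each child $D_j$ ultimately points inside a non-trivial context. Once the induction hypothesis allows arbitrary tree-like contexts, the rest is a mechanical trace through IAM transitions, using only the application rules and the constant rule; the encoding $\widetilde{\sigma}$ contains neither $\lambda$-abstractions, let-bindings, nor variables, so no other rule can ever apply.
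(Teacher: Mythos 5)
Your proposal is correct and follows essentially the same route as the paper's proof: the same strengthening to an arbitrary surrounding context, the same structural induction, and the same explicit trace through the application and constant rules. The step counts and intermediate configurations you give match the paper's computation exactly.
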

\begin{proof}
	We strengthen the statement, considering the reduction $(\ctxp{\red{\underline{\widetilde{\tau}}}},\stempty)\rightsquigarrow^* \tau$. Then, we proceed by induction on the structure of $\tau$. If $\tau$ is a leaf, then we have $\widetilde{\tau}=c$ and $\rk(c)=0$. Thus, we have
	$  ( \ctxp{\red{\underline{c}}},  \stempty) \rightsquigarrow
	c$, which concludes this case. Otherwise $\tau$ is not a leaf, and in particular it is of the form $c\; (\tau_1\ldots \tau_k)$, where $k=\rk(c)\geq 1$. Then we have:
	\[
	\begin{array}{rll}(\ctxp{\red{\underline{c\; \widetilde{\tau}_1\ldots \widetilde{\tau}_k}}},\stempty) &\rightsquigarrow^k & (\ctxp{ \red{\underline{c}}\; \widetilde{\tau}_1\ldots \widetilde{\tau}_k},\resm^k)\\
		&\rightsquigarrow&c\;\bigl(  (\ctxp{\blue{\overline{c}}\; \widetilde{\tau}_1\ldots \widetilde{\tau}_k}), \resmtwo,
		\dots,
		( \ctxp{\blue{\overline{c}}\; \widetilde{\tau}_1\ldots \widetilde{\tau}_k}  ,\resm^{k-1}\cons\resmtwo)\bigr)\\
		&\rightsquigarrow^*& c\;\bigl((\ctxp{c\; \red{\underline{\widetilde{\tau}_1}} \widetilde{\tau}_2\ldots \widetilde{\tau}_k}, \stempty),
		\dots,
		( \ctxp{c\;\widetilde{\tau}_1\ldots \widetilde{\tau}_{k-1}\red{\underline{\widetilde{\tau}_k}}} , \stempty)\bigr)\\
		&\rightsquigarrow^*& c\; (\tau_1\ldots \tau_k)
	\end{array}
	\]
	(the last reduction comes from applying the induction hypothesis to each configuration).
\end{proof}

\begin{thm}[Soundness of the purely affine IAM]\label{thm:paiam-soundness}
  For any purely affine term $v : o$, the output of $\paiam(v)$ is the unique
  $\tau\in\Tree(\Sigma)$ such that $\widetilde\tau$ is the normal form of $v$.
\end{thm}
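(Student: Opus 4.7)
The strategy combines two ingredients. The first is Proposition~\ref{prop:paiam-on-normal-form}, which handles the case where $v$ already equals an encoding $\widetilde\tau$. The second is a single-step invariance lemma: if $v_1 \to_\beta v_2$ with $v_1, v_2$ purely affine of type $o$, then $\paiam(v_1)$ and $\paiam(v_2)$ produce the same output. Granting both, the theorem follows by normalizing $v \to_\beta^* v'$ (Proposition~\ref{prop:norm-sr}), observing that $v' : o$ remains purely affine (Proposition~\ref{prop:normalization-simplifies}), identifying $v' = \widetilde\tau$ for a unique $\tau$ via Proposition~\ref{prop:encoding}, chaining single-step invariance along this reduction, and finally applying Proposition~\ref{prop:paiam-on-normal-form}.

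The technical heart is the invariance lemma. Because purely affine terms contain no let-bindings or $\oc$-boxes, the rule collapses to ordinary $\beta$-reduction $\ctxp{(\lambda x.\; t)\; u} \to_\beta \ctxp{t\{x:=u\}}$. I would exhibit a bisimulation between the configurations of $\paiam(v_1)$ and those of $\paiam(v_2)$, by induction on the context $\ctx$ around the redex. Configurations whose token lies outside the redex correspond trivially to the same configurations on both sides. Inside the redex, a configuration of $\paiam(v_2)$ whose token lies inside the substituted subterm $t\{x:=u\}$ corresponds to a configuration of $\paiam(v_1)$ either inside $t$ (when the after-token is outside the substituted copies of $u$) or inside $u$ (after being ``teleported'' through the $x$-occurrence), with identical tape. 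One then checks that corresponding configurations have computation-step images that correspond again and yield the same constant-labelled tree fragments.

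The main obstacle is the careful case analysis around the $\lambda$-abstraction and the bound $x$-occurrence: in $\paiam(v_1)$ the ``substitution jump'' is implemented by a sequence of micro-moves (entering the $\lambda$ with $\resm$ pushed, descending into $t$, hitting $x$, turning around and pushing $\resmtwo$, popping back across the application, and finally entering $u$), whereas in $\paiam(v_2)$ the same jump is direct. Affineness is essential here: since $x$ has at most one occurrence in $t$, the bisimulation needs to track a single teleportation link per redex, so no duplication bookkeeping is required. Our bespoke rule for constants from $\Sigma$ poses no additional difficulty, because $\beta$-reduction neither creates nor destroys such constants, and their surrounding positions are preserved by the bisimulation. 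This invariance is essentially the affine specialisation of the classical geometry-of-interaction soundness theorem underlying the Linear IAM of~\cite{AccattoliLV20}, so in practice I would either invoke their result directly or redo the induction by hand on the context surrounding the redex.
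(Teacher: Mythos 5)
Your proposal matches the paper's own proof: both reduce the theorem to (i) correctness of $\paiam$ on normal forms (Proposition~\ref{prop:paiam-on-normal-form}) and (ii) invariance of the produced tree under $\beta$-reduction, the latter obtained by relating configurations across a redex exactly as you describe — positions outside the redex unchanged, positions inside $t\{x:=u\}$ matched to a position in $t$ or, after the teleportation through the bound occurrence of $x$, in $u$, with the same tape. The only presentational difference is that the paper packages this relation as an \emph{improvement} in the sense of~\cite{AccattoliLV20} rather than a plain weak bisimulation; the quantitative side-conditions of improvements are what make convergence (and not merely the emitted tree nodes) transfer between $\paiam(v)$ and $\paiam(\widetilde\tau)$, a point your sketch should state explicitly since the micro-move detour on the redex side must be matched by at least as many steps as the direct jump on the reduct side.
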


This comes from fact that what is computed by the IAM, in our case the tree, is invariant by $\beta$-reduction. Since this is a minor variant of standard results (see e.g.~\cite{AccattoliLV20,Parsimonious}), we relegate the proof to \Cref{sec:paiam-soundness}.


\paragraph{Bounding Tapes via a Typing Invariant.}

The size of tapes $T\in\{\resm,\resmtwo\}^*$ can be controlled by leveraging the type
system. The idea is that a tape that appears in an IAM run \enquote{points to}
an occurrence of the base type $o$ in the type of the current subterm.
Formally, we define inductively:
\[ A \lightning \varepsilon = A \qquad (A \multimap B) \lightning (\resmtwo\cons T)
  = A\lightning T \qquad (A \multimap B) \lightning (\resm\cons T) = B\lightning T  \]
(thus, $o \lightning T$ is undefined for $T \neq \varepsilon$). We then have the
following invariant on configurations:
\begin{prop}[compare with~{\cite[Lemma~32 in \S3.3.5]{MazzaHDR}}]\label{prop:typing-invariant}
  Suppose that either $(C\ctxholep{\downred{t}},T)$ or
  $(C\ctxholep{\upblue{t}},T)$ appears in a run of $\paiam(v)$ for some $v:o$. If the
  (not necessarily closed) term $t$ is given the type $A$ as part of a typing
  derivation for $v : o$, then $A \lightning T = o$.
\end{prop}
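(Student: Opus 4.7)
The plan is to prove the invariant by induction on the number of steps in the IAM run that leads to the configuration in question. The base case is the initial configuration $(\downred{v},\varepsilon)$: the focused subterm is $v$ itself, which has type $o$ by assumption, and $o \lightning \varepsilon = o$ by definition.

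For the inductive step, I would do a case analysis on the rewriting rule applied to reach the configuration, using the inductive hypothesis on the previous configuration. In each case the typing derivation for $v : o$ fixes the types of the relevant subterms, and the move preserves the invariant essentially by definition of $\lightning$. For example, consider the pair of rules involving an application $t\,u$ where $t : A' \multimap B$ and $u : A'$, so $t\,u : B$. The \enquote{descend into function} rule $(\ctxp{\downred{t\,u}}, T) \mapsto (\ctxp{\downred{t}\,u}, \resm\cons T)$ takes us from an invariant $B\lightning T = o$ to $(A'\multimap B)\lightning(\resm\cons T)$, which unfolds to $B\lightning T$, so the invariant is preserved. The other application rules, as well as the two pairs of rules for $\lambda$-abstraction (entering/leaving the body, and the variable rules) and their symmetric \enquote{up} counterparts all follow the same pattern: each push or pop of a $\resm$ corresponds to peeling off a $\multimap$ on the right, each push or pop of a $\resmtwo$ corresponds to peeling it off on the left, matching exactly the clauses of $\lightning$.

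The constant rule requires slightly more care. If $c : o^{\rk(c)} \multimap o$ and the invariant on the left is $(o^{\rk(c)} \multimap o)\lightning(\resm^{\rk(c)}\cons T) = o$, then by $\rk(c)$ unfoldings this equals $o \lightning T$, which forces $T = \varepsilon$. For each produced configuration $(\ctxp{\upblue{c}}, \resm^i \cons \resmtwo \cons T)$ with $0 \le i < \rk(c)$, we compute $(o^{\rk(c)} \multimap o)\lightning(\resm^i\cons\resmtwo\cons T) = (o^{\rk(c)-i}\multimap o)\lightning(\resmtwo\cons T) = o\lightning T = o$, so the invariant holds for each leaf of the spawned tree.

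I do not anticipate a real obstacle: the definition of $\lightning$ is literally designed to mirror the tape discipline of the IAM, so every rule balances. The only point that deserves explicit mention is that the typing derivation for $v$ is fixed once and for all, and that each rule preserves the typing annotations on the subterms that remain unchanged (so the hypothesis \enquote{$t$ is given the type $A$ as part of a typing derivation for $v$} is stable along the run); this is a straightforward consequence of the fact that the IAM only moves a token and never modifies the underlying term $v$.
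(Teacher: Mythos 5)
Your proof is correct and is exactly the standard argument: the paper itself omits the proof of this proposition (deferring to the cited external lemma), and the intended justification is precisely the induction on the length of the run with a case analysis showing that each IAM rule's push/pop of $\resm$ or $\resmtwo$ matches the corresponding clause of $\lightning$. Your treatment of the constant rule (deducing $T=\varepsilon$ before the step and re-checking each spawned leaf) and your remark that the fixed typing derivation assigns a stable type to each subterm occurrence are the only points needing care, and you handle both correctly.
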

Since $|T| \leq \mathrm{height}(A)$ is a necessary condition for $A \lightning
T$ to be defined, we directly get:
\begin{cor}\label{cor:bound-height}
  The sizes of the tapes that appear in a run of $\paiam(v)$ are bounded by the
  maximum, over all subterms $t$ of $v$, of the height of the syntax tree of the
  type of $v$.
\end{cor}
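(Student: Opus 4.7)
The plan is to derive the bound by composing two ingredients: Proposition~\ref{prop:typing-invariant} supplied just above, and a small numerical lemma about the partial operation $\lightning$ stating that whenever $A \lightning T$ is defined, we have $|T| \leq \mathrm{height}(A)$, where I take $\mathrm{height}(o) = 0$ and $\mathrm{height}(A \multimap B) = 1 + \max(\mathrm{height}(A), \mathrm{height}(B))$ (the purely affine restriction means no $\oc$ can occur in $A$, so these two clauses suffice).

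First I would prove the lemma by a short induction on $|T|$. The empty-tape case is immediate, since heights are non-negative. For a non-empty tape $s \cons T'$, the defining clauses of $\lightning$ force $A$ to be an arrow type $B \multimap C$ (otherwise $A \lightning (s \cons T')$ would be undefined), and $A \lightning (s \cons T')$ then unfolds to either $B \lightning T'$ or $C \lightning T'$ depending on whether $s$ is $\resmtwo$ or $\resm$. The induction hypothesis bounds $|T'|$ by $\mathrm{height}(B)$ or by $\mathrm{height}(C)$, both of which are at most $\mathrm{height}(A) - 1$, so $|T| \leq \mathrm{height}(A)$.

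I would then invoke Proposition~\ref{prop:typing-invariant}: for any configuration $(C\ctxholep{\downred{t}}, T)$ or $(C\ctxholep{\upblue{t}}, T)$ reached during a run of $\paiam(v)$, fix a typing derivation of $v : o$ and let $A$ be the type assigned to $t$ inside that derivation. The proposition gives $A \lightning T = o$, so in particular $A \lightning T$ is defined, and the lemma then yields $|T| \leq \mathrm{height}(A)$. Since $t$ is a subterm of $v$, this is dominated by the maximum of $\mathrm{height}$ over types of subterms of $v$, as claimed.

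There is no serious obstacle to anticipate: the whole content of the corollary is the single inequality $|T| \leq \mathrm{height}(A)$, which is precisely what the remark immediately preceding the statement announces. The only care required is to be precise about the definition of \enquote{height} and to line up the inductive case analysis on tape symbols with the corresponding clauses of $\lightning$.
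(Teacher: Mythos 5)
Your proposal is correct and follows exactly the route the paper intends: the paper states the corollary as an immediate consequence of Proposition~\ref{prop:typing-invariant} together with the observation that $|T| \leq \mathrm{height}(A)$ is necessary for $A \lightning T$ to be defined, and your induction on $|T|$ merely spells out that observation. Nothing is missing.
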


\paragraph{The Almost Purely Affine Case.}

To handle the second half of Theorem~\ref{thm:main-affine}, we add non-standard rules
for let-bindings and !-boxes to the Interaction Abstract Machine.
\begin{defi}\label{def:apaiam}
  Let $v : o$ be an almost purely affine term. $\apaiam(v)$ is the extension of
  $\paiam(v)$ (cf.\ Definition~\ref{def:paiam}) with the following new cases in the
  computation-step function:
  \begin{align*}
    (C\ctxholep{\downred{\ttletin{\oc x = u}{t}}},T) &\mapsto (C\ctxholep{\ttletin{\oc x = u}{\downred{t}}},T) \qquad\quad (C\ctxholep{\downred{\oc t}},T) \mapsto (C\ctxholep{\oc\downred{t}},T)\\
    (C\ctxholep{\ttletin{\oc x = u}{\upblue{t}}},T) &\mapsto (C\ctxholep{\upblue{\ttletin{\oc x = u}{t}}},T) \\
    (C\ctxholep{\ttletin{\oc x = u}{D\ctxholep{\downred{x}}}},T) &\mapsto (C\ctxholep{\ttletin{\oc x = \downred{u}}{D\ctxholep{x}}},T)
  \end{align*}
\end{defi}
\begin{clm}\label{clm:typing-invariant}
  Proposition~\ref{prop:typing-invariant} extends to $\apaiam(v)$ with $\oc A \lightning T
  = A \lightning T$.
\end{clm}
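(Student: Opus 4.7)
The plan is to extend the inductive proof of Proposition~\ref{prop:typing-invariant} to cover the four new rules of Definition~\ref{def:apaiam}, maintaining the same invariant $A \lightning T = o$ throughout any run of $\apaiam(v)$, where $A$ is the type assigned by some fixed typing derivation of $v : o$ to the current subterm $t$ and $T$ is the current tape.

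First I would sanity-check that the extended clause $\oc A \lightning T = A \lightning T$ is consistent with the original recursive definition of $\lightning$. In the almost purely affine regime, any occurrence of $\oc$ in a well-formed type wraps only the base type $o$, so the new clause amounts to $\oc o \lightning \varepsilon = o \lightning \varepsilon = o$, with both sides undefined for $T \neq \varepsilon$. The intuition is that $\oc$ becomes \emph{transparent} for tape bookkeeping, mirroring the fact that none of the new machine rules touches the tape.

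The original $\paiam$ rules are handled verbatim as in Proposition~\ref{prop:typing-invariant}; the new clause plays no role there, since those rules only inspect $\multimap$-redexes, $\lambda$-abstractions and linearly bound variables. For the four new rules, each case is a one-line type chase. The two let-traversal rules (down-entering and up-exiting the body) move between $\ttletin{\oc x = u}{t}$ and its body $t$, both of which share the same type $B$, so $B \lightning T = o$ is trivially preserved. The down-rule for boxes takes $\oc t : \oc A$ to $t : A$ without touching the tape, and the new clause $\oc A \lightning T = A \lightning T$ is exactly what is needed to transport the invariant from the outer position to the inner one. Finally, the variable-to-definition rule takes $x : A$ inside the body to $u : \oc A$ in the binder, again without modifying the tape; once more, the new clause carries $A \lightning T = o$ to $\oc A \lightning T = o$.

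There is no real obstacle beyond careful bookkeeping of what types the fixed typing derivation of $v$ assigns at each position: the conceptual content of the claim is entirely absorbed into the extended clause $\oc A \lightning T = A \lightning T$, which is designed so that the rules for $\oc$-boxes and let-binders do the right thing on types. The corresponding analogue of Corollary~\ref{cor:bound-height} then follows at once, since $\oc$ and $\multimap$ contribute in the same way to the height bound.
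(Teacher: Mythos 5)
Your proposal is correct and is exactly the routine case analysis the paper has in mind: the paper states this as an unproved Claim (the base Proposition itself is only cited to \cite[Lemma~32]{MazzaHDR}), and the expected verification is precisely your check that the two let-traversal rules preserve the type unchanged, while the box rule ($\oc t : \oc A \leadsto t : A$) and the variable-to-definition rule ($x : A \leadsto u : \oc A$) leave the tape untouched and are handled by the new clause $\oc A \lightning T = A \lightning T$. Nothing further is needed.
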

The last rule (and the rule for !-boxes) in Definition~\ref{def:apaiam} break(s) a key duality principle at work in the purely affine IAM
(and suggested by the layout of Definition~\ref{def:paiam}):
if any rule -- except the one for constants from $\Sigma$ -- sends a
configuration $\kappa_1$ to another configuration~$\kappa_2$, then there is a
dual rule sending $\kappa_2^\perp$ to~$\kappa_1^\perp$, where
$(C\ctxholep{\downred{t}},T)^\perp=(C\ctxholep{\upblue{t}},T)$ and conversely
$(C\ctxholep{\upblue{t}},T)^\perp=(C\ctxholep{\downred{t}},T)$.

In fact, our new rule for let-bound variables $\downred{x}$ \emph{cannot} have a dual, because it is not injective.
Indeed, consider a term of the form $C\ctxholep{\ttletin{\oc x = u}{t}}$ where $t$ contains multiple occurrences of $x$, i.e.\ $t = D_1\ctxholep{x} = D_2\ctxholep{x}$ for some contexts $D_1 \neq D_2$ -- this may happen, since let-bound variables are not affine. Then for any $T\in\{\resm,\resmtwo\}^*$, the computation-step function sends both $(C\ctxholep{\ttletin{\oc x = u}{D_1\ctxholep{\downred{x}}}},T)$ and $(C\ctxholep{\ttletin{\oc x = u}{D_2\ctxholep{\downred{x}}}},T)$ to the same configuration $(C\ctxholep{\ttletin{\oc x = \downred{u}}{t}},T)$ due to the $\downred{x}$ rule. This is why \emph{reversible} TWTs can simulate the purely affine IAM, but not the \emph{almost} purely affine IAM.

To be sure that the missing dual rule is unnecessary, we show that
configurations of the form $(C\ctxholep{\ttletin{\oc x = \upblue{u}}{t}},T)$
cannot occur in an actual run. We need an invariant for reachable configurations first.
\begin{prop}\label{prop:dir}
	If $(C\ctxholep{\upblue{t}},T)$ (resp.\ $(C\ctxholep{\downred{t}},T)$) appears in a run of $\apaiam(v)$ for some almost purely affine $v : o$, then $T$ contains an odd (resp.\ even) number of $\resmtwo$.
\end{prop}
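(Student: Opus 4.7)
The plan is to proceed by induction on the number of rewriting steps from the initial tree, which consists of the single configuration $(\downred{v}, \varepsilon)$. The base case is immediate: the tape $\varepsilon$ contains zero $\resmtwo$s, and zero is even, matching the direction $\downp$. For the inductive step, whenever one configuration leaf is replaced via the computation-step function by a subtree, I need to show that all new configuration leaves produced satisfy the stated parity condition, assuming the replaced leaf did.

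The case analysis covers the rules of Definitions~\ref{def:paiam} and~\ref{def:apaiam}, and a useful way to organize it is:
\begin{itemize}
  \item The four rules that push or pop $\resm$ (the rules for $t\,u$ on the left, and for $\lambda x.\,t$ passing through the binder) preserve both the direction and the count of $\resmtwo$s in the tape, so the invariant is preserved trivially.
  \item The four rules that push or pop $\resmtwo$ (for $t\,u$ on the right, and for the let-\emph{free} variable $x$) both flip the direction and change the number of $\resmtwo$s by exactly one; so they swap the parity on both sides of the condition simultaneously.
  \item The constant rule sends a down configuration with tape $\resm^{\rk(c)}\cons T$ (which has the same $\resmtwo$ count as $T$, assumed even) to up configurations with tapes of the form $\resm^i\cons\resmtwo\cons T$, each of which has one more $\resmtwo$ than $T$, hence odd.
  \item The four new rules in Definition~\ref{def:apaiam} for let-bindings and $\oc$-boxes all preserve both the direction (down-to-down or up-to-up) and the tape entirely, so they preserve the invariant.
\end{itemize}

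There is no real obstacle: the statement is a syntactic invariant verified by routine inspection of each rule. The only point worth flagging is that a single computation step may introduce several new configuration leaves (in the constant-rule case), so one must check that the invariant holds for \emph{each} of them; as noted above, all of them acquire one fresh $\resmtwo$, so they uniformly flip from even/down to odd/up.
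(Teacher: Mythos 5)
Your proof is correct: the paper states this proposition without proof, treating it as a routine invariant, and your induction on the length of the run with a rule-by-rule parity check is exactly the intended argument (every rule either preserves both the direction and the $\resmtwo$-count, or flips both the direction and the parity of the count, with the constant rule handled separately for its multiple up-moving leaves). The only cosmetic quibble is the phrase \enquote{let-free variable}, which should read \enquote{$\lambda$-bound variable} to distinguish the rules for $\lambda$-bound occurrences (which push/pop $\resmtwo$) from the new rule for let-bound occurrences in Definition~\ref{def:apaiam} (which leaves the tape untouched).
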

Now we can show the following:
\begin{prop}
	For any almost purely affine $v : o$:
	\begin{enumerate}
		\item $(C\ctxholep{\ttletin{\oc x = \upblue{u}}{t}},T)$ cannot appear in a run of $\apaiam(v)$;
		\item $(C\ctxholep{\oc\upblue{t}},T)$ cannot appear in a run of $\apaiam(v)$;
	\end{enumerate} 
\end{prop}
\begin{proof}
	We prove the first point, the second one is equivalent. Assume the opposite for the sake of
	contradiction. The typing rule for let-bindings forces the type of $u$ to have
	the form $\oc A$, and by almost pure affineness, it must be $\oc o$.
	By Claim~\ref{clm:typing-invariant}, $\oc o \lightning T = o\lightning T$; therefore, $T =
	\varepsilon$, which is not of odd length and thus contradicts  Prop.~\ref{prop:dir}.
\end{proof}
Having ruled out these problematic configurations, we can establish soundness for the almost purely affine IAM exactly as before, extending Theorem~\ref{thm:paiam-soundness}.
\begin{prop}[Soundness of the almost purely affine IAM, proved in \Cref{sec:paiam-soundness}]\label{prop:apaiam-soundness}
	For any almost purely affine term $v : o$, the output of $\apaiam(v)$ is the unique
	$\tau\in\Tree(\Sigma)$ such that $\widetilde\tau$ is the normal form of $v$.
\end{prop}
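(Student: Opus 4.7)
The plan is to adapt the proof of Theorem~\ref{thm:paiam-soundness} from the purely affine setting to this slightly more general one. Recall that that earlier soundness result is proved in two steps: (i) the IAM outputs the correct tree when run on a normal form of type $o$, and (ii) the output tree is invariant under $\beta$-reduction. I would prove Proposition~\ref{prop:apaiam-soundness} along the exact same lines.

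For step (i), let $\tau\in\Tree(\Sigma)$ be such that $\widetilde{\tau}$ is the normal form of $v$. By subject reduction (Proposition~\ref{prop:norm-sr}), $\widetilde{\tau}:o$, and by Proposition~\ref{prop:normalization-simplifies} this normal form is actually purely affine, so contains no let-binders or $\oc$-boxes. Consequently the new rules of Definition~\ref{def:apaiam} never fire during $\apaiam(\widetilde{\tau})$, so its run coincides with that of $\paiam(\widetilde{\tau})$; Proposition~\ref{prop:paiam-on-normal-form} then gives the output $\tau$.

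For step (ii), I would establish, for each $\beta$-step $v\to_\beta v'$, a simulation between the runs of $\apaiam(v)$ and $\apaiam(v')$ that preserves the generated tree. There are two redex shapes to handle: the standard $L\ctxholep{\lambda x.t}\, u \to_\beta L\ctxholep{t\{x:=u\}}$ and the let-based $\ttletin{\oc x = L\ctxholep{\oc u}}{t} \to_\beta L\ctxholep{t\{x:=u\}}$. In both cases the argument is the usual one from GoI-style soundness: each occurrence of $x$ in the reduct corresponds, in the redex, to a sub-run of the IAM that enters $u$ via the $\downred{\lambda x.\,D\ctxholep{x}}$ rule (respectively the new $\downred{x}$ rule for let-bindings) and exits symmetrically through the dual up-rule. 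Before carrying out this matching, one must verify, using Proposition~\ref{prop:dir} and Claim~\ref{clm:typing-invariant}, that the forbidden configurations $(C\ctxholep{\ttletin{\oc x = \upblue{u}}{t}}, T)$ and $(C\ctxholep{\oc\upblue{t}}, T)$ never appear in a well-started run, so that the absence of dual rules for the new let/box rules creates no gap in the simulation.

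The main obstacle is the non-reversibility of the new $\downred{x}$ rule: several occurrences of a let-bound $x$ in the body collapse to a single entry into $u$. Thus one does not obtain a bijection between runs of $\apaiam(v')$ and $\apaiam(v)$ but only a functional simulation. However, since what we are proving preserved is the \emph{output tree} (produced only at constants from $\Sigma$), this is harmless: in $\apaiam(v')$ each occurrence of $x$ independently triggers an independent visit of $u$ in $\apaiam(v)$, each producing the same subtree, which is exactly the tree-level meaning of the substitution $\{x:=u\}$. Once the per-$\beta$-step simulation is in place, a straightforward induction on the reduction length from $v$ to $\widetilde{\tau}$ combines steps (i) and (ii) into the desired soundness statement.
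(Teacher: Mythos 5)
Your proposal is correct and follows essentially the same route as the paper: correctness on normal forms (where, as you note, the new rules never fire) combined with invariance of the output under $\beta$-reduction, handling the non-injective let-variable rule by checking via Proposition~\ref{prop:dir} and Claim~\ref{clm:typing-invariant} that the configurations lacking a dual rule never occur. The paper merely packages your step (ii) as an extension of the \enquote{improvement} relation $\relim$ of Accattoli et al.\ with extra clauses for let-redexes, which is precisely the run-matching simulation you sketch.
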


\section{Simulating $\lambda$-transducers by tree-walking: proof of Theorem~\ref{thm:main-affine}}

Fix an almost purely affine $\lambda$-transducer given by $u : A \multimap o$ and $t_a : A^{\rk(a)} \multimap A$ for $a\in\Gamma$. Thanks to Proposition~\ref{prop:norm-sr}, we may
assume that $u$ and every $t_a$ are in normal form.
Proposition~\ref{prop:normalization-simplifies} then tells us that since $A$ is almost purely
affine, so are these terms -- and therefore, so is $u\,
\widehat\tau((t_a)_{a\in\Gamma}) : o$ for any input $\tau\in\Tree(\Gamma)$.
Thus, we can use the almost purely affine IAM to compute the tree encoded by its normal
form -- which, by definition, gives us the image of $\tau$ by our
$\lambda$-transducer.

To prove Theorem~\ref{thm:main-affine}, we simulate $\apaiam(u\, \widehat\tau((t_a)_{a\in\Gamma}))$ by a tree-walking transducer running on $\tau$. The states of this TWT are defined as follows. Let $H$ be a bound on the length of tapes that can appear in a run of $\apaiam(u\, \widehat{\tau}((t_a)_{a\in\Gamma}))$, independent of $\tau\in\Tree(\Gamma)$ -- we know such a bound exists thanks to Corollary~\ref{cor:bound-height} and Claim~\ref{clm:typing-invariant}. For a formal symbol $\mathsf{X} \in \{\mathsf{I,U,T},\nabla,\Delta\}$, we write $\mathsf{X}(x,y,z)$ for the tuple $(x,y,z)$ tagged with the constructor name $\mathsf{X}$ (think of algebraic data types in functional programming). We also use the formal symbols $\lozenge_i$ as constants in $\lambda$-terms. 
\begin{defi}
	A state is one of the following, for $d\in\{\upp,\downp\}$ and $T\in\{\resm,\resmtwo\}^{\leqslant H}$:
	\begin{itemize}
		\item the initial state $\mathsf{I}$
		\item $\mathsf{T}(d,t,C,T)$ such that $\exists a \in \Gamma : C\ctxholep{t} = t_a\; \lozenge_1\; \dots \; \lozenge_{\rk(a)}$ and $C \neq \ctxhole$ and $\forall i,\, t \neq \lozenge_i$
		\item $\mathsf{U}(d,t,C,T)$ such that $C\ctxholep{t} = u$
		\item $\nabla(T)$ or $\Delta(T)$
	\end{itemize}
\end{defi}
Note that using $\{\resm,\resmtwo\}^{\leqslant H}$ rather than $\{\resm,\resmtwo\}^*$ makes the set of states \emph{finite}.
Before further detailing the construction of the TWT, let us illustrate its execution on an example. We reuse the \enquote{colored term in context} notation from the previous section for the tuples $(d,t,C,T)$.
\begin{exa}\label{ex:simul-count}
  We translate Example~\ref{ex:lambda-count} to a TWT that has the following run on the input $a_1(b_2(c_3),c_4)$ -- note how the steps correspond to those of the IAM run in Example~\ref{ex:iam-count}.
  This run visits the same input nodes as Example~\ref{ex:twt-count}, in the same order. The only difference is that it stays for longer on each node ($\circlearrowleft$ appears very frequently).
  \begin{align*}
    (\mathsf{I},\circlearrowleft,a_1)
    &\rightsquigarrow^{\phantom{*}}
    (\mathsf{U}(\downred{\lambda f.\; f\; 0},\resm),\circlearrowleft,a_1)
  \rightsquigarrow^3
    (\mathsf{U}(\upblue{\lambda f.\; f\; 0},\resmtwo\resm),\circlearrowleft,a_1)
  \rightsquigarrow
    (\nabla(\resm),\circlearrowleft,a_1)\\
  &\rightsquigarrow^{\phantom{*}}
    (\mathsf{T}(\downred{t_a\;\lozenge_1}\;\lozenge_2,\resm\resm),\circlearrowleft,a_1)
  \rightsquigarrow^5
    (\mathsf{T}((\lambda \ell.\; \lambda r.\; \lambda x.\; \downred{\ell}\; (r\; x))\;\lozenge_1 \;\lozenge_2,\resm),\circlearrowleft,a_1)\\
  &\rightsquigarrow^{\phantom{*}}
    (\mathsf{T}(\upblue{t_a}\;\lozenge_1\;\lozenge_2,\resmtwo\resm),\circlearrowleft,a_1)
  \rightsquigarrow
    (\nabla(\resm),\downarrow_\bullet,b_2)
    \rightsquigarrow
    (\mathsf{T}(\downred{t_b}\;\lozenge_1,\resm\resm),\circlearrowleft,b_2)\\
  &\rightsquigarrow^3
    (\mathsf{T}((\lambda f.\; \lambda x.\; \downred{S}\;(f\;x))\;\lozenge_1,\resm),\circlearrowleft,b_2) \rightsquigarrow
    S((\mathsf{T}((\lambda f.\; \lambda x.\; \upblue{S}\;(f\;x))\;\lozenge_1,\resmtwo),\circlearrowleft,b_2))\\
  &\rightsquigarrow^* \ldots [\text{many steps}] \ldots\\
  &\rightsquigarrow^* S(S((\nabla(\resm),\downarrow_\bullet,c_4)))
  \rightsquigarrow
  S^3((\Delta(\resmtwo),\uparrow^\bullet_2,a_1))
  \rightsquigarrow
  (\mathsf{T}(\downred{t_a\;\lozenge_1}\;\lozenge_2,\resmtwo\resmtwo),\circlearrowleft,a_1)
  \\
  &\rightsquigarrow^7
  S^3(\mathsf{T}(\upblue{t_a}\; \lozenge_1\; \lozenge_2),\resm\resm\resmtwo),\circlearrowleft,a_1)
  \rightsquigarrow^2
  S^3((\Delta(\resmtwo),\circlearrowleft,a_1))
  \rightsquigarrow
  S^3(\mathsf{U}(\downred{u},\resmtwo\resmtwo),\circlearrowleft,a_1)
  \\
  &\rightsquigarrow^2
    S^3((\mathsf{U}(\lambda f.\; f\; \downred{0},\stempty),\circlearrowleft,a_1))
    \rightsquigarrow S(S(S(0)))
\end{align*}
\end{exa}

\noindent
Turning back to the general case, let us \enquote{translate} IAM configurations reachable from $(\downp \mid u\; \widehat\tau((t_a)_{a\in\Gamma}) \mid \ctxhole \mid \varepsilon)$ into TWT configurations on the input tree $\tau$ -- consisting of a state, a \enquote{provenance} in $\{\uparrow_\bullet,\circlearrowleft,\downarrow_1^\bullet,\dots\}$ and a node in $\tau$. We have not yet defined the transition functions of the TWT, but they are not involved in defining its configurations; they will later be chosen to ensure that this translation is a simulation (Lemma~\ref{lem:iam-twt-simul}).

Given a reachable APAIAM configuration, i.e.\ a tuple $\mathtt{iamC} = (d,t,C,T)$ where $d\in\{\upp,\downp\}$, $T\in\{\resm,\resmtwo\}^{\leqslant H}$ and $C\ctxholep{t} = u\; \widehat\tau$, we define $\mathsf{sim}(\mathtt{iamC})$ by a case analysis on where the occurrence of $t$ as a subterm of $u\; \widehat\tau$ pinpointed by $C$ is:
\begin{itemize}
  \item $t$ is a subterm of $u$ and $C = D\; \widehat\tau$ for some $D$ (which implies $u = D\ctxholep{t}$). In this case, we take $\mathsf{sim}(\mathtt{iamC}) = (\mathsf{U}(d,t,D,T),\; \circlearrowleft,\; \text{root node})$.
  \item $t$ occurs inside $\widehat\tau$. In that case, let $s = \widehat{\tau'}$ be the smallest subterm of this form in $\widehat\tau$ that contains $t$ as a subterm, let $\nu$ be the node of $\tau$ where the subtree $\tau'$ is rooted, and let $a$ be the label of $\nu$. We have $s = t_a\, \widehat{\tau_1}\, \dots\, \widehat{\tau_{\rk(a)}}$ where the $\tau_i$ are the subtrees rooted at the children of $\nu$. Since $t$ is not a subterm of any $\widehat{\tau_i}$ by minimality of $s$, we have $t = r\{\lozenge_i := \widehat{\tau_i}\; \forall i\}$ where $r$ is a subterm of $t_a\, \lozenge_1\, \dots\, \lozenge_{\rk(a)}$ that is not equal to any $\lozenge_i$. Let us write $t_a\, \lozenge_1\, \dots\, \lozenge_{\rk(a)} = D\ctxholep{r}$.
    \begin{itemize}
      \item If $r$ is a strict subterm, i.e.\ $D \neq \ctxhole$, then we take $\mathsf{sim}(\mathtt{iamC}) = (d,r,D,T)$.
      \item If $D = \ctxhole$ and $d = \downp$, i.e.\ $\mathtt{iamC} = (C\ctxholep{\downred{s}},T)$, then we take $\mathsf{sim}(\mathtt{iamC}) = (\nabla(T),p,\nu)$ where $p=\circlearrowleft$ (resp.\ $p =\downarrow_\bullet$) when $\nu$ is (resp.\ is not) the root.
      \item Finally, if $D = \ctxhole$ and $d = \upp$, i.e.\ $\mathtt{iamC} = (C\ctxholep{\upblue{s}},T)$, then:
            \begin{itemize}
              \item if $\nu$ is the $i$-th child of some node $\xi$, then we take $\mathsf{sim}(\mathtt{iamC}) = (\Delta(T),\uparrow_i^\bullet,\xi)$;
              \item otherwise (when $\nu$ is the root) we take $\mathsf{sim}(\mathtt{iamC}) = (\Delta(T),\circlearrowleft,\nu)$.
            \end{itemize}
    \end{itemize}
	\item The only remaining possibility is that $t = u\;\widehat\tau$ and $C = \ctxhole$. We know since $u\;\widehat\tau : o$ and thanks to the typing invariants (Propositions~\ref{prop:typing-invariant} and~\ref{prop:dir}) that $d = \downp$. Therefore this is the initial configuration of the IAM: we take $\mathsf{sim}(\mathtt{iamC}) = \mathsf{I}$.
\end{itemize}

\begin{lem}\label{lem:iam-twt-simul}
  There exists a tree-walking transducer with set of states $Q$ such that for every input, the above map $\mathsf{sim}$ is a step-by-step simulation, i.e.\ the following diagram commutes:
\[\begin{tikzcd}
	{\text{IAM config}} &&& {\Tree(\Sigma,\; \text{IAM config})} \\
	\\
	{\text{TWT config}} &&& {\Tree(\Sigma,\; \text{TWT config})}
	\arrow["{\text{computation step}}", from=1-1, to=1-4]
	\arrow["{\mathsf{sim}}"', from=1-1, to=3-1]
	\arrow["{\text{apply}\ \mathsf{sim}\ \text{to config leaves}}", from=1-4, to=3-4]
	\arrow["{\text{computation step}}"', from=3-1, to=3-4]
\end{tikzcd}\]  
	Furthermore, when the memory type $A$ of our $\lambda$-transducer is purely affine, we can guarantee that the tree-walking transducer that we obtain is reversible.
\end{lem}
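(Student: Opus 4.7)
The plan is to define $\delta_a$ and $\delta_a^{\mathrm{root}}$ so that one TWT computation step corresponds to exactly one IAM computation step via $\mathsf{sim}$, by case analysis on the current state. The key enabling observation is that each IAM rule fires on purely local data: the direction $d$, the shape of the focused subterm and its immediate surrounding context, and the top of the tape. All this information is recorded in the TWT state itself, except for the label $a$ identifying the current $t_a\,\lozenge_1\,\dots\,\lozenge_{\rk(a)}$ block, which is precisely the subscript of $\delta_a$.

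The construction splits naturally along the five state families. For $\mathsf{I}$, a single transition at the root simulates the initial application step into $u$. For $\mathsf{U}(d,t,D,T)$ and $\mathsf{T}(d,r,D,T)$, the generic transition fires the locally-applicable IAM rule and emits a state of the same family with provenance $\circlearrowleft$, leaving the TWT head in place. The delicate subcases are (i) the boundary crossings, where a rule like $(C\ctxp{\upblue{t}\,u},\resmtwo\cons T) \mapsto (C\ctxp{t\,\downred{u}},T)$ takes the focus out of the $u$-region (with $u = \widehat\tau$) or out of a $t_a$-block (with $u = \lozenge_i$), forcing the TWT head to move to a neighbouring node, and (ii) the IAM's constant rule, which outputs a $\Sigma$-labeled tree with several state leaves. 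The $\nabla$ and $\Delta$ states are precisely the \enquote{interface} states for the two directions of child/parent traversal: from $\nabla(T)$ one fires the IAM application rule to descend into the $t_a$-block (or processes $t_a$ directly when $\rk(a) = 0$), and from $\Delta(T)$ one fires the dual argument-to-function rule to exit it back into some $\mathsf{T}$ or $\mathsf{U}$ state.

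Commutation of the simulation diagram is then a routine inspection of each case. The typing invariants of Claim~\ref{clm:typing-invariant} and Proposition~\ref{prop:dir} serve to rule out the pathological configurations (such as $\upblue{u\,\widehat\tau}$ at the top or $\oc\upblue{t}$ anywhere) that would otherwise require ad hoc handling. The delicate technical point is bookkeeping provenances during boundary crossings: when the IAM exits the $i$-th child subterm upward, the TWT must be at that child and emit a leaf with provenance $\uparrow_\bullet$, so that the resulting configuration settles at the parent with provenance $\uparrow_i^\bullet$; dually for descents.

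For the reversibility clause, the purely affine assumption forces $u$ and each $t_a$ to contain no $!$-boxes or let-bindings, so only the rules of Definition~\ref{def:paiam} ever fire. These rules are pairwise dual under flipping $d$, and the constant rule produces leaves with pairwise distinct tapes $\resm^i\cons\resmtwo\cons T$: this amounts to the standard reversibility of the purely affine IAM. Since $\mathsf{sim}$ is injective on reachable configurations for a fixed input (a TWT state together with its node and provenance uniquely identifies the IAM configuration it encodes), any failure of leaf-injectivity of $\delta_a$ or $\delta_a^{\mathrm{root}}$ would pull back to two distinct IAM predecessors of a common IAM successor, contradicting IAM reversibility. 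I expect the main obstacle to be organizational rather than conceptual: the IAM has a fair number of rules, and for each state family one must verify every potentially-firing rule both for correctness of the simulation and, in the purely affine subcase, for preservation of leaf-injectivity.
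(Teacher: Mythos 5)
Your proposal is correct and follows essentially the same route as the paper: define the transitions by a case analysis exploiting the locality of the IAM rules (with the $\nabla$/$\Delta$ states handling the child/parent boundary crossings), verify the commuting square case by case, and observe that the purely affine fragment yields reversibility. The only minor divergence is in how reversibility is argued --- the paper identifies by direct inspection that the $\mathtt{let}$-bound-variable transition is the sole source of non-leaf-injectivity, whereas you derive it more abstractly from the pairwise duality (injectivity) of the purely affine IAM rules pulled back through $\mathsf{sim}$ --- but both arguments go through.
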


\begin{proof}[Proof sketch]
	Let $a\in\Gamma$, $q\in Q$ and $p\in\{\uparrow_\bullet,\circlearrowleft,\downarrow_1,\dots,\downarrow_{\rk(a)}^\bullet\}$. We must choose $\delta_a(q,p)$ and $\delta^{\mathrm{root}}_a(q,p)$ so that for any input tree $\tau$, the simulation condition is verified for all reachable IAM configurations sent by the translation to $(q,p,\nu)$ for some node $\nu$ of $\tau$ with label $a$. (If no input tree gives rise to such an IAM configuration, we leave both $\delta_a(q,p)$ and $\delta^{\mathrm{root}}_a(q,p)$ undefined.)
	This can be done by performing a case analysis on $q$ and $p$ while looking at the definition of the computation-step function for the IAM. Morally, this works because the rules of the IAM act \enquote{locally}.

	Let us work out a case chosen to illustrate several subtleties involved:
	$q = \mathsf{T}(D'\ctxholep{\downred{x}},T)$ where $x$ is a variable of the $\lambda$-calculus, and $p = \circlearrowleft$. Necessarily, $D' = D\; \lozenge_1\; \dots\; \lozenge_{\rk(a)}$. A corresponding IAM configuration must be of the form $(C\ctxholep{D\ctxholep{\downred{x}}},T)$ where $C$ is built from $\ctxhole$, $u$ and the $t_b$ ($b\in\Gamma$) by using only applications. Therefore, the hole $\ctxhole$ in $C$ is not in the scope of any binder, which means that the binder of $x$ in $C\ctxholep{D\ctxholep{x}} = u\; \widehat{\tau}$ must be inside $D$. There are two subcases:
	\begin{itemize}
		\item $x$ is $\lambda$-bound: $D = D_1\ctxholep{\lambda x. D_2}$. Then, by definition, the computation-step function sends the IAM configuration $(C\ctxholep{D\ctxholep{\downred{x}}},T)$ to $(C\ctxholep{D_1\ctxholep{\upblue{\lambda x. D_2\ctxholep{x}}}}, \resmtwo\cdot T)$. This almost tells us how we should define $\delta_a(q,\circlearrowleft)$ and $\delta^{\mathrm{root}}_a(q,\circlearrowleft)$, but we need another case analysis:
		\begin{itemize}
			\item If $D_1 = \ctxhole$ and $\rk(a)=0$, which means that we would be exiting the subterm that corresponds to a leaf of the input tree, we must take $\delta_a(q,\circlearrowleft) = (\Delta(\resmtwo\cdot T),\uparrow_\bullet)$ and $\delta^{\mathrm{root}}_a(q,\circlearrowleft) = (\Delta(\resmtwo\cdot T),\circlearrowleft)$.
			\item Otherwise, $\delta_a(q,\circlearrowleft) = \delta^{\mathrm{root}}_a(q,\circlearrowleft) = (\mathsf{T}(D_1\ctxholep{\upblue{\lambda x. D_2\ctxholep{x}}} \,\lozenge_1\, \dots\, \lozenge_{\rk(a)},\; \resmtwo\cdot T),\; \circlearrowleft)$ is the choice that makes the simulation work.
		\end{itemize}
		Note that in all these sub-sub-cases, to define a valid state, $\resmtwo\cdot T$ must belong to $\{\resm,\resmtwo\}^{\leqslant H}$. This will be the case for reachable IAM configurations, so if the token $T$ is too long, we can simply leave $\delta_a(q,\circlearrowleft)$ and $\delta^{\mathrm{root}}_a(q,\circlearrowleft)$ undefined.
		\item $x$ is $\mathtt{let}$-bound: $D = D_1\ctxholep{\ttletin{\oc x = v}{D_2}}$. The computation-step function sends the IAM configuration to $(C\ctxholep{D_1\ctxholep{\ttletin{\oc x = \downred{v}}{D_2\ctxholep{x}}}},\;T)$. We therefore define both $\delta_a(q,\circlearrowleft)$ and $\delta^{\mathrm{root}}_a(q,\circlearrowleft)$ to be $(\mathsf{T}(D_1\ctxholep{\ttletin{\oc x = \downred{v}}{D_2\ctxholep{x}}}\,\lozenge_1\, \dots\, \lozenge_{\rk(a)},\;T),\;\circlearrowleft)$.
	\end{itemize}
	Note that in the last case, when the non-linear variable $x$ has multiple occurrences with the same binder, the transition is non-reversible. Indeed, this means there exists another context $D_3 \neq D_2$ such that $D_2\ctxholep{x} = D_3\ctxholep{x}$. Then $\delta_a(q,\circlearrowleft) = \delta_a(q',\circlearrowleft)$ for some state $q' \neq q$, namely $q' = \mathsf{T}(D_1\ctxholep{\ttletin{\oc x = v}{D_3\ctxholep{\downred{x}}}} \,\lozenge_1\, \dots\, \lozenge_{\rk(a)},\;T)$.

	Working through the other cases reveals that the above subcase involving a $\mathtt{let}$-binding is the \emph{only one} that leads to a non-reversible transition in the tree-walking transducer. When the memory type $A$ is purely affine, the terms $u$ and $t_a$ ($a\in\Gamma$) do not contain $\mathtt{let}$-bindings; therefore, in that case, we get a reversible TWT.
  \end{proof}

\section{From the almost !-depth 1 IAM to invisible pebbles}\label{sec:depth-1}

Now that we have seen how to prove Theorem~\ref{thm:main-affine}, let us apply the same methodology to the almost !-depth 1 case, with another variation on the Interaction Abstract Machine.

The key challenge is that we can no longer rule out positions for the IAM token of the form $C\ctxholep{\ttletin{\oc x = \upblue{u}}{t}}$. If $x$ has multiple occurrences in $t$, we need some information to know which of these occurrences we should move to. The standard solution to this problem is to enrich IAM configurations with another data structure -- cf.\ the \enquote{boxes stack} of~\cite{DanosRegnierIAM} or the \enquote{log} of~\cite{AccattoliLV20}. In our simple low-depth case, a stack of variable occurrences will be enough.

\paragraph{The Almost !-Depth 1 IAM}

Let $v : o$ be an almost !-depth 1 term. To compute its normal form, we introduce a tree-generating machine whose configurations are of the form $(C\ctxholep{\downred{t}},T,L)$ or $(C\ctxholep{\upblue{t}},T,L)$, where $C\ctxholep{t} = v$, $T\in\{\resm,\resmtwo,\lpos\}^*$, and \emph{logs} $L$ and logged positions $l$ are defined by mutual induction as follows (please notice that $n\in\{0,1\}$ for !-depth 1 terms):
\[
\lpos\grameq (D_n,L_n) \qquad\qquad L_0\grameq \stempty \qquad L_n\grameq l\cdot L_{n-1}
\]
The initial configuration is $(\downred{v},\varepsilon,\varepsilon)$. To define the computation-step function, we start by reusing all the rules of the almost purely affine IAM (Definition~\ref{def:apaiam}), adapting them so that they do not change the log $L$, except the $\downred{x}$-rule for let-bound $x$, and the one for !-boxes. We then add the rules below, \emph{where $C_i$ ranges over contexts of depth $i\in\{0,1\}$} (as defined in \Cref{sec:lambda}) and $A\neq o$. Please notice that transition rules now depend also on the \emph{type} of the current subterm, indeed we have to distinguish between the ``almost'' and the ``depth-1'' exponentials. They are depicted in \Cref{fig:d2s-paiam}.

\begin{figure}
	{\small \begin{align*}
			(C_0\ctxholep{\ttletin{\oc x = u}{D_n\ctxholep{\downred{x^{!A}}}}},T,L_n) &\mapsto (C_0\ctxholep{\ttletin{\oc x = \downred{u^{!A}}}{D_n\ctxholep{x}}},(D_n,L_n) \cons T,\stempty)\\
			(C_0\ctxholep{\ttletin{\oc x = \upblue{u^{!A}}}{D_n\ctxholep{x}}},(D_n,L_n)\cons T,\stempty) &\mapsto (C_0\ctxholep{\ttletin{\oc x = u}{D_n\ctxholep{\upblue{x^{!A}}}}},T,L_n) \\
			(C\ctxholep{\ttletin{\oc x = u}{D_n\ctxholep{\downred{x^{!o}}}}},T,L_n\cons L) &\mapsto (C\ctxholep{\ttletin{\oc x = \downred{u^{!o}}}{D_n\ctxholep{x}}},T,L)\quad(\text{non-reversible})\\
			(C_0\ctxholep{\downred{\oc t^{!A}}},\lpos\cdot T,\stempty) &\mapsto (C_0\ctxholep{\oc\downred{t^{A}}},T,\lpos) \\ (C_0\ctxholep{\oc\upblue{t^{A}}},T,\lpos) &\mapsto (C_0\ctxholep{\upblue{\oc t^{!A}}},\lpos\cdot T,\stempty)\\
			(C\ctxholep{\downred{\oc t^{!o}}},T,L) &\mapsto (C\ctxholep{\oc\downred{t^{o}}},T,L) \quad(\text{non-reversible})
	\end{align*}}
	\caption{Computation-step function of the almost !-depth 1 IAM.}
	\label{fig:d2s-paiam}
\end{figure}

Again, this device, which is a just a specialization of the standard \IAM (but again non-reversible in order to handle linearly the almost affine terms),  successfully normalizes almost !-depth~1 terms of base type. Next, we would like to simulate it by some automaton model, to get a counterpart of Theorem~\ref{thm:main-affine} in this setting. The problem now is that both the log $L$ and the tape $T$ do not fit into the finite state of a tree-walking transducer, since their size cannot be statically bounded. Thus, we need to target a more powerful machine model.

\paragraph{Invisible Pebbles.}

Luckily, a suitable device has already been introduced by Engelfriet, Hoogeboom and Samwel~\cite{InvisiblePebbles}: the \emph{invisible pebble tree transducer} (IPTT).
Informally, it is a TWT extended with the ability to
put down pebbles on input nodes. The pebbles have colors that are taken in a
finite set. They can be later examined and removed: an IPTT can check whether
the \emph{last pebble to have been put down} is on the current position, and if
so, it can observe its color, and perhaps decide to remove it. The
\enquote{invisible} part means that only the last pebble can be seen. Thus, the
lifetimes of the pebbles follow a \emph{stack discipline} (last put down, first
removed). The number of pebbles used in a computation may be
unbounded.

\begin{defiC}[{\cite{InvisiblePebbles}}]\label{def:iptt}
  An \emph{invisible pebble tree transducer} $\Tree(\Gamma)\rightharpoonup\Tree(\Sigma)$ is made of:
  \begin{itemize}
  \item a finite set of \emph{states} $Q$ with an \emph{initial state} $q_0 \in Q$
  \item a finite set of \emph{colors} $\mathfrak{C}$
  \item a (partial) \emph{transition function} that sends tuples consisting of
  \begin{itemize}
    \item an input letter $a\in\Gamma$, a state $q\in Q$, a provenance $p \in \{\downarrow_\bullet, \circlearrowleft, \uparrow_1^\bullet, \dots,
    \uparrow_{\rk(a)}^\bullet\}$
    \item a boolean $\mathtt{isRoot}$ which must be false if $p=\;\downarrow_\bullet$
    \item a value $z$ which is either a color in $\mathfrak{C}$ or the symbol $\mathsf{None}$
  \end{itemize}
  to $\Tree(\Sigma,\; Q \times (\{\underbracket{\uparrow_\bullet}_{\mathclap{\text{prohibited if}\ \mathtt{isRoot}\ \text{is true}\quad}}, \circlearrowleft, \downarrow^\bullet_1, \dots,
  \downarrow^\bullet_k,\underbrace{\mathsf{remove}}_{\mathclap{\quad\text{prohibited
        if \(z=\mathsf{None}\)}}}\}\cup\{\mathsf{put}_{\mathfrak{c}} \mid \mathfrak{c}\in\mathfrak{C}\}))$
  \end{itemize}
\end{defiC}
Note that removing $z$ in the arguments and $\mathsf{remove/put}_{\mathfrak{c}}$ in the codomain would just yield an alternative presentation of tree-walking transducers (Definition~\ref{def:twt}).

The set of configurations of an invisible pebble tree transducer on an input tree $\tau$ is
\[ \qquad \underbrace{Q \times \{\downarrow_\bullet, \circlearrowleft, \uparrow_1^\bullet, \dots\} \times \{\text{nodes of}\ \tau\}}_{\text{TWT configuration}} \times \underbrace{(\mathfrak{C}\times\{\text{nodes of \(\tau\)}\})^*}_{\text{pebble stack}} \]

The transition function of the IPTT determines a computation-step function
by extending Definition~\ref{def:twt} in the expected way (the transducer stays at the same node after a
$\mathsf{remove}$ or $\mathsf{put}_{\mathfrak{c}}$ instruction).
Here is an example of a configuration over $\tau = a_1(b_2(c_3),c_4)$ for some
invisible pebble tree transducer: $(q, \downarrow_\bullet, c_3,
[(\mathfrak{a},b_2),(\mathfrak{b},c_3)])$. The top of the stack is the leftmost
element the list: it is an $\mathfrak{a}$-colored pebble on position $b_2$.
Since this differs from the current node $c_3$, the transducer does not see any
pebble ($z = \mathsf{None}$) even though there is a $\mathfrak{b}$-colored
pebble on $c_3$ further down the stack. If we execute the instruction
$\mathsf{put}_{\mathfrak{a}}$ while transitioning to state $q'$, we get the
configuration $(q', \circlearrowleft, c_3, [(\mathfrak{a},c_3),
(\mathfrak{a},b_2),(\mathfrak{b},c_3)])$. In that new configuration, the IPTT
now sees the topmost pebble ($z = \mathfrak{a}$) and is thus allowed to
$\mathsf{remove}$ it.

\begin{exa}
	We give a high-level explanation of an IPTT that converts binary to unary, e.g.\ $\mathtt{0}(\mathtt{1}(\mathtt{0}(\mathtt{1}(\varepsilon)))) \mapsto S(S(S(S(S(0)))))$.
	(From such a transducer, one can derive an IPTT for the function of Example~\ref{ex:lambda-bin2bin}, by replacing every pattern $S(x)$ by $a(x,x)$ and $0$ by $c$.)

	The states are $Q=\{q_0,q_1\}$ and the colors are $\mathfrak{C} = \{\mathfrak{a,b,c}\}$.
	We start in the initial state $q_0$ and travel down the input tree (i.e.\ left-to-right in the bitstring), ignoring the $\mathtt{0}$ nodes. When we encounter a $\mathtt{1}$ in state $q_0$, we put an $\mathfrak{a}$-colored pebble on the current position while moving down to the child and transitioning to the state $q_1$.

	The role of $q_1$ is then to output $2^n$ times $S$, where $n$ is the height of the current input subtree (indeed, we want to compute the positional value of the $\mathtt{1}$ bit we just saw). It does so by enumerating all colorings by $\{\mathfrak{b,c}\}$, in lexicographical order, of this subtree (excluding the input leaf $\varepsilon$ which stays uncolored).
	\begin{itemize}
		\item The initialization consists in moving down while putting $\mathfrak{b}$-pebbles on each successive $\mathtt{0/1}$ input node until we reach the leaf $\varepsilon$.
		\item Each time we are at $\varepsilon$ in state $q_1$, we output an $S$ node.
		\item We then backtrack while removing $\mathfrak{c}$-pebbles until we see a $\mathfrak{b}$-pebble. Then we remove it, put a $\mathfrak{c}$-pebble in its place, and move down while putting $\mathfrak{b}$-pebbles on all the nodes below, until we reach $\varepsilon$ again, etc. (This is analogous to incrementing a binary numeral.)
	\end{itemize}
	The backtracking phase of this last item may lead us to see an $\mathfrak{a}$-pebble, rather than a $\mathfrak{b}$-pebble, after having removed all $\mathfrak{c}$-pebbles. This means we have finished enumerating the colorings (and outputting an $S$ for each of them), and are back to the position where we transitioned from $q_0$ to $q_1$. Then we remove this $\mathfrak{a}$-pebble while transitioning back to $q_0$ and moving down to the child. The computation then proceeds as explained at the beginning; it ends when we reach $\varepsilon$ in state $q_0$, at which point we produce the leaf $0$.
\end{exa}

\paragraph{Simulating the IAM with a Single Stack.}

The IAM for almost affine depth-1 terms that we have described in Section~\ref{sec:depth-1} is not directly implementable on invisible pebble tree transducers, because of the presence of \emph{two}, possibly \emph{unbounded} stacks, the log $L$ and the tape $T$. In order to make the implementation possible, we tune the data structures in such a way that just one unbounded stack is needed (it is well-known in fact that an automaton/transducer featuring two unbounded stacks is Turing complete). We apply the following transformations:
\begin{itemize}
	\item We disentangle multiplicative and exponential information in the tape, as in~\cite{mackie_geometry_1995}. The multiplicative stack is then statically bounded by the size of types, as in the almost and purely affine cases.
	\item We merge the exponential part of the tape and the log into a \emph{unique} exponential stack. This is possible because there is at most one logged position at top level in the log, due to the depth-1 restriction.
	\item We add a boolean flag $\ans$ that indicates if the top of the exponential stack is actually the only log entry ($1$) or a tape entry ($0$).
\end{itemize}
The resulting machine is depicted in \Cref{fig:d-apaiam}.
\begin{figure}
	{\small \input{IAM-1-opt.tex}}
	\caption{Data structures and computation-step function of the single stack depth 1 APAIAM. Here $A\neq o$ and we write $\tm\esub{!\var}{\tmtwo}$ for $\ttletin{!\var=\tmtwo}{\tm}$.}
	\label{fig:d-apaiam}
\end{figure}
The fact that this machine is equivalent (strongly bisimilar) to the one defined in Section~\ref{sec:depth-1} is immediate. Then, we can use it to compile $\lambda$-transducers to IPTTs as in Section~\ref{sec:iam}:
\begin{lem}\label{lem:lambda-iptt}
  Almost !-depth 1 $\lambda$-transducer $\subseteq$ invisible pebble tree transducer $\equiv$ MSOT-S$^2$.
\end{lem}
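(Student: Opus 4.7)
The equivalence $\text{invisible pebble tree transducer} \equiv \text{MSOT-S}^2$ is due to Engelfriet, Hoogeboom and Samwel~\cite{InvisiblePebbles} and I cite it directly, so the work lies in the inclusion. My plan is to mimic the proof of Theorem~\ref{thm:main-affine} from Section~\ref{sec:iam}, but with the single-stack almost !-depth 1 IAM playing the role of the almost purely affine IAM, and with an IPTT taking the place of the tree-walking transducer.

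Given an almost !-depth 1 $\lambda$-transducer with output term $u$ and transition terms $(t_a)_{a\in\Gamma}$, Propositions~\ref{prop:norm-sr} and~\ref{prop:normalization-simplifies} let me assume both are in normal form and almost !-depth 1, so that for every input $\tau$ the single-stack IAM computes the tree encoded by the normal form of $u\,\widehat\tau((t_a)_{a\in\Gamma})$. I build an IPTT simulating this IAM on the input $\tau$. Its finite-state control encodes, in analogy with Lemma~\ref{lem:iam-twt-simul}: the current position within $u$ or some $t_a$, via $\mathsf{I}, \mathsf{U}, \mathsf{T}, \nabla, \Delta$-style states; the multiplicative stack $M$, which is bounded by a straightforward adaptation of Corollary~\ref{cor:bound-height} relative to the almost !-depth 1 type structure; the flag $\ans$; and a few auxiliary states for multi-step transitions.

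The unbounded exponential stack $E$ is encoded by the IPTT's pebble stack. The key observation is that each entry $l=(D_n, E_n)$ satisfies $n\leq 1$ and therefore contains at most two nested $\lambda$-term contexts; each such context, by the same decomposition used in Lemma~\ref{lem:iam-twt-simul}, pinpoints a subterm of $u\,\widehat\tau$ and so amounts to a finite-state datum (a position within $u$ or some fixed $t_a$) together with one node of $\tau$. Consequently, each stack entry can be realized by at most two colored pebbles on input nodes, with the pebble colors distinguishing \enquote{standalone depth-0}, \enquote{inner of a depth-1 pair}, and \enquote{outer of a depth-1 pair}, and additionally encoding the bounded context information.

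It then remains to realize each IAM computation step by a bounded sequence of IPTT steps. Non-exponential rules are simulated as in the TWT construction of Lemma~\ref{lem:iam-twt-simul}. Each exponential rule is simulated by first walking along a path prescribed by a context $D_n$ inside a fixed $t_a$ (hence of bounded length as $t_a$ is a finite term) to reach the target input node, then pushing or removing pebbles and updating the state. The main obstacle is pushing and popping depth-1 entries coherently: since only the topmost pebble is visible, the \enquote{outer} pebble must be placed last and removed first, so the intermediate IPTT states must record which component of an entry is currently being assembled or disassembled. The flag $\ans$ together with a few auxiliary states handle this bookkeeping.
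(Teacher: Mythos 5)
Your overall strategy is the one the paper itself follows: keep the bounded data (the position inside $u$ or some $t_a$, the multiplicative stack, the flag $\ans$) in the finite control, implement the single unbounded exponential stack by the IPTT's pebble stack, and quote Engelfriet, Hoogeboom and Samwel for IPTT $\equiv$ MSOT-S². There is, however, a genuine gap in the one place where you go beyond the paper's sketch, namely your encoding of stack entries. You claim that an entry $l=(D_n,E_n)$ \enquote{contains at most two nested $\lambda$-term contexts} because $n\leq 1$, hence fits in two pebbles. This misreads the data structure: $n\leq 1$ bounds the number of \emph{immediate} sub-entries of a logged position, not the depth of the recursion. The unique entry of $E_1$ is itself a logged position $(D'_{n'},E'_{n'})$ and nothing forces $n'=0$, so logged positions are in effect chains $(D_1,(D_1,(\dots,(D_0,\epsilon))))$ of unbounded length. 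This is not a corner case: in Example~\ref{ex:lambda-bin2bin}, the copy of the box $\oc(\lambda y.\,f\,(f\,y))$ currently in use at an input node is identified only through the chain of let-bound $f$'s running all the way back up to the root, so a single logged position has size proportional to the height of the input. Any scheme using a bounded number of pebbles per entry --- in particular your three colors \enquote{standalone / inner / outer} --- therefore fails on such inputs.

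The repair is to \emph{flatten}: put one pebble for each context occurring anywhere inside the nested structure of the exponential stack, the color carrying the bounded within-$t_a$ datum together with a marker saying whether this pebble starts a new top-level entry or continues the entry above it. This respects the invisible-pebble LIFO discipline because every operation of the single-stack machine acts on a prefix of the flattened stack: pushing $(D_1,E_1)$, where $E_1$ is the current top entry, amounts to pushing a single pebble for $D_1$ on top of the pebbles already encoding that entry, and popping $(D_n,E_n)$ while restoring $E_n$ amounts to removing the single topmost pebble. Relatedly, you should weaken the claim that each IAM step is simulated by a \emph{bounded} number of IPTT steps: the non-reversible $x^{\oc o}$ rule discards an entire top-level entry, i.e.\ unboundedly many pebbles, and locating the topmost pebble (which can only be observed at its own node) may require walking through the input; finitely many IPTT steps per IAM step is all that holds and all that is needed. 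The remainder of your argument --- normal-form preprocessing via Propositions~\ref{prop:norm-sr} and~\ref{prop:normalization-simplifies}, the bounded multiplicative stack, and the citation of~\cite{InvisiblePebbles} --- matches the paper.
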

Here, the equivalence between IPTT and MSOT-S$^2$ is a rephrasing of a result of
Engelfriet et al.~\cite[Theorem~53]{InvisiblePebbles}, as explained
in~\cite[\S3.3]{titosurvey}.

\section{Expressiveness of $\lambda$-transducers with preprocessing}
\label{sec:expr}

Now, let us prove Theorems~\ref{thm:msots} and~\ref{thm:main-depth-1}. We first
note that the left-to-right inclusions are immediate consequences of
Theorem~\ref{thm:main-affine} and Lemma~\ref{lem:lambda-iptt} combined with the following
facts:
\begin{itemize}
  \item MSOT-S (and, therefore, MSOT-S$^2$) are closed under precomposition by MSO
        relabeling (cf.~\cite[Section~3]{AttributedMSO} where MSOT-S are called
        \enquote{MSO term graph transductions});
  \item TWT $\subset$ MSOT-S (a slight variant
        of~\cite[Theorem~9]{AttributedMSO}, cf.~\cite[\S3.2]{titosurvey});
  \item TWT of linear growth $\subset$ MSOT (see
        e.g.~\cite[\S6.2]{EngelfrietIM21}), and purely affine
        $\lambda$-transducers have linear growth because the size of purely affine terms is non-increasing during $\beta$-reduction.
\end{itemize}
(\enquote{$f$ has linear growth} means that $|f(t)| = O(|t|)$.) Next, we turn to the converse inclusions.

\paragraph{MSOT $\subseteq$ Purely Affine $\lambda$-Transducer $\circ$ MSO Relabeling.}

We derive this from the results of Gallot, Lemay and
Salvati~\cite{LambdaTransducer,gallotPhD}. First, we introduce a slight
generalization of their machine model for MSOTs~\cite[\S2.3]{LambdaTransducer}.
Their model involves bottom-up regular lookahead, but as usual in automata
theory, this feature can be simulated by preprocessing by an MSO relabeling;
this is why we do not include it in our version.
\begin{defi}
  A \emph{GLS-transducer} $\Tree(\Gamma)\to\Tree(\Sigma)$ consists of:
  \begin{itemize}
  \item a finite set $Q$ of states, with a family $(A_q)_{q\in Q}$ of purely
    affine types;
  \item an initial state $q_0 \in Q$ and an output term $u : A_{q_0} \multimap
    o$ -- which, like all the terms $t$ below, may use constants $c : o^{\rk(c)}
    \multimap o$ for $c\in\Sigma$;
  \item for each $q\in Q$ and $a\in\Gamma$, a rule $q\langle
    a(x_1,\dots,x_{\rk(a)}) \rangle \to t\; q_1\langle x_1 \rangle\; \dots\;
    q_{\rk(a)}\langle x_{\rk(a)} \rangle$ where the $q_i$ and $t : A_{q_1}
    \multimap \dots \multimap A_{q_{\rk(a)}} \multimap A_q$ are chosen depending
    on $(q,a)$.
  \end{itemize}
\end{defi}
The semantics is that a GLS-transducer performs a top-down traversal $q_0\langle\tau\rangle\to^* \tau^\Downarrow$ of its
input tree $\tau$ which builds a $\lambda$-term $\tau^\Downarrow : A_{q_0}$. The
normal form of $u\;\tau^\Downarrow$ then encodes the output tree. Our model is a
bit more syntactically permissive than that of Gallot et al.\ (theirs would
correspond to using linear rather than affine terms, and forcing $u$ to be
$\lambda x.\; x$ -- so $A_{q_0} = o$);
therefore, it can compute at least everything that their model can:
\begin{thm}[from {\cite[Theorem~3]{LambdaTransducer}}]
  \label{thm:gls}
  MSOT $\subseteq$ GLS-transducer $\circ$ MSO relabeling.
\end{thm}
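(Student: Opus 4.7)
The plan is to derive this statement from Gallot, Lemay and Salvati's original result by showing that our notion of GLS-transducer is strictly more permissive than theirs, and by absorbing their bottom-up regular lookahead into the MSO relabeling preprocessing step. The original result \cite[Theorem~3]{LambdaTransducer} states that every MSOT can be computed by a GLS-transducer \emph{in their sense}, that is, using linear (not merely affine) terms, with the output term restricted to $u = \lambda x.\; x$ (hence $A_{q_0} = o$), and equipped with bottom-up regular lookahead.

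The first step is to observe that linear terms are a special case of affine terms, and that restricting $u$ to the identity is a special case of allowing an arbitrary output term $u : A_{q_0} \multimap o$. Hence, syntactically, any Gallot--Lemay--Salvati transducer (without lookahead) is already a GLS-transducer in our sense computing the same function. The only remaining feature to account for is the bottom-up regular lookahead.

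The second step, which I expect to be entirely routine, is to eliminate the lookahead by preprocessing. A bottom-up lookahead assigns to each input node a state of some bottom-up tree automaton computed from its subtree; the transitions of the transducer may then branch on this state. This state is an MSO-definable function of the subtree rooted at the node, since regular tree languages coincide with MSO-definable tree properties. Therefore, we can define an MSO relabeling $r \colon \Tree(\Gamma) \to \Tree(\Gamma')$ where $\Gamma' = \Gamma \times Q_{\mathrm{look}}$, that annotates each node with the lookahead state it would receive, and then build a plain GLS-transducer over $\Gamma'$ that reads these annotations directly in its transition rules instead of invoking the lookahead.

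Combining these two observations, given any MSOT $f$, the original Gallot--Lemay--Salvati theorem yields a linear GLS-transducer with lookahead computing $f$; rewriting the lookahead as an MSO relabeling $r$ and viewing the remaining transducer as a (purely affine) GLS-transducer $g$ in our sense, we obtain $f = g \circ r$, which places $f$ in GLS-transducer $\circ$ MSO relabeling. The main obstacle, in a fully detailed write-up, would be the bookkeeping required to verify that the lookahead-to-relabeling translation preserves all transition rules correctly; but since this is a textbook construction in tree automata theory, it should not require any genuinely new idea.
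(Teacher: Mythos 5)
Your proposal is correct and matches the paper's own justification exactly: the paper likewise treats Theorem~\ref{thm:gls} as an immediate consequence of \cite[Theorem~3]{LambdaTransducer}, noting that the Gallot--Lemay--Salvati model (linear terms, identity output term, regular lookahead) is a syntactic special case of our GLS-transducers once the bottom-up lookahead is absorbed into the MSO relabeling. Nothing further is needed.
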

We derive our desired result on $\lambda$-transducers in two steps. \begin{itemize}
  \item Every GLS-transducer can be made \enquote{type-constant}: $\exists A : \forall q,\, A_q = A$. This uses an encoding trick, detailed in \Cref{app:gls}, that preserves pure affineness, but not linearity.
  \item A type-constant GLS-transducer can be turned into an MSO relabeling (that adds to each node its top-down propagated state) followed by a purely affine $\lambda$-transducer (which is just a GLS-transducer with $|Q|=1$).
\end{itemize}
From Theorem~\ref{thm:gls}, we thus get MSOT $\subseteq$ purely affine $\lambda\text{-transducer} \circ (\text{MSO relabeling})^2$, and since MSO relabelings are closed under composition~\cite[\S3]{AttributedMSO}, we are done.

\paragraph{MSOT-S $\subseteq$ Almost Purely Affine $\lambda$-Transducer $\circ$ MSO Relabeling.}

Following the same recipe as above, we reduce this to Gallot et al.'s characterization of \mbox{MSOT-S}~\cite{LambdaTransducer}. Since they use Kanazawa's almost linear $\lambda$-terms~\cite{KanazawaJournal} (which we discussed in the introduction), we need to translate such terms into our almost purely affine terms. Let us introduce the abbreviation $\lambda!x.\,t = (\lambda y.\, \ttletin{\oc x = y}{t})$ where $y$ is a fresh variable. We define inductively:
$\wn c = \lambda!x_1.\, \dots\, \lambda!x_{\rk(c)}.\, \oc(c\, x_1\, \dots\, x_{\rk(c)})$ for constants $c$ in a ranked alphabet $\Sigma$, and
\[ \wn x = \begin{cases}
  \oc x &\text{if}\ x : o\\
  x &\text{otherwise}
\end{cases} \qquad \wn(\lambda x.\,t) = \begin{cases}
  \lambda \oc x.\, \wn t &\text{if}\ x : o\\
  \lambda x.\, \wn x  &\text{otherwise}
\end{cases} \qquad \wn(t\,u) = (\wn t)\, (\wn u)
\]
\begin{clm}
  Let $t : A$ be almost affine as defined in~\cite{AlmostAffine}. Then $\wn t$ is almost purely affine, with type $A\{o:=\oc o\}$. When $t : o$, if $t \longrightarrow_\beta^*\widetilde\tau$ (for some $\tau\in\Tree(\Sigma)$) then $\wn t \longrightarrow_\beta^* \oc\widetilde\tau$.
\end{clm}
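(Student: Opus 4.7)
The claim has three independent parts, which I would handle in order of increasing subtlety: the type of $\wn t$, its almost pure affineness, and the reduction property. The first two are fairly mechanical structural inductions, and the reduction property is where the real work lies.

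\textbf{Typing.} I would prove $\wn t : A\{o:=\oc o\}$ by induction on the typing derivation of $t$ in Kanazawa's almost affine calculus. In the source, variables of type $o$ may sit in an unrestricted context, while higher-type variables are affine. The translation mirrors this: a variable $x : o$ becomes $\oc x$, with $x$ placed in the target's unrestricted context $\Theta$ (type $o$); a higher-type variable stays in the affine context $\Phi$. For $\lambda x.\, t$ with $x:o$, the macro $\lambda!x.\,\wn t$ unfolds to $\lambda y.\,\ttletin{\oc x = y}{\wn t}$, which is typed by combining the $\multimap$-introduction rule with the let-rule. The constant case unfolds to a term of type $(\oc o)^{\rk(c)} \multimap \oc o = (o^{\rk(c)} \multimap o)\{o:=\oc o\}$ by direct inspection. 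Every other case is compositional.

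\textbf{Almost pure affineness.} With the typing in hand, the three structural conditions of the definition are easy to check on $\wn t$: (i) every subterm has type of the form $B\{o:=\oc o\}$ with $B$ built from $o$ and $\multimap$, hence almost purely affine; (ii) the only boxes that $\wn$ introduces are $\oc x$ (where $x : o$) and $\oc(c\,x_1\,\cdots\,x_{\rk(c)})$ inside $\wn c$, both of which have purely affine bodies; (iii) $\Theta$ holds exactly the source's $o$-typed free variables.

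\textbf{Reduction.} The heart of the proof is the following substitution lemma, proved by induction on $t$: for $u : o$ (so that source substitution of $u$ for $x$ may duplicate), $\wn(t\{x:=u\})$ is obtained from $\wn t$ by replacing each occurrence of the pattern $\oc x$ by $\wn u$; for $x$ at higher type (occurring at most once), $\wn(t\{x:=u\}) = \wn t\{x:=\wn u\}$ in the usual sense. Using this lemma, I would simulate each source $\beta$-step by target reductions. The higher-type case is a single $\beta$-step. For $x : o$, one target step gives $(\lambda y.\,\ttletin{\oc x = y}{\wn t})\,\wn u \longrightarrow_\beta \ttletin{\oc x = \wn u}{\wn t}$; then, provided $\wn u$ eventually reaches a form $L\ctxholep{\oc v}$, the let-at-a-distance rule fires and yields $L\ctxholep{\wn t\{x:=v\}}$, which agrees with $\wn(t\{x:=u\})$ via the substitution lemma.

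The main obstacle is the gap between "replace $\oc x$ textually" and "the let-binding fires once $\wn u$ has been reduced to a box". The cleanest way to close this gap is to use the conclusion of the preceding two parts together with normalization (Proposition~\ref{prop:norm-sr}) and the simplification of normal forms (Proposition~\ref{prop:normalization-simplifies}): $\wn t$ is strongly normalizing, its unique normal form $n$ is almost purely affine of type $\oc o$, and a closed almost purely affine normal form of type $\oc o$ built from $\Sigma$-constants must be of the form $\oc \widetilde{\tau'}$ for some $\tau'\in\Tree(\Sigma)$, by Proposition~\ref{prop:encoding}. It then remains to argue $\tau'=\tau$. I would prove this by strong induction on the length of $t \to_\beta^* \widetilde\tau$: the base case $t = \widetilde\tau$ is a direct reduction $\wn\widetilde\tau \to_\beta^* \oc\widetilde\tau$ (by induction on $\tau$, using $\wn c\,\oc\widetilde{\tau_1}\cdots\oc\widetilde{\tau_n} \to_\beta^* \oc(c\,\widetilde{\tau_1}\cdots\widetilde{\tau_n})$); the inductive step simulates a head $\beta$-step via the substitution lemma, delaying the firing of any let whose RHS is not yet a box and appealing to confluence of the target. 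Confluence plus uniqueness of the normal form of $\wn t$ then force $\tau'=\tau$, completing the proof.
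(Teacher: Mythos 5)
The paper states this claim without proof, so there is nothing to compare your argument against directly; it has to stand on its own. Your treatment of the typing and of almost pure affineness is fine and essentially forced, and your overall plan for the reduction property --- pin down the normal form of $\wn t$ via Propositions~\ref{prop:norm-sr} and~\ref{prop:normalization-simplifies}, then identify it with $\oc\widetilde\tau$ by a simulation-plus-confluence argument --- is a reasonable one.

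However, the inductive step of your reduction argument has a genuine gap, and it is exactly the subtlety the paper flags in its introduction: almost affine terms are \emph{not} closed under $\beta$-reduction (substituting an open $u : o$ for a duplicated $x : o$ can duplicate a higher-type free variable of $u$). So in your strong induction on the length of $t \to_\beta^* \widetilde\tau$, the term $t_1$ obtained after one step need not be almost affine, the claim's hypothesis fails for it, and the induction hypothesis as you state it cannot be invoked. Worse, $\wn t_1$ may then fail to be affine, hence fail to be well-typed, so none of Propositions~\ref{prop:norm-sr}, \ref{prop:normalization-simplifies} or~\ref{prop:encoding} applies to it. The argument can be repaired --- e.g.\ by generalizing the induction statement to arbitrary simply-typed $t : o$, observing that $\wn$ and the substitution lemma are purely syntactic, and working with \emph{untyped} confluence of the target calculus to establish joinability of $\wn t$ with $\oc\widetilde\tau$ (reserving the typed propositions for the original $\wn t$ only); alternatively one can avoid source reduction altogether by defining an erasure from target terms back to source terms that maps target steps to source convertibility. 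But as written, the induction does not go through. A second, smaller gap: Proposition~\ref{prop:encoding} concerns closed normal forms of type $o$, not $\oc o$; you need a short supplementary argument that a closed almost purely affine normal form of type $\oc o$ must be a box $\oc s$ with $s$ a normal form of type $o$ (ruling out residual let-bindings, which would otherwise create a redex).
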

The inductive rule for application implies that $\wn(u\;\widehat\tau((t_a)_{a\in\Gamma})) = (\wn u)\; \widehat\tau((\wn t_a)_{a\in\Gamma})$, so the above claim allows us to translate $\lambda$-transducers using almost affine terms \textit{à la} Kanazawa to $\lambda$-transducers using our notion of almost purely affine terms. Finally, to bridge the gap with~\cite[Theorem~3]{LambdaTransducer}, we observe that the trick of \Cref{app:gls} to turn GLS-transducers into $\lambda$-transducers still works for almost affine terms in the sense of~\cite{AlmostAffine,KanazawaJournal}.

\paragraph{MSOT-S$^2$ $\subseteq$ Almost !-Depth 1 $\lambda$-Transducer $\circ$ Relabeling.}

Having just finished proving Theorem~\ref{thm:msots} above, we may use it right away:
\begin{align*}
  \text{MSOT-S$^2$} &\equiv \text{almost purely affine $\lambda$-transducer} \circ \text{MSO relabeling}\circ \text{MSOT-S}\ (\text{Thm.~\ref{thm:msots}}) \\
              &\equiv \text{almost purely affine $\lambda$-transducer} \circ \text{MSOT-S}\qquad\qquad\qquad (*)\\
                 &\equiv (\text{almost purely affine $\lambda$-transducer})^2 \circ \text{MSO relabeling} \qquad (\text{Thm.~\ref{thm:msots}})
\end{align*}
The line $(*)$ above relies on the fact that MSOT-S are closed under \emph{post}composition by MSO relabelings, cf.~\cite[\S3.3]{titosurvey}.
To conclude, we apply the composition property of $\lambda$-transducers (Proposition~\ref{prop:compo}), noting that if $A$ and $B$ are almost purely affine types, then $A\{o:=B\}$ is almost !-depth 1 (indeed, every `!' in it is applied to either $o$ or $B$).

\section{Conclusion}

In this paper, we established several expressivity results relating a typed $\lambda$-calculus to tree transducers. This can be seen as furthering \titocecilia's \enquote{implicit automata} research programme~\cite{iatlc1}, even though the formal setting is slightly different; indeed, we settle one of their conjectures in Corollary~\ref{cor:inexpress}. From a purely automata-theoretic perspective, our characterization of MSOT-S$^2$ is the first that involves a \enquote{one-way} device, performing a single bottom-up pass on its input (modulo preprocessing).

The equivalences between \enquote{one-way} $\lambda$-transducers and
tree-walking / invisible pebble tree transducers can be seen as a trade-off
between a sophisticated memory (higher-order data) and freedom of movement on
the input (tree-walking reading head). This is arguably a sort of qualitative
space/time trade-off (more movement means more computation steps). This is
similar to the reasons that led the Geometry of Interaction to be used in
implicit computational complexity when dealing with sublinear space complexity classes ---
an application area pioneered by Schöpp~\cite{DBLP:conf/csl/Schopp06,bllspace}
and leading to several further
works~\cite{dal_lago_computation_2016,Parsimonious}. These successes even led to
the belief that the GoI should give a reasonable space cost model, that is to
say comparable with the one of Turing machines; but this belief is now known to
be wrong in the general case of the untyped $\lambda$-calculus~\cite{goispace}.

\paragraph{More Related Work.}

Katsumata~\cite{Katsumata08} has connected a categorical
version of the GoI (the \enquote{Int-construction}) to attribute grammars~\cite{AttributeGrammars}, which are \enquote{essentially [a] notational variation} on tree-walking transducers (quoting Courcelle \& Engelfriet~\cite[\S8.7]{courcellebook}). Recently, Pradic and Price~\cite{entics:14804} have used a \enquote{planar} version of this categorical GoI in order to prove an \enquote{implicit automata} theorem. Further GoI-automata connections of this kind are discussed in~\cite[\S1.1]{nguyen2023twoway}.

Our methodology of connecting $\lambda$-calculus and automata via abstract machines may be compared to Salvati and Walukiewicz's~\cite{SalvatiWalukiewiczCPDA} use of Krivine machines in the theory of higher-order recursion schemes. Clairambault and Murawski~\cite{MAHORS} also compile affine recursion schemes to automata using a game semantics that can be seen as a denotational counterpart of the operational Interaction Abstract Machine. Ghica
exploited ideas from the GoI and game semantics, to design a compiler from a higher-order functional language directly to
digital circuits~\cite{ghica_geometry_2007}, in particular targeting Mealy machines.

\begin{rem}\label{rem:additives}
  The aforementioned work~\cite{MAHORS} is a rare example of application of some GoI variant to a setting that features the \emph{additive connectives} of linear logic.
  The additives are a source of non-negligible complications in the GoI, motivating our choice (discussed in the introduction) to follow the approach of Gallot et al.~\cite{LambdaTransducer,gallotPhD} rather than \titocecilia's \enquote{implicit automata}~\cite{iatlc1,titoPhD}. Indeed, the latter's solution to overcome the limitations evidenced by Corollary~\ref{cor:inexpress} is to work with a linear
  $\lambda$-calculus with additive connectives, enabling more flexible linear usage patterns. This is analogous (see~\cite[Remark~6.0.1]{titoPhD} for an actual technical
  connection) to moving from \enquote{strongly single use} to \enquote{single use} macro tree
  transducers~\cite[Section~5]{MacroMSO}.
  That said, later work by the first author with Dartois and Peyrat~\cite{DTP26} demonstrates that the ideas from~\cite{MAHORS} may be meaningfully applied to tree transducers.
\end{rem}

Salvati has shown~\cite{SalvatiTWA} that the string languages defined by
abstract categorial grammars, which are very close to our purely linear
$\lambda$-transducers, coincide with the output languages of tree-to-string
tree-walking transducers. He explains in his habilitation
thesis~\cite[\S3.2]{SalvatiHDR} that the proof ideas are similar to a game
semantics of multiplicative linear logic --- and the latter is closely related
to GoI, as mentioned above. It would be interesting to understand to which
extent his approach implicitly resembles ours, despite a very different
presentation.

We also note that in the same paper that introduces almost linear
$\lambda$-terms~\cite{KanazawaJournal}, Kanazawa studies a notion of
\enquote{links in typed $\lambda$-terms} that looks like a form of GoI. However,
these links are only well-behaved for $\lambda$-terms in normal form, while the
Interaction Abstract Machine does not have this drawback.
Finally, let us stress that our use of the IAM has the advantage, compared to
the aforementioned works~\cite{Katsumata08,MAHORS,SalvatiTWA,KanazawaMSO}, of
adapting to the presence of the exponential modality `$!$'. This is crucial in
our proof of Theorem~\ref{thm:main-depth-1}.

\paragraph{Perspectives.}

The obvious direction for further work is to study the $\text{MSOT-S}^{k+1}$ and almost !-depth $k$ hierarchies for $k \geq 2$. While the argument at the end of \Cref{sec:expr} easily generalizes to show that the former is included in the latter, we have no reason to believe that they coincide.
As a more modest conjecture (\enquote{!-depth 1} means \enquote{no nested `!'s}):
\begin{conj}
  !-depth 1 $\lambda$-transducer $\circ$ MSO relabeling $\equiv \text{MSOT} \circ \text{MSOT-S}$.
\end{conj}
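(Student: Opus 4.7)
The plan is to split into two inclusions, the $\supseteq$ direction being a short chain of reductions while the $\subseteq$ direction is where the real work lies.

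For $\supseteq$, I would mimic the argument at the end of \Cref{sec:expr}. Unfolding $\text{MSOT} \circ \text{MSOT-S}$ via \Cref{thm:msots} and absorbing the intermediate MSO relabeling into the MSOT-S (by the latter's closure under post-composition by MSO relabelings) yields $(\text{purely affine } \lambda\text{-trans}) \circ (\text{almost purely affine } \lambda\text{-trans}) \circ (\text{MSO relabeling})$. Then \Cref{prop:compo} fuses the two $\lambda$-transducers into one of memory type $A\{o := B\}$, with $A$ almost purely affine (so its `!'s sit only over $o$) and $B$ purely affine; after substitution, every `!' in $A\{o := B\}$ sits over the purely affine $B$, so the composed memory type is of !-depth~1.

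For $\subseteq$, the natural strategy is to refine the IAM of \Cref{sec:depth-1}. Since !-depth~1 forbids nested `!'s entirely, the recursive grammar of logs collapses to bounded depth: each logged position points into a purely affine context, which itself cannot contain a `!', so no second layer of logs is needed. Log entries can therefore be identified with pointers into a finite set determined by the transducer's source code, and the execution of the IAM morally decomposes into an \emph{outer} behaviour tracking how `!'-boxes get duplicated and an \emph{inner} behaviour normalizing within a single purely affine context. I would then try to realize this decomposition as an MSOT-S (materializing the outer sharing pattern) followed by an MSOT (performing the linear inner pass).

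The principal obstacle is accommodating !-depth~1 memory types that package \emph{several} distinct purely affine sub-types under `!'s, such as $\oc A_1 \multimap \oc A_2$ with $A_1 \neq A_2$: unlike in the $\supseteq$ direction, \Cref{prop:compo} cannot produce such a type from $A\{o := B\}$ because $B$ must be a single type. I expect to circumvent this either by using the MSO preprocessing to tag each input node with which ``inner world'' its contribution belongs to --- so that the distinct purely affine pieces can be packaged into a single purely affine $B$ via a disjoint-union-style encoding --- or by arguing directly on the IAM/IPTT side that the MSOT-S first pass can resolve which `!'-box is being opened, leaving the MSOT second pass with a purely affine task. The main difficulty will then be to certify that this matches \emph{exactly} $\text{MSOT} \circ \text{MSOT-S}$ rather than some strictly smaller or larger class inside $\text{MSOT-S}^2$.
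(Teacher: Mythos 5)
This statement is presented in the paper as an open \emph{conjecture} in the concluding perspectives (right after the authors say they have no reason to believe the general hierarchies coincide); the paper contains no proof of it, so your attempt has to stand entirely on its own merits.

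Your $\supseteq$ direction is essentially correct and is a routine adaptation of the paper's own derivation of Theorem~\ref{thm:main-depth-1} at the end of Section~\ref{sec:expr}: unfold both factors via Theorem~\ref{thm:msots}, absorb the intermediate relabeling using closure of MSOT-S under postcomposition by MSO relabelings, and fuse the two $\lambda$-transducers with Proposition~\ref{prop:compo}. The memory type $A\{o:=B\}$ with $A$ almost purely affine and $B$ purely affine indeed has no nested `$\oc$'s, and the boxes of the composed terms contain only purely affine terms, so the composite is a !-depth~1 transducer. This half is fine.

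The $\subseteq$ direction, however, is a research plan rather than a proof, and you signal this yourself (``I would then try'', ``I expect to circumvent this''). The genuine gap is exactly the step you defer: you must exhibit an actual factorization of the !-depth~1 IAM (or the corresponding invisible-pebble transducer) into an MSOT-S first pass followed by a second pass that is a plain MSOT and not another MSOT-S. Lemma~\ref{lem:lambda-iptt} only yields membership in $\text{MSOT-S}^2$, and nothing in the paper's machinery distinguishes, on the IPTT side, a second pass that needs sharing from one that does not. Your intermediate object --- the ``materialized outer sharing pattern'' --- is never defined: which tree does the MSOT-S output, and why is the residual inner computation single-use or of linear growth so that it falls into MSOT? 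The complication you raise about several distinct purely affine types under `$\oc$' (e.g.\ $\oc A_1 \multimap \oc A_2$) is real, and both of your proposed workarounds are unsubstantiated; note also that a disjoint-union-style packaging is exactly where the paper's lack of additive connectives bites (compare the $\mathtt{dummy}$-padding trick of Appendix~\ref{app:gls}, which works for types but does not obviously transport an IAM run). As it stands you have only re-derived $\text{MSOT}\circ\text{MSOT-S} \subseteq \text{!-depth 1 }\lambda\text{-transducer}\circ\text{MSO relabeling} \subseteq \text{MSOT-S}^2$, i.e.\ the easy half of an open problem.
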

We believe that the \emph{reversible} tree-walking transducers that we have introduced also deserve to be studied further. Indeed, we expect that they should be closed under composition (cf.~\cite{ReversibleTransducers} over strings) and verify the \enquote{single-use restriction} of~\cite[\S8.2]{courcellebook}; the latter would imply that they can be translated into MSO transductions.

\section*{Acknowledgment}
  We thank Damiano Mazza and Cécilia Pradic for discussions on the possible applications of the Geometry of Interaction to implicit complexity and automata theory.
  
  \emph{Funding:} the first author was supported by the DyVerSe project (ANR-19-CE48-0010). The second author was supported by the ANR PPS Project (ANR-19-CE48-0014) and the European Union's Horizon 2020 research and innovation programme under the Marie Skłodowska-Curie grant agreement No.~101034255.

\bibliographystyle{alphaurl}
\bibliography{bibliography}

\appendix

\section{Proof of Theorem~\ref{thm:paiam-soundness}}\label{sec:paiam-soundness}
We prove in this section the soundness of the PAIAM. The idea is simple. We show that (i) what is computed by the PAIAM is invariant under $\beta$ reduction, and (ii) that the computation on normal forms is correct.

We use the notion of improvement from~\cite{AccattoliLV20}.
\begin{defi}[Improvements]\label{def:impr}
	Given two abstract machines with sets of states $Q$ and $S$, a relation $\mathcal{R}\subseteq Q\times S$ is
	an \emph{improvement} if
	given $(q,s)\in\mathcal{R}$ the following conditions hold:
	\begin{enumerate}
		\item \emph{Final state right}: if $s\in S$ is a final state, then $q\rightarrow^n q'$, for some final state
		$q'\in Q$ and $n\geq 0$.
		\item \emph{Transition left}: if $q\in Q$ and $q\rightarrow q'$, then there exists $q''\in Q$ and $s'\in S$ such
		that $q'\rightarrow^m q''$, $s\rightarrow^n s'$,
		$q''\mathcal{R}s'$ and $n\leq m+1$.
		\item \emph{Transition right}: if $s\in S$ and $s\rightarrow s'$, then there exists $q'\in Q$ and $s''\in S$ such that
		$q\rightarrow^m q'$, $s'\rightarrow^n s''$, $q'\mathcal{R}s''$ and
		$m\geq n+1$.
	\end{enumerate}
\end{defi}
What improves along an improvement is the number of transitions required to reach a final state, if any. Diagrammatically, we have the following situation:
\begin{center}
	\begin{tabular}{cc|ccc}
		\begin{tikzpicture}[node distance=30mm, auto, transform
			shape,scale=1]
			\node (p) at (0,0) {$s$};
			\node (q) at (3,0) {$q$};
			\node (w) at (0,-.8) {$s'$};
			\node (t) at (0,-1.6) {$s''$};
			\node (r) at (3,-1.6) {$q'$};
			\node at (1.5,0) {$\mathcal{R}$};
			\node at (1.5,-1.6) {$\mathcal{R}$};
			\draw (p) edge[->] node {} (w);
			\draw (w) edge[->,dashed] node {$^m$} (t);
			\draw (q) edge[->,left, dashed] node {$^{n\leq m+1}$} (r);
		\end{tikzpicture}
		&&&
		\begin{tikzpicture}[node distance=30mm, auto, transform
			shape,scale=1]
			\node (p) at (0,0) {$s$};
			\node (q) at (3,0) {$q$};
			\node (w) at (0,-1.6) {$s'$};
			\node (t) at (3,-.8) {$q'$};
			\node (r) at (3,-1.6) {$q''$};
			\node at (1.5,0) {$\mathcal{R}$};
			\node at (1.5,-1.6) {$\mathcal{R}$};
			\draw (p) edge[->,dashed] node {$^{m\geq n+1}$} (w);
			\draw (q) edge[->] node {} (t);
			\draw (t) edge[->,left,dashed] node {$^n$} (r);
		\end{tikzpicture}
	\end{tabular}
\end{center}
\begin{propC}[\cite{AccattoliLV20}]\label{prop:imp}
	Let $\mathcal{R}\subseteq Q\times S$ be an improvement, and $q\mathcal{R}s$. Then:
	\begin{enumerate}
		\item \label{prop:termination}
		\emph{(Non) Termination:} $q\to^*q'$ with $q'$ final if and only if $s\to^*s'$ with $s'$ final.

		\item \label{lemma:improv}
		\emph{Improvement:} $|q|\geq|s|$, where $|a|$ is the length of the run from $a$ to a final state, if any, and $+\infty$ otherwise.
	\end{enumerate}
\end{propC}
In order to formally define improvements, we need a formal inductive definition of $\Tree(\Sigma,\mathcal{K})$.
\begin{defi} $\Tree(\Sigma,\mathcal{K})$ is the smallest set inductively defined by the following rules:
	\[
	\infer{\kappa\in\Tree(\Sigma,\mathcal{K})}{\kappa\in\mathcal{K}} \qquad\qquad
	\infer{c(\tau_1, \ldots, \tau_n)\in\Tree(\Sigma,\mathcal{K})}{\tau_1, \ldots, \tau_n\in\Tree(\Sigma,\mathcal{K})\qquad n\geq 0\qquad c\in\Sigma}
	\]
\end{defi}
\newcommand{\relim}{\vartriangleright}
Then, we are able to define the relation $\relim^*\,\subseteq\Tree(\Sigma,\mathcal{I})\times\Tree(\Sigma,\mathcal{I})$, where $\mathcal{I}$ is the set of PAIAM configurations, as the reflexive and transitive closure of $\relim$, defined below. The interesting cases are the ones for PAIAM configurations (we overload the symbol $\relim$):
\begin{center}$\small\begin{array}{c@{\hspace{.8cm}}c@{\hspace{.8cm}}c}
		\multicolumn{3}{c}{\text{Reduction on Contexts}}\\[.17cm]
		\infer{\tm\ctx \tob \tmtwo \ctx}{\tm \tob \tmtwo}
		&
		\infer{\ctx\tm \tob \ctx\tmtwo}{\tm \tob \tmtwo}
		&
		\infer{\ctxtwop{\ctx  } \tob \ctxtwop{\ctx'}}{\ctx \tob \ctx'}\\[.17cm]
		\multicolumn{3}{c}{\text{Relation of Positions}}\\[.17cm]
		\infer[\textsf{rdx}]{(\tm,\ctx)\relim(\tmtwo,\ctx)}{\tm\tob\tmtwo}
		&
		\infer[\textsf{ctx}]{(\tm,\ctx)\relim(\tm,\ctxtwo)}{\ctx\tob\ctxtwo} &
		\infer[\textsf{sub}]{(\tm,\ctxp{(\la{\var}\ctxtwo )\tmtwo}) \relim (\tm\isub\var\tmtwo,\ctxp{\ctxtwo\isub\var\tmtwo}) }{}
		\\[.17cm]
		\multicolumn{3}{c}{\infer[\textsf{sub2}]{(u,\ctxp{(\la\var \ctxthreep{\var})\ctxtwo})\relim (u,\ctxp{\ctxthreep{\ctxtwo}})}{}}\\[.17cm]
		\multicolumn{3}{c}{\text{Relation of Tapes}}\\[.17cm]
		\infer[\textsf{tok1}]{\epsilon\relim \epsilon}{}
		&
		\infer[\textsf{tok2}]{\resm\cdot\tape\relim
			\resm\cdot\tape'}{\tape\relim\tape'}
		&
		\infer[\textsf{tok3}]{\resmtwo\cdot\tape\relim
			\resmtwo\cdot\tape'}{\tape\relim\tape'}
	\end{array}$\end{center}
\begin{center}$\small\begin{array}{c@{\hspace{.8cm}}c}
		\multicolumn{2}{c}{\text{Relation of Trees}}\\[.17cm]
		\multicolumn{2}{c}{
			\infer[\textsf{configuration}]{(\tm,\ctx,\tape,\pol)\relim
				(\tmtwo,\ctxtwo,\tape',\pol')}{(\tm,\ctx)\relim
				(\tmtwo,\ctxtwo)\qquad\tape\relim\tape'\qquad
				\pol=\pol'}
		}\\[.17cm]
		\infer[\textsf{node}]{c(\tau_1, \ldots, \tau_n)\relim c(\sigma_1, \ldots, \sigma_n)}{\tau_i\relim\sigma_i\qquad n\geq 0}
	\end{array}$\end{center}
\begin{prop}$\relim^*$ is an improvement.
\end{prop}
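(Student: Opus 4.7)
The plan is to first prove that the single-step relation $\relim$ is itself an improvement, and then observe that $\relim^*$ is an improvement because improvements compose under relational composition (and include the identity, so the reflexive-transitive closure inherits the property). This is the standard strategy used in analogous soundness proofs for the IAM such as \cite{AccattoliLV20}.

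The proof that $\relim$ is an improvement proceeds by induction on the derivation of $\tau \relim \sigma$. The structural rules \textsf{node}, \textsf{configuration}, and the tape rules \textsf{tok1/2/3} reduce the problem to analyzing a single PAIAM configuration: the \textsf{node} rule passes through because subtree leaves of a tree-generating machine evolve independently, and the tape / context congruence rules reduce to the interesting ``head'' situation by straightforward structural induction. The two substantial cases are then (i) the congruence cases \textsf{rdx} and \textsf{ctx}, where the $\beta$-redex being contracted lies \emph{outside} the current token position, and (ii) the substitution cases \textsf{sub} and \textsf{sub2}, where the redex is contracted ``right at'' the token.

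For \textsf{rdx}/\textsf{ctx}, the key observation is that a single PAIAM transition depends only on the local syntactic structure immediately around the current position $(\tm, \ctx)$ and on the top of the tape. Since the contracted redex is syntactically disjoint from this local neighborhood, both sides perform the same PAIAM transition, reaching configurations still related by the same \textsf{rdx}/\textsf{ctx} rule; both the ``transition left'' and ``transition right'' diagrams close with one step on each side (so $n=m=1$, comfortably respecting $n \leq m+1$ and $m \geq n+1$). The ``at a distance'' shape of $\beta$-reduction contributes a few subcases involving let-binder contexts $L$, but these do not change the essential argument.

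The hard cases are \textsf{sub}/\textsf{sub2}, which are exactly where the strict inequality $m \geq n+1$ of ``transition right'' is earned: the reduced side has already performed the substitution, so a single PAIAM step directly accesses the new occurrence of the argument, while the un-reduced side must still navigate the $\beta$-redex. Concretely, when the token on the un-reduced side visits the bound variable $\var$ inside $\la\var\ctxthree$, several PAIAM transitions are needed to exit the abstraction, consume the appropriate $\resm$/$\resmtwo$ markers placed on the tape when entering the application $(\la\var\ctxthree)\tmtwo$, and descend into $\tmtwo$, all of which corresponds to a single ``arrival'' on the reduced side. The main obstacle is the bookkeeping to certify that each such detour saves at least one step and that tapes are correctly consumed throughout; for this I would rely on the typing invariant of Proposition~\ref{prop:typing-invariant} and its bound Corollary~\ref{cor:bound-height} to ensure the tape behaves as expected, and use an auxiliary measure counting the number of remaining $\beta$-redexes traversed by the token. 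Finally, ``final state right'' is an immediate consequence of the transition-right analysis combined with Proposition~\ref{prop:paiam-on-normal-form} and the uniqueness of normal forms of type $o$ from Proposition~\ref{prop:encoding}.
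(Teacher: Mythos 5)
Your proposal is correct and follows essentially the same route as the paper: the paper's own proof simply states that the argument is a routine check that all the diagrams of Definition~\ref{def:impr} close, deferring the case analysis to~\cite{AccattoliLV20,Parsimonious}, and your sketch is a reasonable expansion of exactly that diagram-chasing, correctly locating the source of the strict inequality $m \geq n+1$ in the \textsf{sub}/\textsf{sub2} cases where the un-reduced side must still traverse the contracted redex.
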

\begin{proof}
	The proof is routine and amounts to check that all the diagrams close correctly. See \cite{AccattoliLV20,Parsimonious} for more details.
\end{proof}
\paragraph{Statement of Theorem~\ref{thm:paiam-soundness}, Soundness of the PAIAM}

For any purely affine term $v : o$, the output of $\paiam(v)$ is the unique
$\tau\in\Tree(\Sigma)$ such that $\widetilde\tau$ is the normal form of $v$.

\begin{proof}
	Clearly, $(\red{\underline{v}},\epsilon)\relim^*(\red{\underline{\widehat{\tau}}},\epsilon)$, because $v\tob^*\widehat{\tau}$. Then, by the properties of improvements, also the final states of the runs starting from $(\red{\underline{v}},\epsilon)$ and $(\red{\underline{\widehat{\tau}}},\epsilon)$, respectively are related by $\relim^*$. Since by Prop.~\ref{prop:paiam-on-normal-form}, $(\red{\underline{\widehat{\tau}}},\epsilon)\rightsquigarrow^*\tau$, then the also $(\red{\underline{v}},\epsilon)\rightsquigarrow^*\tau$. In fact, only the rule \textsf{node} for $\relim$ can be used to relate tree constructors.
\end{proof}
Proposition~\ref{prop:apaiam-soundness} is proved analogously, adding the following clauses to the ones already defined above to build the improvement $\relim$:
\begin{center}$\small\begin{array}{c@{\hspace{.8cm}}c@{\hspace{.8cm}}}
		\multicolumn{2}{c}{\text{Reduction on Contexts}}\\[.17cm]
		\infer{\ttletin{x = C}{t} \tob \ttletin{x = C}{u}}{\tm \tob \tmtwo}
		&
		\infer{\ttletin{x = t}{C} \tob \ttletin{x = u}{C}}{\tm \tob \tmtwo}\\[.17cm]
		\multicolumn{2}{c}{\text{Relation of Positions}}\\[.17cm]
		\multicolumn{2}{c}{\infer[\textsf{sub3}]{(\tm,\ctxp{\ttletin{x=L\ctxholep{!u}}{\ctxtwo}}) \relim (\tm\isub\var\tmtwo,\ctxp{L\ctxholep{\ctxtwo\isub\var\tmtwo}})}{}}
		\\[.17cm]
		\multicolumn{2}{c}{\infer[\textsf{sub4}]{(u,\ctxp{\ttletin{x= L\ctxholep{!D}}{\ctxthreep{\var}}})\relim (u,\ctxp{L\ctxholep{\ctxthreep{\ctxtwo}\isub{\var}{\ctxtwop{u}}}})}{}}\\[.17cm]
	\end{array}$\end{center}
The proof of soundness of the almost depth-1 IAM follows the same pattern. In particular, the machine is a specialization to almost depth-1 terms of the machine in~\cite{AccattoliLV20}. The soundness thus follows as a corollary.

\section{Making GLS-transducers type-constant}
\label{app:gls}

The problem: a GLS-transducer involves a family
of types $A_q$ indexed by states. The return type of a subtree depends on the
top-down state propagated to its root.
With our $\lambda$-transducers that have a single memory type, we would
therefore like to use something like $\displaystyle\bigoplus_{q\in Q} A_q$ but we do not have
access to additive connectives. Instead, let
\begin{align*}
  A &= B_{q_1,1} \multimap \dots \multimap B_{q_1,k[q_1]} \multimap B_{q_2,1} \multimap \dots
  \multimap B_{q_{|Q|},k[q_{|Q|}]} \multimap o\\
  \text{where}\ A_q &= B_{q,1} \multimap \dots
\multimap B_{q,k[q]} \multimap o
\end{align*}
using an arbitrary enumeration $Q=\{q_1,\dots,q_{|Q|}\}$ of states.

We shall now define conversion functions $\iota_q : A_q \multimap A$ and
$\mathtt{cast}_q : A \multimap A_q$ such that for any closed $\lambda$-term $t :
A_q$, we have $\mathtt{cast}_q\; (\iota_q\; t) =_{\beta\eta} t$. Since the
output alphabet contains at least one symbol $\ell$ of rank $0$, which is given
to us as a constant of type $o$ in the $\lambda$-calculus, one can show that
every type $C$ is inhabited by some closed term $\mathtt{dummy}[C]$: for $C = D_1
\multimap \dots \multimap D_n \multimap o$, take $\mathtt{dummy}[C] = \lambda x_1
\dots x_n.\; \ell$. We then take
\begin{align*}
	\iota_{q_1} &= \lambda z.\; \lambda x_1 \dots x_N.\; z\; x_1\; \dots \; x_{k[q_1]}
	\quad\text{where}\ N = k[q_1] + \dots + k[q_{|Q|}]\\
	\mathtt{cast}_{q_1} &= \lambda y.\; \lambda x_1 \dots x_{k[q_1]}.\;
	y\; x_1\; \dots \; x_{k[q_1]} \; \mathtt{dummy}[B_{q_2,1}] \;\dots\; \mathtt{dummy}[B_{q_{|Q|},k[q_{|Q|}]}]
\end{align*}
Similar definitions work for each state $q_2,\dots,q_{|Q|}$.

Note that these terms $\iota_q$ and $\mathtt{cast}_q$ are affine but not linear!
\end{document}